\newcommand{\safeTitle}[2]{\texorpdfstring{#1}{#2}}
\begin{document}

\title{Partitioning $\mathbb{Z}_{sp}$ in finite fields and groups of trees and cycles}

\address{University of Ioannina,\\
Department of Informatics and Telecomunications,\\
Kostakioi Artas, 47100, Arta, Greece\\
}

\author{Nikolaos Verykios\\
Department of Informatics and Telecomunications, \\
University of Ioannina \\ Kostakioi Artas, 47100, Arta,\\ Greece\\
n.verykios{@}uoi.gr
\and Christos Gogos\\
Department of Informatics and Telecomunications, \\
University of Ioannina \\ Kostakioi Artas, 47100, Arta,\\ Greece\\
}

\maketitle

\runninghead{N. Verykios, C. Gogos}{Partitioning $\mathbb{Z}_{sp}$ in finite fields and groups of trees and cycles}

\begin{abstract}
This paper investigates the algebraic and graphical structure of the ring $\mathbb{Z}_{sp}$, with a focus on its decomposition into finite fields, kernels, and special subsets. We establish classical isomorphisms between $\mathbb{F}_s$ and $p\mathbb{F}_s$, as well as $p\mathbb{F}_s^{\star}$ and $p\mathbb{F}_s^{+1,\star}$. 

We introduce the notion of arcs and rooted trees to describe the pre-periodic structure of $\mathbb{Z}_{sp}$, and prove that trees rooted at elements not divisible by $s$ or $p$ can be generated from the tree of unity via multiplication by cyclic arcs. Furthermore, we define and analyze the set $\mathbb{D}_{sp}$, consisting of elements that are neither multiples of $s$ or $p$ nor “off-by-one” elements, and show that its graph decomposes into cycles and pre-periodic trees. 

Finally, we demonstrate that every cycle in $\mathbb{Z}_{sp}$ contains inner cycles that are derived predictably from the cycles of the finite fields $p\mathbb{F}_s$ and $s\mathbb{F}_p$, and we discuss the cryptographic relevance of $\mathbb{D}_{sp}$, highlighting its potential for analyzing cyclic attacks and factorization methods. \end{abstract}

\begin{keywords}
prime numbers, finite fields, arcs, inner cycles, cyclic attack, quadratic sieve, cryptography
\end{keywords}

\section{Introduction}\label{intro}
Finite fields with a prime number of elements have been extensively studied in the number theory literature~\cite{rogers1996graph}, \cite{kvrivzek2004sophie}, \cite{somer2006structure}. This paper extends relevant findings to the finite ring with $sp$ elements, where $s$ and $p$ are distinct primes. This ring has a crucial role in cryptography, serving as the algebraic framework for encryption and decryption computations. In this paper we consider primes $s$ and $p$ of the form $s=2^kq+1$ and $p=2^lr+1$, where $q$ and $r$ are odd numbers. The quadratic map $f(x)=x^2$ in the finite fields $p\mathbb{F}_s$ and $s\mathbb{F}_p$ and in the ring $\mathbb{Z}_{sp}$ produces directed graphs composed of cycles with trees attached to each cycle vertex as described by Rogers in~\cite{rogers1996graph}. For the specific case of finite fields having a prime number of elements, these trees are binary~\cite{somer2004connection}, \cite{rogers1996graph}. The set $p\mathbb{K}_s$ forms a binary tree rooted at unity, while the remaining elements constitute the set $p\mathbb{F}^{**}_s=p\mathbb{F}_s-p\mathbb{K}_s-\{0\}$. Thus, a finite field is partitioned into three distinct sets: the set $\{0\}$, the kernel $p\mathbb{K}_s$ of the mapping $f^k$ ($s=2^kq+1$) which is a multiplicative group, and the set of the remaining elements $p\mathbb{F}^{**}_s$.\\

The structure of the paper follows. In Section~\ref{sec:sec2} we prove that $/ps\mathbb{F}_s$ is isomorphic to $\mathbb{F}_s$ and we define the set $p\mathbb{F}^{+1}s=\{xp+1\mid x\in \mathbb{F}_s\}$ which is proved to also be isomorphic to $p\mathbb{F}_s$. In Section~\ref{sec:sec3} we prove that $s\mathbb{F}_p\times p\mathbb{F}_s$ is a ring isomorphic to $\mathbb{Z}_{sp}$ under function  $h$, by analyzing the cycles and the trees that appear in each subset of the Cartesian product $(\{0\}\cup s\mathbb{K}_p \cup s\mathbb{F}^{\ast\ast}_p)$~$ \times (\{0\}\cup p\mathbb{K}_s\cup p\mathbb{F}^{\ast\ast}_s)$. In Subsections~\ref{sec:sec3_1} and \ref{sec:sec3_2}, the tree of the unity of $h(\mathbb{K}_{sp})=s\mathbb{K}_p\times p\mathbb{K}_s$ and the cycles of $\mathbb{Z}_{sp}$ are explored, respectively. In Subsection \ref{sec:sec3_3} the concept of an arc in $\mathbb{Z}_{sp}$, which was introduced in our previous publication~\cite{verykios2024structures}, is used and further extended by defining the multiplication operation of an arc by a tree. These new concepts lay the ground for proving that all tree structures of $\mathbb{G}_{sp}$ are graphically identical to the tree of the quadratic function over the kernel of $\mathbb{Z}_{sp}$. Section~\ref{sec:sec3}, closes with Subsection~\ref{sec:sec3_4} which explores the structure of $s\mathbb{F}_p \times p\mathbb{K}_s$ and $s\mathbb{K}_p \times p\mathbb{F}_s$. Next, in Section~\ref{sec:sec4} the set $\mathbb{D}_{sp}=h^{-1}(s\mathbb{F}^{**}_p \times p\mathbb{F}^{**}_s)$ is defined and a proof is given about the fact that cycles included in this set are derived from the cycles of $s\mathbb{F}_p$ and $p\mathbb{F}_s$, that appear as inner cycles in their product. Section \ref{sec:sec5}  unifies the algebraic structures introduced earlier, defines the cryptographic set $\mathbb{D}{sp}$, and examines its cyclic and tree formations within $\mathbb{Z}{sp}$. The relation of these structures to RSA security and factorization methods is also discussed. Finally, the conclusion presents the key ideas of the paper and states the possible existence of novel factorization methods based on these cycles and trees.

\subsection{Related work}
Since the advent of RSA cryptography, research into finite fields has increased. Roger in his seminal work in~\cite{rogers1996graph} studied the finite fields of primes and their structure. A decade later, Somer and K{\v{r}}{\'\i}{\v{z}}ek in~\cite{somer2004connection} examined the fixed points of the quadratic function, alongside with the number of components and the length of cycles in finite fields.
Moreover, in finite fields with $p^n$ elements, where $p$ is a prime number and $n$ is a positive integer, Mullen and Vaughan \cite{mullen1988cycles} studied the cycle structures of polynomials of degree $n$ with coefficients from $\mathbb{F}_p$. Their work revealed several properties (i.e., dynamics), such as the number of cycles, the average preperiod length, and the cycle decomposition of the graph $G_f(\mathbb{F}_{p^n})$. They have also studied the fixed points of $\mathbb{F}_{p^n}$ which are relevant to cryptography. In recent years, Panario and Reis~\cite{panario2019functional} extended~\cite{mullen1988cycles} and fully described the functional graph of polynomials that are irreducible factors of $f(x)= x^n-1$. In their work they also described the graph $G_f(\mathbb{F}_{p^n})$, in detail.

The Chebyshev~\cite{mason2002chebyshev} and Dickson~\cite{lidl1993dickson} polynomials are the subject of much research in recent years. Cyclic structures of finite fields $\mathbb{F}_{p^2}$ have been studied by Lidl and Mullen in~\cite{as1991cycle} and many properties have been described in their work. In~\cite{qureshi2019graph}, Qureshi and Panario fully described the functional graph associated with iterations of Chebyshev polynomials over finite fields. They also used structural findings to obtain estimates for the average rho length, the average number of connected components, and the expected value for the period and preperiod of iterated Chebyshev polynomials. Finally, iterations of Rédei functions over non-binary finite fields have been studied in \cite{qureshi2017cycle} and \cite{panario2023construction}.

The main properties of cycles and trees in $\mathbb{D}_{sp}$ have been widely used to attack cryptographic algorithms and especially RSA. One of the earliest attack methods is the cyclic attack, which was firstly introduced by Simons and Norris in \cite{simmons1977preliminary} and later was generalized by Williams and Smitch in \cite{williams1979some} and Berkovits in \cite{berkovits1982factoring}. These algorithms are based on the early identification of the cycles of $\mathbb{Z}_{sp}$, and are considered to be less efficient~\cite{rivest1978remarks}. Another family of algorithms use various techniques related to the quadratic sieve \cite{pomerance1984quadratic}. These algorithms are essentially based on properties of quad trees, which is a topic that this paper and our previous one~\cite{verykios2024structures}, further explores. In quadratic sieve methods, all numbers $x$, $y$ that can produce a factorization of $N = sp$ are in the first level of the rooted trees on the cyclic elements of $\mathbb{D}_{sp}$. For these numbers it holds that  $y^2-x^2 = kN$, where $k\geq 1$. 

\section{\safeTitle{The fields $s\mathbb{F}_p$, $p\mathbb{F}_s$ and the groups $s\mathbb{F}_p^{\pm 1}$, $p\mathbb{F}_s^{\pm 1}$}{The fields sFp, pFs and the groups Fp+-1, Fs+-1}}
\label{sec:sec2}
Let $s=2^kq+1$ and $p=2^lr+1$ be prime and $q$ and $r$ be odd. It is known from Euclid and Bézout's identity that there are two integers $\alpha$ and $\beta$ such that $1 = \alpha s + \beta p$ and $\alpha \beta < 0$. Let us suppose that $\alpha$, the coefficient of $s$, is the negative integer. Then, the previous equality, can be written as $1 = -\alpha s + \beta p$, considering that both coefficients $\alpha$ and $\beta$ are positive numbers, and this is how they will be used for the rest of the paper. The values of $\alpha $ and $\beta $ can be determined using the extended Euclidean algorithm, and it holds that $-\alpha=s^{-1} \pmod p\in \mathbb{F}_p$ and $\beta =p^{-1}\pmod s\in \mathbb{F}_s$, where $\mathbb{F}_p$ and $\mathbb{F}_s$ are finite fields, each with a prime number of elements. In our previous work \cite{verykios2024structures} we have showed that $(\beta p)^2\equiv\beta p\pmod {sp}$ and $(-\alpha s)^2\equiv-\alpha s\pmod {sp}$ are the two fixed points of $\mathbb{Z}_{sp}$. Furthermore, let us consider the following subsets of $\mathbb{Z}_{sp}$:
\begin{itemize}
\item
$s\mathbb{F}_p=s.\mathbb{F}_p=\{xs\mid x\in [0..p-1]\}$ 
\item
$p\mathbb{F}_s=p.\mathbb{F}_s=\{yp\mid y\in [0..s-1]\}$ 
\item
$s\mathbb{F}_p^{+1}=\{xs+1\mid x\in [0..p-1]\}$
\item
$s\mathbb{F}_p^{-1}=\{xs-1\mid x\in [1..p]\}$
\item
$s\mathbb{F}_p^{\pm 1}=s\mathbb{F}_p^{+1}\cup s\mathbb{F}_p^{-1}=\{xs\pm 1\mid x\in [1..p]\}$
\item
$p\mathbb{F}_s^{+1}=\{yp+1\mid y\in [0..s-1]\}$ 
\item
$p\mathbb{F}_s^{-1}=\{yp-1\mid y\in [1..s]\}$
\item
$p\mathbb{F}_s^{\pm 1}=p\mathbb{F}_s^{+1}\cup p\mathbb{F}_s^{-1}=\{yp\pm 1\mid y\in [1..s]\}$
\item
$\mathbb{G}_{sp}=\mathbb{Z}^{\times}_{sp}=\mathbb{Z}_{sp}-p\mathbb{F}_s-s\mathbb{F}_p$ 
\end{itemize}

The next two theorems prove that $p\mathbb{F}_s$ and $s\mathbb{F}_p$ are isomorphic fields of $\mathbb{F}_s$ and $\mathbb{F}_p$ respectively. Note that both theorems have been proved in our previous work \cite{verykios2024structures} but here we present an alternate proof that uses the method of calculating the inverse.


Define the map $g : \mathbb{F}_s \to p\mathbb{F}_s \subseteq \mathbb{Z}_{sp}, \quad g(x) = \beta p x \pmod {sp}$ where $ \beta \in \mathbb{Z}$ satisfies $\beta= p^{-1}\mod s$.
We will show that $g$ is a field isomorphism.
The mapping $g(x) = \beta p x \mod sp$, where $\beta p \equiv 1 \mod s$, induces a permutation of the set $p\mathbb{F}_s$, since the action of $\beta$ is invertible over $\mathbb{F}_s$. This observation relies on the well-known fact that multiplication by an invertible element modulo s acts as a permutation on $\mathbb{F}_s$.

\begin{theorem}
\label{TheoremFour:pFs_field}
Let $p$ and $s$ be distinct prime numbers. Consider the ring $\mathbb{Z}_{sp}$ and define the subset
\[
p\mathbb{F}_s = \{0, p, 2p, \ldots, (s-1)p\} \subseteq \mathbb{Z}_{sp}.
\]
Then $p\mathbb{F}_s$ admits a field structure isomorphic to the finite field $\mathbb{F}_s$, via the map
\[
g : \mathbb{F}_s \to p\mathbb{F}_s, \quad g(x) = \beta p x \mod sp,
\]
where $\beta \in \mathbb{Z}$ satisfies $\beta p \equiv 1 \mod s$.
\end{theorem}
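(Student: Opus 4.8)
The plan is to exhibit $g$ as a ring isomorphism onto its image, transport the field axioms from $\mathbb{F}_s$, and then note that the target set is exactly $p\mathbb{F}_s$. First I would verify that $g$ is a bijection from $\mathbb{F}_s$ onto $p\mathbb{F}_s$: the map $x \mapsto \beta p x \bmod sp$ sends the $s$ residues $\{0,1,\dots,s-1\}$ into multiples of $p$ in $\mathbb{Z}_{sp}$, and since $\beta$ is invertible modulo $s$, the composite $x \mapsto \beta x \bmod s$ permutes $\mathbb{F}_s$; composing with $y \mapsto py$ (which is injective from $\mathbb{F}_s$ onto $p\mathbb{F}_s$) shows $g$ is a bijection. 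Here the key computational identity is $\beta p \cdot \beta p \equiv \beta p \pmod{sp}$, i.e. $\beta p$ is the idempotent associated to the CRT decomposition $\mathbb{Z}_{sp}\cong\mathbb{F}_s\times\mathbb{F}_p$ — this is exactly the fixed-point fact quoted from \cite{verykios2024structures}, and it is what makes multiplication behave correctly.

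Next I would check the homomorphism properties. Additivity is immediate: $g(x+y) = \beta p(x+y) \equiv \beta p x + \beta p y = g(x) + g(y) \pmod{sp}$, with the reduction $x+y \bmod s$ causing no problem because $\beta p \cdot s \equiv 0 \pmod{sp}$. For multiplicativity, one computes $g(xy) = \beta p (xy) \bmod sp$ and $g(x)g(y) = (\beta p x)(\beta p y) = (\beta p)^2 xy \equiv \beta p \cdot xy \pmod{sp}$ using the idempotent identity $(\beta p)^2\equiv \beta p$; so $g(xy)=g(x)g(y)$ in $p\mathbb{F}_s$. This also identifies the multiplicative identity of $p\mathbb{F}_s$: it is $g(1) = \beta p \bmod sp$, not $1$, so $p\mathbb{F}_s$ is a field with its \emph{own} unity $\beta p$. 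Having a bijective ring homomorphism from the field $\mathbb{F}_s$, the image inherits the field structure (every nonzero element has an inverse, namely $g$ applied to the inverse in $\mathbb{F}_s$), so $p\mathbb{F}_s$ with the operations inherited from $\mathbb{Z}_{sp}$ is a field isomorphic to $\mathbb{F}_s$.

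The main obstacle, and the point that deserves the most care, is the bookkeeping around reductions and the fact that the unity of $p\mathbb{F}_s$ differs from that of $\mathbb{Z}_{sp}$: one must consistently treat $p\mathbb{F}_s$ as a ring under the ambient addition and multiplication of $\mathbb{Z}_{sp}$ while recognizing it is not a \emph{subring with the same $1$}. Concretely, the load-bearing step is showing $(\beta p)^2 \equiv \beta p \pmod{sp}$ and using it to collapse $(\beta p)^2 xy$ to $\beta p\, xy$; everything else is routine once that idempotent identity is in hand. I would close by remarking that surjectivity onto all of $p\mathbb{F}_s=\{0,p,\dots,(s-1)p\}$ follows because $\{\beta p x \bmod sp : x\in\mathbb{F}_s\}$ has $s$ distinct elements all lying in the $s$-element set $p\mathbb{F}_s$, hence equals it — which is the permutation remark already recorded before the theorem statement.
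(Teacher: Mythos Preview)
Your proposal is correct and follows essentially the same route as the paper: establish bijectivity of $g$, check additivity directly, and reduce multiplicativity to the idempotent identity $(\beta p)^2\equiv\beta p\pmod{sp}$, then read off the unity $g(1)=\beta p$ and transport inverses. The only cosmetic difference is that the paper derives $(\beta p)^2\equiv\beta p$ by writing $\beta p(1+\alpha s)$ from B\'ezout, whereas you invoke it as the CRT idempotent; and your remark that the operations are literally those inherited from $\mathbb{Z}_{sp}$ (just with a different unity) is in fact sharper than the paper's closing caveat.
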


\begin{proof}
Since $\gcd(p, s) = 1$, by Bézout's identity there exist integers $\alpha, \beta$ such that $-\alpha s + \beta p = 1$, where $\alpha$, $\beta$ are both positive.
Define the map $g : \mathbb{F}_s \to p\mathbb{F}_s \subseteq \mathbb{Z}_{sp}, \quad g(x) = \beta p x \pmod {sp}$.
We will show that $g$ is a field isomorphism.

Since $\beta p$ is coprime to $s$, the map $x \mapsto \beta p x \pmod {sp}$ is injective on $\mathbb{F}_s$, and the image consists of $s$ distinct elements of the form $\beta p x \pmod {sp}$, which are congruent to $px \pmod {sp}$. Thus, $g$ is bijective onto $p\mathbb{F}_s$.

Let $x, y \in \mathbb{F}_s$. Then $g(x + y) = \beta p (x + y) \equiv \beta p x + \beta p y = g(x) + g(y) \pmod {sp}$.
Similarly, $g(xy) = \beta p (xy)\pmod {sp}$.
But since $g(x) = \beta p x$, we also have $g(x) \cdot g(y) = (\beta p x)(\beta p y) = (\beta p)^2 xy$.
To reconcile this, observe that in $\mathbb{Z}_{sp}$, the scalar $(\beta p)^2 \equiv \beta p \pmod {sp}$ because $(\beta p)^2 = \beta p \cdot \beta p = \beta p (1 + \alpha s) \equiv \beta p \pmod {sp}$. Hence, $g(xy) \equiv g(x) \cdot g(y) \pmod {sp}$.

We compute $g(1) = \beta p \pmod {sp}$, which acts as the multiplicative identity in $p\mathbb{F}_s$, since for any $x \in \mathbb{F}_s$, $g(x) \cdot g(1) = (\beta p x)(\beta p) = (\beta p)^2 x \equiv \beta p x = g(x) \pmod {sp}$ (\cite{verykios2024structures}). 

Let $x \in \mathbb{F}_s^\times$. Then $x$ has an inverse $x^{-1} \in \mathbb{F}_s$, and $g(x) \cdot g(x^{-1}) = \beta p x \cdot \beta p x^{-1} = (\beta p)^2 \equiv \beta p \pmod {sp}$, so $g(x^{-1})$ is the multiplicative inverse of $g(x)$ in $p\mathbb{F}_s$.

Therefore, $g$ is a field isomorphism from $\mathbb{F}_s$ to $p\mathbb{F}_s$, and $p\mathbb{F}_s$ inherits the field structure from $\mathbb{F}_s$.

Note that the field operations on $p\mathbb{F}_s$ are defined via the isomorphism $g$, and do not coincide with the ring operations inherited from $\mathbb{Z}_{sp}$.
\end{proof}
The corresponding theorem for $s\mathbb{F}_p$ is stated and proved below:

\begin{theorem}
\label{TheoremFour:sFp_field}
Let $s$ and $p$ be distinct prime numbers. Consider the ring $\mathbb{Z}_{sp}$ and define the subset
\[
s\mathbb{F}_p = \{0, s, 2s, \ldots, (p-1)s\} \subseteq \mathbb{Z}_{sp}.
\]
Then $s\mathbb{F}_p$ admits a field structure isomorphic to the finite field $\mathbb{F}_p$, via the map
\[
g_1 : \mathbb{F}_p \to s\mathbb{F}_p, \quad g_1(x) = -\alpha s x \mod sp,
\]
where $\alpha \in \mathbb{Z}$ satisfies $-\alpha s \equiv 1 \mod p$.
\end{theorem}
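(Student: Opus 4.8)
The plan is to mirror the proof of Theorem~\ref{TheoremFour:pFs_field} verbatim, exchanging the roles of $s$ and $p$ and using the idempotent $-\alpha s$ in place of $\beta p$. First I would invoke B\'ezout's identity in the form $-\alpha s + \beta p = 1$ (with $\alpha,\beta$ both positive), which immediately gives $-\alpha s \equiv 1 \pmod p$, so that $-\alpha s$ is coprime to $p$ and the map $g_1(x) = -\alpha s x \mod sp$ is well defined from $\mathbb{F}_p$ into $\mathbb{Z}_{sp}$.

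For bijectivity, I would observe that $-\alpha s x \equiv s(-\alpha x) \pmod{sp}$ and that $x \mapsto -\alpha x \mod p$ is a permutation of $\mathbb{F}_p$ (since $\gcd(\alpha,p)=1$); hence, as $x$ ranges over $\mathbb{F}_p$, the residues $-\alpha s x \mod sp$ run exactly through $\{0, s, 2s, \ldots, (p-1)s\} = s\mathbb{F}_p$, so $g_1$ is a bijection onto $s\mathbb{F}_p$. Additivity, $g_1(x+y) = -\alpha s(x+y) \equiv -\alpha s x + (-\alpha s y) = g_1(x) + g_1(y) \pmod{sp}$, is immediate from distributivity in $\mathbb{Z}_{sp}$.

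The one genuinely substantive point --- the analogue of the reconciliation step in the previous proof --- is multiplicativity. Here $g_1(x)g_1(y) = (-\alpha s)^2 xy$, while $g_1(xy) = -\alpha s\, xy$, so everything hinges on the identity $(-\alpha s)^2 \equiv -\alpha s \pmod{sp}$. This follows from $-\alpha s = 1 - \beta p$: indeed $(-\alpha s)^2 = (-\alpha s)(1 - \beta p) = -\alpha s + \alpha\beta s p \equiv -\alpha s \pmod{sp}$, i.e.\ $-\alpha s$ is precisely the idempotent (fixed point of the quadratic map on $\mathbb{Z}_{sp}$) recalled in Section~\ref{sec:sec2}. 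With this in hand, $g_1(xy) \equiv g_1(x)g_1(y) \pmod{sp}$; the element $g_1(1) = -\alpha s$ serves as the multiplicative identity of $s\mathbb{F}_p$, since $g_1(x)\cdot g_1(1) = (-\alpha s)^2 x \equiv -\alpha s x = g_1(x)$; and for $x \in \mathbb{F}_p^\times$ we get $g_1(x)\cdot g_1(x^{-1}) = (-\alpha s)^2 \equiv -\alpha s = g_1(1)$, so $g_1(x^{-1})$ inverts $g_1(x)$. Therefore $g_1$ transports the field structure of $\mathbb{F}_p$ onto $s\mathbb{F}_p$.

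I do not anticipate a real obstacle; the only thing to be careful about is the same caveat stated at the end of Theorem~\ref{TheoremFour:pFs_field}, namely that the field multiplication on $s\mathbb{F}_p$ induced by $g_1$ is \emph{not} the multiplication inherited from $\mathbb{Z}_{sp}$ (a product of two multiples of $s$ is a multiple of $s^2$), so the isomorphism must be read with respect to the transported operations. I would close by stating this explicitly, exactly as in the companion theorem.
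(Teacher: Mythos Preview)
Your proposal is correct and follows essentially the same argument as the paper's own proof: both invoke B\'ezout to get $-\alpha s \equiv 1 \pmod p$, check bijectivity onto $s\mathbb{F}_p$, verify additivity directly, and reduce multiplicativity to the idempotent identity $(-\alpha s)^2 \equiv -\alpha s \pmod{sp}$, after which identity and inverses follow. Your derivation of the idempotent via $-\alpha s = 1 - \beta p$ and your closing caveat about the transported operations are slightly more explicit than the paper's version, but the approach is the same.
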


\begin{proof}
Since $\gcd(p, s) = 1$, Bézout's identity guarantees the existence of integers $\alpha, \beta \in \mathbb{Z}$ such that $-\alpha s + \beta p = 1$. Define the map $g_1 : \mathbb{F}_p \to s\mathbb{F}_p \subseteq \mathbb{Z}_{sp}$ by $g_1(x) = -\alpha s x \mod sp$. Because $-\alpha s \equiv 1 \mod p$, the map $g_1$ is injective and surjective, and its image consists of $p$ distinct elements of the form $-\alpha s x \mod sp$, which are congruent to $sx \mod sp$. For any $x, y \in \mathbb{F}_p$, we have
\[
g_1(x + y) = -\alpha s (x + y) \equiv -\alpha s x + (-\alpha s y) = g_1(x) + g_1(y) \mod sp,
\]
and similarly $g_1(xy) = -\alpha s (xy) = -\alpha sxy\mod sp$.

Moreover,
\[
g_1(x) \cdot g_1(y) = (-\alpha s x)(-\alpha s y) = (-\alpha s)^2 xy,
\]
and since $(-\alpha s)^2 \equiv -\alpha s \mod sp$, it follows that $g_1(xy) \equiv g_1(x) \cdot g_1(y) \mod sp$. We compute $g_1(1) = -\alpha s \mod sp$, which acts as the multiplicative identity in $s\mathbb{F}_p$, since
\[
g_1(x) \cdot g_1(1) = (-\alpha s x)(-\alpha s) = (-\alpha s)^2 x \equiv -\alpha s x = g_1(x) \mod sp.
\]
For any $x \in \mathbb{F}_p^\times$, let $x^{-1}$ be its inverse in $\mathbb{F}_p$; then
\[
g_1(x) \cdot g_1(x^{-1}) = (-\alpha s x)(-\alpha s x^{-1}) = (-\alpha s)^2 \equiv -\alpha s = g_1(1) \mod sp,
\]
so $g_1(x^{-1})$ is the multiplicative inverse of $g_1(x)$. Therefore, $g_1$ is a field isomorphism from $\mathbb{F}_p$ to $s\mathbb{F}_p$, and $s\mathbb{F}_p$ inherits the field structure via $g_1$.
\end{proof}

It is known that the multiples of $s$ and $p$ have no inverse in $\mathbb{Z}_{sp}$. A strong result of these theorems is that if we reduce the domain of definition of the functions $g_1$ from the ring $\mathbb{Z}_{sp}$ to the finite domain $p\mathbb{F}_s$, then the elements $px$ have inverse (Theorem \ref{TheoremFour:pFs_field}). Furthermore, if we reduce the domain of the function $g_2$ to the finite field of multiples of $s$ (i.e., to the new domain $s\mathbb{F}_p$) then all the multiples of $s$ also have inverse.

It has been proved that $g_1$ is a multiplicative isomorphism mapping $\mathbb{F}_s$ to $p\mathbb{F}_s$. The group $p\mathbb{F}_s=p.\mathbb{F}_s$ is the set of $p$ multiples of $\mathbb{F}_s$, and we define the set $\beta p\mathbb{F}_s=\{\beta px\pmod {sp}\mid x\in \mathbb{F}_s\}$, which is actually a permutation of $p\mathbb{F}_s$. An analogous definition of a permutation of $s\mathbb{F}_p$ is the group $-\alpha s\mathbb{F}_p=\{-\alpha sx\pmod {sp}\mid x\in \mathbb{F}_p\}$.

The sets $\mathbb{F}_s$ and $\mathbb{F}_p$ are isomorphic to $p\mathbb{F}_s$ and $s\mathbb{F}_p$ respectively, and next we will prove that they are also isomorphic to $p\mathbb{F}_s^{+1}$ and $s\mathbb{F}_p^{+1}$. We will also prove that $p\mathbb{F}_s^{\pm 1}$ and $s\mathbb{F}_p^{\pm 1}$ are multiplicative groups.

We define the following sets: $\mathbb{F}_s^{\ast}=\mathbb{F}_s - \{0\}$ and $\mathbb{F}_p^{\ast}=\mathbb{F}_p - \{0\}$, as well as $p\mathbb{F}_s^{+1,\ast}=p\mathbb{F}_s^{+1} - \{-\beta p+1\}$ and $s\mathbb{F}_p^{+1,\ast}=s\mathbb{F}_p^{+1} - \{\alpha s+1\}$. Note that $-\alpha s+\beta p=1$. The elements $-\alpha s=-\beta p+1$ and $\beta p=\alpha s+1$ serve as the zero multiplicative elements of $p\mathbb{F}_s^{+1}$ and $s\mathbb{F}_p^{+1}$, respectively, as will be proved in the next two theorems. 

\begin{lemma}
\label{Theorem_pFs+1_closure_zeros}
Let $p\mathbb{F}_s^{+1,\ast} =\{\kappa p+1\mid \kappa \in [0..s-1] \setminus \{s-\beta\}\}= p\mathbb{F}_s^{+1} \setminus \{-\beta p+1\}$. Then:
\begin{enumerate}
\item The element $-\beta p+1$ is the unique multiplicative zero element in $p\mathbb{F}_s^{+1}$.
\item The set $p\mathbb{F}_s^{+1,\ast}$ is closed under multiplication.
\end{enumerate}
\end{lemma}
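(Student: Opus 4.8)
The plan is to push everything through the ring homomorphism $\pi:\mathbb{Z}_{sp}\to\mathbb{Z}_s$ given by reduction modulo $s$ (well-defined since $s\mid sp$), and to observe that its restriction to $p\mathbb{F}_s^{+1}$ is a \emph{multiplicative bijection} onto $(\mathbb{Z}_s,\cdot)=(\mathbb{F}_s,\cdot)$. Indeed, since $\gcd(p,s)=1$, the assignment $\kappa\mapsto \kappa p+1\bmod s$ is a bijection from $[0..s-1]$ onto $\mathbb{Z}_s$; hence the $s$ elements of $p\mathbb{F}_s^{+1}$ hit $s$ distinct residues mod $s$, and exactly one of them, the one with $\kappa p+1\equiv 0\pmod s$, reduces to $0$. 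Solving $\kappa p\equiv -1\pmod s$ gives $\kappa\equiv -p^{-1}\equiv -\beta\pmod s$, i.e.\ $\kappa=s-\beta$ (which lies in $[0..s-1]$ because $1\le\beta\le s-1$), so this distinguished element is $(s-\beta)p+1=-\beta p+1$, which equals $-\alpha s$ in $\mathbb{Z}_{sp}$ in view of $-\alpha s+\beta p=1$.

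For part~1 I would first verify directly that $-\alpha s$ is absorbing under the multiplication of $\mathbb{Z}_{sp}$: for any $\kappa p+1\in p\mathbb{F}_s^{+1}$ one has $(-\alpha s)(\kappa p+1)=-\alpha\kappa\,sp-\alpha s\equiv-\alpha s\pmod{sp}$, so $(-\alpha s)\cdot x=-\alpha s$ for every $x\in p\mathbb{F}_s^{+1}$, establishing existence of a multiplicative zero. For uniqueness, suppose $e=\kappa p+1$ satisfies $e\cdot x=e$ for all $x\in p\mathbb{F}_s^{+1}$; taking $x=p+1$ gives $ep\equiv 0\pmod{sp}$, i.e.\ $p(\kappa p+1)\equiv 0\pmod{sp}$, hence $\kappa p+1\equiv 0\pmod s$, which by the bijection above forces $\kappa=s-\beta$ and therefore $e=-\alpha s$. (Equivalently, uniqueness is immediate from $\pi$ being a multiplicative bijection, since $0$ is the unique absorbing element of $(\mathbb{Z}_s,\cdot)$.)

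For part~2 I would first record that $p\mathbb{F}_s^{+1}$ is itself closed under the multiplication of $\mathbb{Z}_{sp}$, since $(y_1p+1)(y_2p+1)=(y_1y_2p+y_1+y_2)p+1$, which lies in $p\mathbb{F}_s^{+1}$ after reducing the coefficient of $p$ modulo $s$. Now take $a,b\in p\mathbb{F}_s^{+1,\ast}=p\mathbb{F}_s^{+1}\setminus\{-\beta p+1\}$. By the identification of the distinguished element, $\pi(a)\ne 0$ and $\pi(b)\ne 0$ in the field $\mathbb{F}_s$; since a field has no zero divisors, $\pi(ab)=\pi(a)\pi(b)\ne 0$, so $ab\ne -\alpha s$, and together with the closure of $p\mathbb{F}_s^{+1}$ this gives $ab\in p\mathbb{F}_s^{+1,\ast}$. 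The same argument in fact shows $(p\mathbb{F}_s^{+1,\ast},\cdot)$ is a group isomorphic via $\pi$ to $\mathbb{F}_s^{\times}$. There is no genuine obstacle here: the only point needing a little care is the bookkeeping that identifies $-\beta p+1$ with $-\alpha s$ inside $\mathbb{Z}_{sp}$ and confirms $\kappa=s-\beta\in[0..s-1]$; everything else is routine modular arithmetic.
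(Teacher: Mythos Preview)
Your proof is correct and slightly cleaner than the paper's. Both proofs identify $-\beta p+1$ with $-\alpha s$ via the B\'ezout relation and verify absorption by the same one-line computation $(-\alpha s)(\kappa p+1)\equiv -\alpha s\pmod{sp}$. The difference lies in how closure (part~2) and uniqueness are argued. The paper works by direct expansion: assuming $(xp+1)(yp+1)\equiv -\beta p+1$, it reduces modulo $s$, substitutes $p\equiv\beta^{-1}$, and after clearing denominators obtains the factorization $(x+\beta)(y+\beta)\equiv 0\pmod s$, forcing $x\equiv -\beta$ or $y\equiv -\beta$. You instead package the entire argument into the reduction map $\pi:\mathbb{Z}_{sp}\to\mathbb{F}_s$, observe that $\pi$ restricts to a multiplicative bijection $p\mathbb{F}_s^{+1}\to\mathbb{F}_s$ sending $-\alpha s$ to $0$, and invoke the absence of zero divisors in $\mathbb{F}_s$. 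The two are not really different in content---the paper's factor $x+\beta$ is, up to the unit $\beta$, exactly $\pi(xp+1)$---but your formulation makes the structure transparent and immediately yields the stronger conclusion $(p\mathbb{F}_s^{+1,\ast},\cdot)\cong\mathbb{F}_s^{\times}$, which the paper establishes separately in later theorems. For uniqueness you specialize to $x=p+1$ rather than quantifying over all $x$ as the paper does; this is a harmless economy.
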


\begin{proof}

\begin{enumerate}
\item
We use the Bézout identity $-\alpha s + \beta p = 1$ from paragraph \ref{sec:sec2}, hence $-\alpha s = -\beta p + 1$. Put $z=-\beta p+1$.
For any $xp+1\in p\mathbb{F}_s^{+1}$,
$z(xp+1) =(-\beta p+1)(xp+1) = (-\alpha s)(xp+1) = -\alpha s \cdot xp - \alpha s \equiv -\alpha s \pmod{sp}$,
since $-\alpha s \cdot xp$ is divisible by $sp$. As $-\alpha s = z$, we obtain $z(xp+1)\equiv z\pmod{sp}$ for every $x$, so $z$ is an absorbing (multiplicative zero) element.

Suppose $tp+1$ is another absorbing element, i.e. for all $x$,
$(tp+1)(xp+1)\equiv tp+1 \pmod{sp}$.
Subtracting $tp+1$ gives $p\big(txp+x\big)\equiv 0\pmod{sp}$, so $txp+x\equiv 0\pmod{s}$ for all $x$. Thus $x(tp+1)\equiv 0\pmod{s}$ for every $x$, which forces $tp+1\equiv 0\pmod{s}$. Hence $tp+1 = ks$ for some integer $k$, and reducing modulo $s$ yields
$tp \equiv -1 \pmod{s}$, or $t \equiv -p^{-1} \pmod{s}$.
But $p^{-1}\equiv \beta \pmod{s}$, so $t\equiv -\beta \equiv s-\beta\pmod{s}$. Therefore $tp+1 = (s-\beta)p+1 \equiv -\beta p+1 \pmod{sp}$, hence $tp+1 = -\beta p+1$ as elements of $p\mathbb{F}_s^{+1}$. This proves uniqueness.

\item
Let $xp+1$ and  $yp+1$ be elements of $p\mathbb{F}_s^{+1,\ast}$ and suppose that $(xp+1)(yp+1)\equiv -\beta p+1 \pmod{sp}$. Then $p(xyp+x+y+\beta)\equiv 0 \pmod{sp}$, hence $xyp+x+y\equiv -\beta \pmod{s}$. Since $p\equiv \beta^{-1}\pmod{s}$, this gives $xy\beta^{-1}+x+y\equiv -\beta\pmod{s}$, and multiplying by $\beta$ yields $(x+\beta)(y+\beta)\equiv 0\pmod{s}$. As $s$ is prime, this forces $x\equiv -\beta$ or $y\equiv -\beta$, contradicting the hypothesis. Thus the product never equals $-\beta p+1$.
\end{enumerate}
\end{proof}

\begin{theorem}
\label{Theorem_pFs+1_AbelianGroup}
The set $p\mathbb{F}_s^{+1,\ast}=\{\kappa p+1\mid \kappa \in [0..s-1] \setminus \{s-\beta\}\}$ is a multiplicative group.
\end{theorem}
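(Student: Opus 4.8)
The plan is to verify the four group axioms for $p\mathbb{F}_s^{+1,\ast}$ under the multiplication inherited from $\mathbb{Z}_{sp}$, leaning on Lemma~\ref{Theorem_pFs+1_closure_zeros} for closure and then using finiteness to produce inverses. Associativity is automatic, since the operation is just the ring multiplication of $\mathbb{Z}_{sp}$ restricted to a subset, and commutativity comes along for free. For the identity I would note that $1 = 0\cdot p + 1$ corresponds to $\kappa = 0$; since $\beta = p^{-1}\bmod s$ lies in $[1..s-1]$ (it is nonzero because $s\nmid p$, and it is reduced mod $s$), we have $s-\beta\neq 0$, so $1\in p\mathbb{F}_s^{+1,\ast}$, and it evidently fixes every element. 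Closure under multiplication is precisely part~(2) of Lemma~\ref{Theorem_pFs+1_closure_zeros}.

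The only substantive point is the existence of inverses that stay inside the set. First I would check that every $\kappa p+1\in p\mathbb{F}_s^{+1,\ast}$ is a unit of $\mathbb{Z}_{sp}$: it is $\equiv 1\pmod p$, hence not divisible by $p$, and $\kappa p+1\equiv 0\pmod s$ would force $\kappa\equiv -p^{-1}\equiv -\beta\equiv s-\beta\pmod s$, which is exactly the excluded value. Therefore multiplication by $\kappa p+1$ is an injective map $\mathbb{Z}_{sp}\to\mathbb{Z}_{sp}$, and by closure it restricts to an injective self-map of the \emph{finite} set $p\mathbb{F}_s^{+1,\ast}$, hence a bijection. Since $1$ belongs to that set, it has a preimage $\lambda p+1$ with $(\kappa p+1)(\lambda p+1)\equiv 1\pmod{sp}$, and by commutativity this is a genuine two-sided inverse. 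All four axioms then hold.

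For concreteness I would also record the explicit inverse. Expanding $(\kappa p+1)(\lambda p+1)\equiv 1\pmod{sp}$, dividing by $p$, and reducing mod $s$ with $p\equiv\beta^{-1}\pmod s$ gives $\kappa\lambda\beta^{-1}+\kappa+\lambda\equiv 0\pmod s$; multiplying by $\beta$ turns this into $(\kappa+\beta)(\lambda+\beta)\equiv\beta^2\pmod s$. Because $\kappa\not\equiv -\beta\pmod s$, the factor $\kappa+\beta$ is invertible modulo $s$, so $\lambda\equiv \beta^2(\kappa+\beta)^{-1}-\beta\pmod s$ determines a unique $\lambda\in[0..s-1]$, and $\lambda+\beta\not\equiv 0\pmod s$ since $\beta^2\not\equiv 0$, so $\lambda\neq s-\beta$ and $\lambda p+1$ indeed lies in $p\mathbb{F}_s^{+1,\ast}$. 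This independently confirms closure, gives $|p\mathbb{F}_s^{+1,\ast}|=s-1$, and foreshadows the isomorphism with $\mathbb{F}_s^{\ast}$.

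I expect the main obstacle to be bookkeeping rather than mathematics: making sure that the single deleted element $s-\beta$ is exactly the obstruction to invertibility — so that $p\mathbb{F}_s^{+1,\ast}$, neither larger nor smaller, is the group — and that $\kappa=0$ survives the deletion so the identity is present. Both reduce to the fact that $\beta\in[1..s-1]$, which I would state explicitly up front. Once that is fixed, the finiteness-plus-unit argument of the second paragraph closes everything with no computation, and the formula in the third paragraph is an optional bonus.
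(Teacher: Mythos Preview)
Your proof is correct and follows essentially the same outline as the paper's: both verify the group axioms directly, identify $1=0\cdot p+1$ as the identity, and handle closure via the computation underlying Lemma~\ref{Theorem_pFs+1_closure_zeros}. The one substantive difference is the inverse step: the paper solves $(xp+1)(yp+1)\equiv 1\pmod{sp}$ explicitly to obtain $y\equiv -x(xp+1)^{-1}\pmod s$, whereas you first observe that each $\kappa p+1$ is a unit in $\mathbb{Z}_{sp}$ and then invoke the finite-set pigeonhole argument (left multiplication is an injective, hence bijective, self-map of $p\mathbb{F}_s^{+1,\ast}$). Your route is a bit cleaner and makes the existence of inverses automatic once closure is known; the paper's route gives the formula directly. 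Your optional factored form $(\kappa+\beta)(\lambda+\beta)\equiv\beta^{2}\pmod s$ is a nice repackaging of the same equation and, as you note, makes the forthcoming isomorphism $\kappa p+1\mapsto(\kappa+\beta)\bmod s$ with $\mathbb{F}_s^{\ast}$ almost visible. You are also more careful than the paper in checking explicitly that $\kappa=0\neq s-\beta$, so that the identity actually survives the deletion.
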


\begin{proof}
Let $x,y\in\mathbb{F}_s$. Since $(xp+1)(yp+1)=(xyp+x+y)p+1\equiv tp+1\pmod{sp}$, where $t=(xyp+x+y)\bmod s$, the set $p\mathbb{F}_s^{+1}$ is multiplicatively closed, and associativity and commutativity hold.

We now establish that $z = -\beta p+1 = -\alpha s$ is the unique multiplicative zero element. For any $xp+1 \in p\mathbb{F}_s^{+1}$, we have $z(xp+1) = -\beta xp^2 - \beta p + xp + 1$. Working modulo $s$, since $p \equiv \beta^{-1} \pmod{s}$, we get $-\beta xp + x \equiv -x + x \equiv 0 \pmod{s}$. Thus $-\beta xp^2 + xp = ks$ for some integer $k$, giving $z(xp+1) \equiv ks + z \equiv z \pmod{sp}$. For uniqueness, if $(tp+1)(xp+1) \equiv tp+1 \pmod{sp}$ for all $x$, then taking $x=1$ yields $p(tp+1) \equiv 0 \pmod{sp}$, so $tp+1 \equiv 0 \pmod{s}$. This gives $-tp+ks=1$, which by uniqueness of Bézout coefficients implies $t \equiv -\beta \pmod{s}$, hence $tp+1 = -\beta p+1$.

The identity element is $1 = 0 \cdot p + 1$. For any $xp+1 \in p\mathbb{F}_s^{+1,\ast}$ with $x \neq s-\beta$, the inverse $yp+1$ satisfies $(xp+1)(yp+1) \equiv 1 \pmod{sp}$, which expands to $(xyp+x+y)p+1 \equiv 1 \pmod{sp}$. This requires $y(xp+1)+x \equiv 0 \pmod{s}$, yielding $y \equiv -x(xp+1)^{-1} \pmod{s}$. Since $x \neq s-\beta$ ensures $xp+1 \not\equiv 0 \pmod{s}$, the inverse exists uniquely.
\end{proof}

We define the set: $p\mathbb{F}_s^{\pm 1,\ast}=p\mathbb{F}_s^{\pm 1} - \{-\beta p+1, \beta p-1\}$. 

We now extend the multiplicative structure established in
Theorem~\ref{Theorem_pFs+1_isomorphic} to the combined families $p\mathbb{F}_s^{+1}$ and $p\mathbb{F}_s^{-1}$.
In particular, we show that—after removing the two exceptional (non-invertible) elements—the set of all terms of the form $xp\!\pm\!1$ is closed under multiplication and forms a multiplicative group.

\begin{theorem}
\label{Theorem_pFs+_1_AbelianMultiplicativeGroup}
The set $p\mathbb{F}_s^{\pm1,\ast}
= p\mathbb{F}_s^{\pm1} \setminus \{-\beta p + 1,\, \beta p - 1\}$ is a multiplicative group.
\end{theorem}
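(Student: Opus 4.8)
\emph{Plan.} My approach is to exhibit $p\mathbb{F}_s^{\pm1,\ast}$ as the set product $\{1,-1\}\cdot p\mathbb{F}_s^{+1,\ast}$ formed inside the unit group $\mathbb{Z}_{sp}^{\times}$, and then invoke the elementary fact that the product of two subgroups of an abelian group is again a subgroup. Three things must be checked. \textbf{(i)} $p\mathbb{F}_s^{+1,\ast}$ is a subgroup of $\mathbb{Z}_{sp}^{\times}$: it is a multiplicative group by Theorem~\ref{Theorem_pFs+1_AbelianGroup}, and each $yp+1$ with $y\neq s-\beta$ is a unit since $\gcd(yp+1,p)=1$ always while $s\mid yp+1$ would force $y\equiv -p^{-1}\equiv -\beta\pmod{s}$. \textbf{(ii)} $\{1,-1\}$ is a subgroup of $\mathbb{Z}_{sp}^{\times}$ of order $2$ (as $sp>2$, $-1\neq 1$). \textbf{(iii)} the set identity $p\mathbb{F}_s^{\pm1,\ast}=\{1,-1\}\cdot p\mathbb{F}_s^{+1,\ast}$ holds.

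For \textbf{(iii)} I would first note that negation modulo $sp$ sends $p\mathbb{F}_s^{+1}$ bijectively onto $p\mathbb{F}_s^{-1}$, because $-(yp+1)\equiv (s-y)p-1\pmod{sp}$ and $y\mapsto s-y$ is a bijection onto the index set of $p\mathbb{F}_s^{-1}$; moreover this bijection carries the absorbing element $-\beta p+1$ of $p\mathbb{F}_s^{+1}$ to $\beta p-1$. Hence $-p\mathbb{F}_s^{+1,\ast}=p\mathbb{F}_s^{-1}\setminus\{\beta p-1\}$, and since $p\mathbb{F}_s^{+1}$ and $p\mathbb{F}_s^{-1}$ are disjoint ($p$ being odd),
\[
\{1,-1\}\cdot p\mathbb{F}_s^{+1,\ast}=p\mathbb{F}_s^{+1,\ast}\cup\bigl(p\mathbb{F}_s^{-1}\setminus\{\beta p-1\}\bigr)=p\mathbb{F}_s^{\pm1}\setminus\{-\beta p+1,\ \beta p-1\}=p\mathbb{F}_s^{\pm1,\ast}.
\]
Combining \textbf{(i)}--\textbf{(iii)}, $p\mathbb{F}_s^{\pm1,\ast}$ is the product of two subgroups of the abelian group $\mathbb{Z}_{sp}^{\times}$, hence a subgroup; and since $-1\notin p\mathbb{F}_s^{+1}$ (because $-1\not\equiv 1\pmod{p}$) the two subgroups meet only in $1$, so in fact $p\mathbb{F}_s^{\pm1,\ast}\cong \mathbb{Z}/2\times p\mathbb{F}_s^{+1,\ast}$, which by Theorem~\ref{Theorem_pFs+1_isomorphic} is $\mathbb{Z}/2\times\mathbb{F}_s^{\times}$.

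The delicate point --- and where I expect the real work to sit --- is the closure of $p\mathbb{F}_s^{\pm1,\ast}$: one must know that a product of two admissible factors $xp\pm1$ and $yp\pm1$ is never one of the forbidden residues $-\beta p+1$, $\beta p-1$. The mixed- and negative-sign cases are not a fresh computation: via the negation bijection above, $(xp+1)(yp-1)$ and $(xp-1)(yp-1)$ are, up to an overall sign, products of two elements of $p\mathbb{F}_s^{+1}$, so they reduce to the case already settled in Lemma~\ref{Theorem_pFs+1_closure_zeros}. A fully self-contained alternative that avoids even this is to work through the Chinese Remainder isomorphism $\mathbb{Z}_{sp}\cong\mathbb{Z}_p\times\mathbb{Z}_s$: under it $p\mathbb{F}_s^{\pm1,\ast}$ corresponds exactly to $\{1,-1\}\times\mathbb{Z}_s^{\times}$ and the forbidden residues to $(1,0)$ and $(-1,0)$, whence the claim is immediate from the product structure of $\mathbb{Z}_p^{\times}\times\mathbb{Z}_s^{\times}$.
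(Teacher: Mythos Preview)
Your proof is correct and takes a genuinely different route from the paper's. The paper proceeds by brute-force case analysis on the four sign patterns $(xp\pm1)(yp\pm1)$: for each it expands the product, reduces modulo $s$ using $p\equiv\beta^{-1}$, and shows that landing on $-\beta p+1$ or $\beta p-1$ would force $x$ or $y$ to equal $\pm\beta$; it then computes an explicit inverse for elements of the form $xp-1$. Your argument instead identifies $p\mathbb{F}_s^{\pm1,\ast}$ with the subset product $\{1,-1\}\cdot p\mathbb{F}_s^{+1,\ast}$ inside the abelian group $\mathbb{Z}_{sp}^{\times}$ (via the negation bijection $yp+1\mapsto (s-y)p-1$), so that closure, identity and inverses all follow at once from the standard fact that $HK$ is a subgroup when $H,K$ are subgroups of an abelian group. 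This is shorter, avoids the repetitive modular computations, and yields the bonus structural information $p\mathbb{F}_s^{\pm1,\ast}\cong\mathbb{Z}/2\times\mathbb{F}_s^{\times}$ that the paper's proof does not surface. Your alternative CRT description---$p\mathbb{F}_s^{\pm1,\ast}\leftrightarrow\{1,-1\}\times\mathbb{Z}_s^{\times}$ under $\mathbb{Z}_{sp}\cong\mathbb{Z}_p\times\mathbb{Z}_s$---is the cleanest formulation of all and makes the result essentially a one-line observation. The ``delicate point'' paragraph is strictly speaking unnecessary once the subgroup-product argument is in place, since closure is already guaranteed; but it does no harm as a sanity check linking back to Lemma~\ref{Theorem_pFs+1_closure_zeros}.
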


\begin{proof}
Let $x, y \in \mathbb{F}_s \setminus \{\beta, -\beta\}$. Then 
\[(xp \pm 1)(yp \pm 1) = (xyp \pm x \pm y)p \pm 1 \equiv tp \pm 1 \pmod{sp},\]
where $t \equiv xyp \pm x \pm y \pmod{s}$. We show that $tp \pm 1 \not\equiv \pm(\beta p - 1) \pmod{sp}$, ensuring closure.

\smallskip
\noindent\textbf{(i)} We have already shown that $(xp+1)(yp+1) \equiv -\beta p + 1 \pmod{sp}$ when either $x$ or $y$ equals $-\beta$ (Lemma \ref{Theorem_pFs+1_closure_zeros}).
The congruence $(xp+1)(yp+1) \equiv \beta p - 1$ is impossible.

\smallskip
\noindent\textbf{(ii)} Suppose $(xp-1)(yp-1) \equiv -\beta p + 1 \pmod{sp}$ for
$x, y \notin \{\beta, s-\beta\}$. Expanding and reducing modulo $s$ using
$p \equiv \beta^{-1}$ gives
\[
xy\beta^{-2} - (x+y)\beta^{-1} \equiv -1 \pmod{s},
\]
or equivalently $(x-\beta)(y-\beta) \equiv 0 \pmod{s}$, implying $x \equiv \beta$ or $y \equiv \beta$, a contradiction.

\smallskip
\noindent\textbf{(iii)} Suppose $(xp-1)(yp+1) \equiv \beta p - 1 \pmod{sp}$ for some $x, y \in \mathbb{F}_s \setminus \{\beta, s-\beta\}$.
Then
\[
xyp^2 + (x-y)p \equiv \beta p \pmod{sp},
\]
which modulo $s$ yields $(x+\beta)(y-\beta) \equiv 0 \pmod{s}$.
Hence $x \equiv -\beta$ or $y \equiv \beta$, again impossible.
The symmetric case $(xp+1)(yp-1)$ is identical.

\smallskip
Thus closure holds. Commutativity and associativity follow from $\mathbb{F}_s$.
The identity is $1 = 0\cdot p + 1$.

For inverses, the case $xp+1$ is handled as in
Theorem~\ref{Theorem_pFs+1_isomorphic}.
For $xp-1$, let its inverse be $yp-1$:
\[
(xp-1)(yp-1) \equiv 1 \pmod{sp}
\;\Rightarrow\;
xyp^2 - xp - yp \equiv 0 \pmod{sp}.
\]
Reducing modulo $s$ gives $xy - x - y \equiv 0$, hence
$y(x-1) \equiv x \pmod{s}$ and
$y \equiv x(x-1)^{-1} \pmod{s}$.
Since $x \notin \{\beta, s-\beta\}$, the inverse exists uniquely.

Therefore, every element has a unique multiplicative inverse, and $p\mathbb{F}_s^{\pm1,\ast}$ forms a multiplicative group.
\end{proof}

\begin{theorem}
\label{Theorem_pFs+1_isomorphic}
The groups $p\mathbb{F}_s^{+1,\ast}$ and $p\mathbb{F}_s^{\ast}$ are isomorphic.
\end{theorem}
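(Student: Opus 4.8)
The plan is to exhibit an explicit group isomorphism between $p\mathbb{F}_s^{+1,\ast}$ and $p\mathbb{F}_s^{\ast}$ by composing two simpler maps. Recall from Theorem \ref{TheoremFour:pFs_field} that $g(x)=\beta p x \bmod sp$ is a field isomorphism $\mathbb{F}_s \to p\mathbb{F}_s$, so its restriction $g:\mathbb{F}_s^{\ast}\to p\mathbb{F}_s^{\ast}$ is a multiplicative group isomorphism (the multiplication on $p\mathbb{F}_s$ being the transported one, with identity $\beta p$). So it suffices to produce a group isomorphism $\varphi: p\mathbb{F}_s^{+1,\ast}\to \mathbb{F}_s^{\ast}$, and then take $g\circ\varphi$.

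First I would define $\varphi(xp+1) = (x+\beta)\bmod s$. I need to check this lands in $\mathbb{F}_s^{\ast}$: the excluded element of $p\mathbb{F}_s^{+1,\ast}$ is exactly $x\equiv s-\beta\equiv-\beta\pmod s$, which is precisely the preimage of $0$, so $\varphi$ is a well-defined bijection onto $\mathbb{F}_s^{\ast}$ (both sets have $s-1$ elements, and $x\mapsto x+\beta$ is a bijection of $\mathbb{F}_s$). The heart of the argument is the homomorphism property. Using the product formula from Lemma \ref{Theorem_pFs+1_closure_zeros} / Theorem \ref{Theorem_pFs+1_AbelianGroup}, $(xp+1)(yp+1)\equiv tp+1\pmod{sp}$ where $t\equiv xyp+x+y\pmod s$. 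I must show $t+\beta \equiv (x+\beta)(y+\beta)\pmod s$. Expanding the right side gives $xy+\beta(x+y)+\beta^2$; using $p\equiv\beta^{-1}\pmod s$ we have $xyp\equiv xy\beta^{-1}$, so I'd multiply the target relation through by $\beta$: it becomes $\beta t+\beta^2\equiv \beta xy + \beta^2(x+y)+\beta^3$, i.e. $\beta(xyp+x+y)+\beta^2 \equiv xy+\beta^2(x+y)+\beta^3$ — wait, this needs care. The cleaner route is to note the earlier proofs already secretly use the substitution $u=x+\beta$: in part (2) of Lemma \ref{Theorem_pFs+1_closure_zeros} the key identity was $(x+\beta)(y+\beta)\equiv 0\pmod s$ characterizing when the product hits the zero element, which strongly suggests $\varphi$ is multiplicative. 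Concretely, I would multiply $t\equiv xyp+x+y$ by $\beta^2$ and use $\beta p\equiv 1$, $\beta^2 p\equiv\beta\pmod s$ to get $\beta^2 t\equiv \beta xy + \beta^2 x+\beta^2 y\pmod s$... and then verify $\beta^2(t+\beta)=\beta^2 t+\beta^3\equiv \beta xy+\beta^2 x+\beta^2 y+\beta^3=\beta(x+\beta)(y+\beta)\pmod s$; since $\gcd(\beta,s)=1$ this yields $t+\beta\equiv\beta(x+\beta)(y+\beta)$, which is off by a factor $\beta$ from what I wanted — so in fact the correct normalization is $\varphi(xp+1)=\beta^{-1}(x+\beta)\equiv p(x+\beta)\pmod s$, or equivalently one should define $\varphi(xp+1)=(x\beta^{-1}+1)\bmod s = (xp+1)\bmod s$, the plain reduction modulo $s$.

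This last observation is the slick form: define $\varphi:p\mathbb{F}_s^{+1,\ast}\to\mathbb{F}_s^{\ast}$ by $\varphi(xp+1)=(xp+1)\bmod s$, i.e. reduction of the representative modulo $s$. Then the homomorphism property is immediate since reduction mod $s$ is a ring homomorphism $\mathbb{Z}_{sp}\to\mathbb{Z}_s$ on these representatives, bijectivity follows because $xp+1\equiv 0\pmod s$ iff $x\equiv-\beta$ (the excluded element), and both domain and codomain have cardinality $s-1$. The main obstacle I anticipate is purely bookkeeping: confirming that the multiplication on $p\mathbb{F}_s^{+1,\ast}$ inherited from $\mathbb{Z}_{sp}$ really does commute with reduction mod $s$ — that is, that $(xp+1)(yp+1)\bmod s = \big((xp+1)\bmod s\big)\big((yp+1)\bmod s\big)$ in $\mathbb{Z}_s$ — which is true because the group operation here is genuinely the ring multiplication of $\mathbb{Z}_{sp}$ (unlike the transported field operations on $p\mathbb{F}_s$), so no compatibility subtlety arises. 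Finally I would compose with the isomorphism $g^{-1}$ restricted to $\mathbb{F}_s^{\ast}$ only if the statement demands the map land in $p\mathbb{F}_s^{\ast}$ with its transported structure; since $\mathbb{F}_s^{\ast}\cong p\mathbb{F}_s^{\ast}$ by Theorem \ref{TheoremFour:pFs_field}, chaining gives $p\mathbb{F}_s^{+1,\ast}\cong p\mathbb{F}_s^{\ast}$, completing the proof.
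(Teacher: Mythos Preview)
Your final argument is correct. After the false start with $\varphi(xp+1)=(x+\beta)\bmod s$, you land on the clean map $\varphi(xp+1)=(xp+1)\bmod s$, which is simply reduction modulo $s$ restricted to $p\mathbb{F}_s^{+1,\ast}$; the homomorphism property is then automatic because reduction $\mathbb{Z}_{sp}\to\mathbb{F}_s$ is a ring homomorphism and the group operation on $p\mathbb{F}_s^{+1,\ast}$ is the ambient ring multiplication. Composing with $g$ from Theorem~\ref{TheoremFour:pFs_field} finishes.

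The paper proceeds differently: it defines the isomorphism in one step as $g(xp+1)=\beta p(xp+1)\bmod sp$, i.e.\ multiplication by the idempotent $\beta p$ inside $\mathbb{Z}_{sp}$, and verifies the homomorphism property directly from the identity $(\beta p)^2\equiv\beta p\pmod{sp}$. In fact your composite $g\circ\varphi$ is \emph{exactly} the paper's map, since $\beta p\cdot\big((xp+1)\bmod s\big)\equiv \beta p(xp+1)\pmod{sp}$ (multiplication by $\beta p$ kills the $\bmod\ p$ component and fixes the $\bmod\ s$ component). So the two proofs produce the same isomorphism but justify multiplicativity differently: the paper via the idempotent relation, you via the universal property of reduction mod $s$. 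Your route has the advantage that the homomorphism check is a one-liner with no computation; the paper's route has the advantage of naming the map directly without an intermediate stop in $\mathbb{F}_s^{\ast}$. You should trim the exploratory detour before the final version.
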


\begin{proof}
Define $g : p\mathbb{F}_s^{+1,\ast} \to p\mathbb{F}_s^{\ast}$ by $g(xp+1) = \beta p(xp+1) \bmod sp$ for $x \in \mathbb{F}_s \setminus \{s-\beta\}$. 

We verify that $g(xp+1) \in p\mathbb{F}_s^{\ast}$. Clearly $g(xp+1) = \beta(xp^2 + p)$. Working modulo $s$:
$g(xp+1) \equiv \beta xp + \beta \pmod{s}$.
Since $p \equiv \beta^{-1} \pmod{s}$, we have
$g(xp+1) \equiv \beta x \beta^{-1} + \beta \equiv x + \beta \pmod{s}$.
Since $x \neq s-\beta$, we have $x + \beta \not\equiv 0 \pmod{s}$, so $g(xp+1) \not\equiv 0 \pmod{sp}$. Thus $g(xp+1) \in p\mathbb{F}_s^{\ast}$.

If $g(x_1p+1) = g(x_2p+1)$, then $\beta p(x_1p+1) \equiv \beta p(x_2p+1) \pmod{sp}$. Working modulo $s$: $\beta (x_1p+1) \equiv \beta (x_2p+1) \pmod{s}$, thus $\beta x_1p \equiv \beta x_2p \pmod{s}$. Since $\gcd(\beta,s)=1$ and $\gcd(p,s)=1$, this implies $x_1 \equiv x_2 \pmod{s}$, and thus $x_1p+1 = x_2p+1$. Since both groups have $s-1$ elements, injectivity implies $g$ is bijective. 

For $x_1p+1, x_2p+1 \in p\mathbb{F}_s^{+1,\ast}$, we have $g((x_1p+1)(x_2p+1)) = \beta p(x_1p+1)(x_2p+1)$ and $g(x_1p+1)g(x_2p+1) = (\beta p)^2(x_1p+1)(x_2p+1)$.
Since $(\beta p)^2 \equiv \beta p \pmod{sp}$ \cite{verykios2024structures}, we have $g((x_1p+1)(x_2p+1)) \equiv g(x_1p+1)g(x_2p+1) \pmod{sp}$. 

Hence $g$ is a group isomorphism.
\end{proof}

Similarly, we can prove that $s\mathbb{F}_p^{+1,\star}$, $s\mathbb{F}_p^{\pm 1,\star}$ are multiplicative groups and that $s\mathbb{F}_p^{+1,\star}$ is isomorphic to $s\mathbb{F}_p^{\star}$.

\section{A ring of a Cartesian product}\label{sec:sec3}

This section begins with the definition of a Cartesian product resulting in a set and continues with definition about addition and multiplication over this set and a proof that it is a ring. 

\begin{definition}
\label{Def_CartetianDefinition}
Let the Cartesian product $s\mathbb{F}_p\times p\mathbb{F}_s$, consisting of pairs $(xs,yp)$ and let the addition over elements of it be defined as $+:s\mathbb{F}_p\times p\mathbb{F}_s\mapsto s\mathbb{F}_p\times p\mathbb{F}_s$ such that $(x_1s,y_1p)+(x_2s+y_2p)=((x_1+x_2)\pmod p s,(y_1+y_2)\pmod s p)$. Also, let the multiplication $.:s\mathbb{F}_p\times p\mathbb{F}_s\mapsto s\mathbb{F}_p\times p\mathbb{F}_s$ be defined as $(x_1s,y_1p).(x_2s,y_2p)=(x_1x_2s^2\pmod{sp}, y_1y_2p^2\pmod{sp})=((x_1x_2s)\pmod p s,(y_1y_2p)\pmod s p)$. \\
\end{definition}

\begin{theorem}
\label{Theorem_cartesianRing}
The structure $(s\mathbb{F}_p\times p\mathbb{F}_s,\hspace{0.5em} +,\hspace{0.5em} .)$ is a ring with divisors of zero.
\end{theorem}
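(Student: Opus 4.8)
The plan is to verify the ring axioms for $(s\mathbb{F}_p\times p\mathbb{F}_s, +, \cdot)$ componentwise, exploiting the fact that the first coordinate behaves like $\mathbb{F}_p$ (scaled by $s$) and the second like $\mathbb{F}_s$ (scaled by $p$), and then to exhibit explicit zero divisors. First I would observe that under addition the structure is simply the direct product of the additive groups of $s\mathbb{F}_p$ and $p\mathbb{F}_s$: the rule $(x_1s,y_1p)+(x_2s,y_2p)=((x_1+x_2\bmod p)s,(y_1+y_2\bmod s)p)$ makes the first coordinate a copy of $(\mathbb{Z}_p,+)$ and the second a copy of $(\mathbb{Z}_s,+)$, so associativity, commutativity, the identity $(0,0)$, and additive inverses are all inherited immediately from $\mathbb{F}_p$ and $\mathbb{F}_s$.

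Next I would check that multiplication is well-defined and lands in the set: from Definition~\ref{Def_CartetianDefinition}, $(x_1s,y_1p)\cdot(x_2s,y_2p)=((x_1x_2s\bmod p)s,(y_1y_2p\bmod s)p)$, so each coordinate is again a multiple of $s$ (resp.\ $p$) lying in the prescribed range. The key computational point is that the first coordinate is governed by the map $x\mapsto x x' s \bmod p$ on residues mod $p$ — equivalently, since $s$ is invertible mod $p$, this is (up to the fixed scalar $s$) ordinary multiplication in $\mathbb{F}_p$, exactly as in Theorem~\ref{TheoremFour:sFp_field}; similarly the second coordinate realizes multiplication in $\mathbb{F}_s$. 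Hence associativity of $\cdot$, commutativity, and the two-sided distributive laws all reduce coordinatewise to the corresponding identities in the fields $\mathbb{F}_p$ and $\mathbb{F}_s$ (or, invoking Theorems~\ref{TheoremFour:pFs_field} and \ref{TheoremFour:sFp_field}, to the field structures on $s\mathbb{F}_p$ and $p\mathbb{F}_s$ themselves). For distributivity I would spell out one instance, e.g.\ $(x_1s,y_1p)\cdot\big((x_2s,y_2p)+(x_3s,y_3p)\big)$, and note the first coordinate reads $x_1(x_2+x_3)s\bmod p = (x_1x_2 + x_1x_3)s\bmod p$, matching the sum of products, with the second coordinate identical mutatis mutandis.

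Finally I would identify the multiplicative identity and the zero divisors. Writing $e_1 = (s\cdot s^{-1}_{\bmod p}, 0)$ — that is, the pair $(g_1(1),0)$ where $g_1(1)=-\alpha s$ — and $e_2 = (0, g(1)) = (0,\beta p)$, the element $\mathbf{1} = e_1 + e_2 = (-\alpha s,\beta p)$ serves as a two-sided multiplicative identity, since in the first coordinate $-\alpha s$ is the identity of the field $s\mathbb{F}_p$ and in the second $\beta p$ is the identity of $p\mathbb{F}_s$ (Theorems~\ref{TheoremFour:sFp_field} and \ref{TheoremFour:pFs_field}). For the zero divisors: $e_1 \neq (0,0)$ and $e_2 \neq (0,0)$, yet $e_1\cdot e_2 = (0,0)$ directly from the multiplication rule, since the first coordinate of $e_2$ and the second coordinate of $e_1$ both vanish; this proves the ring has divisors of zero. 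I expect the only mildly delicate point to be bookkeeping the reductions ``$\bmod p$'' versus ``$\bmod s$'' in the two coordinates consistently and confirming that the scalar factors ($s$ in the first slot, $p$ in the second) never interfere — which is precisely the content already established by the field isomorphisms $g_1$ and $g$, so no genuine obstacle remains beyond careful transcription.
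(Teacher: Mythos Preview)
Your proposal is correct and follows essentially the same route as the paper: verify the ring axioms componentwise (additive group with identity $(0,0)$, closure of multiplication, unity $(-\alpha s,\beta p)$ coming from the field identities of $s\mathbb{F}_p$ and $p\mathbb{F}_s$). In fact you are more thorough than the paper's own proof, which skips distributivity entirely and never explicitly exhibits the zero divisors; your choice of $e_1=(-\alpha s,0)$ and $e_2=(0,\beta p)$ with $e_1\cdot e_2=(0,0)$ fills that gap cleanly.
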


\begin{proof}
Most of the properties of the ring are easy to prove, so we will only show the proof of the most important ones, that are of interest for subsequent parts of this paper. We will start by proving that $(s\mathbb{F}_p\times p\mathbb{F}_s,+)$ is an Abelian group. The zero element is $(0,0)$ and the opposite of $(xs,yp)$ is the element $(-xs,-yp)=((p-x)s,(s-y)p)$.
The multiplication, as was defined above, guarantees that the Cartesian product is closed under multiplication. The unity of $s\mathbb{F}_p\times p\mathbb{F}_s$ is the element $(-\alpha s, \beta p)$, where $-\alpha s$ and $\beta p$ are the unities of $s\mathbb{F}_p$ and $p\mathbb{F}_s$ and they are the fixed points of $\mathbb{Z}_{sp}$. In rings, every non-zero element has an inverse, and in our case we can compute the inverse $(x_2s,y_2p)$ of $(x_1s,y_1p)$ by $(x_1s,y_1p)(x_2s,y_2p)=(-\alpha s,\beta p)$ which gives $x_2=-a(x_1s)^{-1}$ and $y_2=-a(y_1s)^{-1}$. 
\end{proof}
 
In the following theorem we show that the Cartesian product $s\mathbb{F}_p\times p\mathbb{F}_s$ is isomorphic to $\mathbb{Z}_{sp}$, and it will of much importance subsequently in this paper.

\begin{theorem}
\label{Theorem_mainIsomorphism_Zsp}
The rings $\mathbb{Z}_{sp}$ and $s\mathbb{F}_p\times p\mathbb{F}_s$ are isomorphic.
\end{theorem}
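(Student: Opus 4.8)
The plan is to exhibit an explicit ring isomorphism $h : s\mathbb{F}_p \times p\mathbb{F}_s \to \mathbb{Z}_{sp}$ and verify that it respects both operations. The natural candidate, given the structure of Theorems~\ref{TheoremFour:pFs_field} and~\ref{TheoremFour:sFp_field}, is simply addition inside $\mathbb{Z}_{sp}$: define $h(xs, yp) = xs + yp \pmod{sp}$. I would first check that $h$ is a bijection. Injectivity follows from the Chinese Remainder Theorem flavour of the situation: if $x_1s + y_1p \equiv x_2s + y_2p \pmod{sp}$, then reducing modulo $p$ gives $x_1 s \equiv x_2 s \pmod p$, and since $\gcd(s,p)=1$ this forces $x_1 \equiv x_2 \pmod p$; symmetrically $y_1 \equiv y_2 \pmod s$, so the two pairs coincide in $s\mathbb{F}_p \times p\mathbb{F}_s$. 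Since both sets have exactly $sp$ elements, $h$ is automatically bijective.

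Next I would verify that $h$ is additive: $h\big((x_1s,y_1p)+(x_2s,y_2p)\big) = h\big(((x_1+x_2)s, (y_1+y_2)p)\big) = (x_1+x_2)s + (y_1+y_2)p \equiv h(x_1s,y_1p) + h(x_2s,y_2p) \pmod{sp}$, which is immediate from Definition~\ref{Def_CartetianDefinition} and the fact that reduction mod $p$ of the first coordinate (resp. mod $s$ of the second) is absorbed once we add mod $sp$. The multiplicative part is the one requiring care. By Definition~\ref{Def_CartetianDefinition}, the product in the Cartesian ring is $(x_1s,y_1p)\cdot(x_2s,y_2p) = (x_1x_2 s^2 \bmod sp,\; y_1 y_2 p^2 \bmod sp)$, so $h$ applied to it is $x_1 x_2 s^2 + y_1 y_2 p^2 \pmod{sp}$. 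On the other hand, $h(x_1s,y_1p)\cdot h(x_2s,y_2p) = (x_1 s + y_1 p)(x_2 s + y_2 p) = x_1x_2 s^2 + (x_1 y_2 + x_2 y_1) sp + y_1 y_2 p^2 \equiv x_1 x_2 s^2 + y_1 y_2 p^2 \pmod{sp}$, because the cross term carries a factor $sp$. Hence the two agree. I would also note $h(-\alpha s, \beta p) = -\alpha s + \beta p = 1$ by Bézout, so $h$ carries the unity of the Cartesian ring to the unity of $\mathbb{Z}_{sp}$, consistent with Theorem~\ref{Theorem_cartesianRing}.

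The main obstacle, or rather the point most likely to cause confusion, is reconciling the two different "multiplications" floating around: the componentwise field multiplications on $p\mathbb{F}_s$ and $s\mathbb{F}_p$ from Theorems~\ref{TheoremFour:pFs_field}–\ref{TheoremFour:sFp_field} (which involve the factors $\beta p$, $-\alpha s$) versus the product in Definition~\ref{Def_CartetianDefinition} (which uses $s^2$, $p^2$). One must check that these are the same operation; indeed $x_1 x_2 s^2 \equiv x_1 x_2 s \cdot(-\alpha s) \cdot(\text{something})$ type identities, together with the idempotence $(-\alpha s)^2 \equiv -\alpha s$ and $(\beta p)^2 \equiv \beta p \pmod{sp}$ already recorded in the excerpt, make the definition of the Cartesian product coincide with the field structure. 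Once that bookkeeping is done, the verification that $h$ is a ring homomorphism is the routine computation above, and bijectivity plus homomorphism gives the isomorphism.
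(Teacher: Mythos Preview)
Your proposal is correct and follows essentially the same route as the paper: the paper defines $h:\mathbb{Z}_{sp}\to s\mathbb{F}_p\times p\mathbb{F}_s$ by $w=xs+yp\mapsto(xs,yp)$ (the inverse of your map), proves bijectivity by the same divisibility/CRT-style argument, and checks the homomorphism properties by the same expansion in which the $sp$ cross term vanishes. Your final paragraph about reconciling the two multiplications is a reasonable sanity check but not actually needed for the proof, since Definition~\ref{Def_CartetianDefinition} already fixes the product on the Cartesian side and your verification is against that definition directly.
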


\begin{proof}
It is known that for two coprimes $s$ and $p$ there are positive $\alpha$ and $\beta$ such that $1=-\alpha s+\beta p$. Without loss of generality, we assume $s$ to be the prime with the minus sign and $p$ the prime with the positive sign in front of it. Multiplying the equality by $w\in \mathbb{Z}_{sp}$ we get $w=(-\alpha ws + \beta wp)\pmod{sp}=xs + yp$, where $x =-\alpha w \pmod p$ and $y =\beta w \pmod s$. A new function $h:\mathbb{Z}_{sp}\mapsto s\mathbb{F}_p\times p\mathbb{F}_s$ is now defined such that for every $w\in \mathbb{Z}_{sp}$ the image $h(w)$  is equal to $ (xs, yp) \in s\mathbb{F}_p\times p\mathbb{F}_s$.\\
Indeed $h$ is a function since for any two elements $w_1=-x_1s+y_1p$, $w_2=x_2s+y_2p$ of $s\mathbb{F}_p\times p\mathbb{F}_s$, assuming $w_1=w_2$ it holds that $(x_1 - x_2)s=(y_1-y_2)p$. Since $(s,p)=1$ it holds that $p$ divides $x_1-x_2$ and $s$ divides $y_1-y_2$. It is known that $x_1, x_2 \in \mathbb{F}_p$ so, $0<x_1<p$ and $-p<-x_2<0$ and therefore $-p<x_1-x_2<p$. Since, $p$ divides $x_1-x_2$ we conclude that $x_1-x_2=0$ or $x_1=x_2$. Likewise, we can prove that $y_1=y_2$. Therefore, it was proved that $h$ is a function, i.e., for each $w\in\mathbb{Z}_{sp}$ there exists a unique pair $(-xs, yp)\in s\mathbb{F}_p\times p\mathbb{F}_s$ such that $h(w)=(-xs, yp)$ or $w=-xs+yp$.\\
It is clear that for any $(-xs, yp)\in s\mathbb{F}_p\times p\mathbb{F}_s$ the sum $w=-xs+yp\in \mathbb{Z}_{sp}$, which means that function $h$ is onto. Then, we will prove that $h$ is one-to-one. It is easy to see that if $(-x_1s,y_1p)=(-x_2s,y_2p)$ then $-x_1s+y_1p=-x_2s+y_2p$ holds.\\
Based on $h$, we have indeed proved that for any positive integer $w$ less than $sp$ there are unique $x\in\mathbb{F}_p$ and $y\in\mathbb{F}_s$ such that $w=-xs+yp$, and so far there is no other way, than factoring the product $sp$, to find this decomposition.\\
We also need to make it clear that $h$ respects both operations ($+$, $.$), from one ring to the other. For addition, this holds because $h((x_1s+y_1p)+(x_2s+y_2p)) = h((x_1+x_2)s+(y_1+y_2)p) = ((x_1+x_2)s,(y_1+y_2)p)=(x_1s,y_1p)+(x_2s,y_2p)=h(x_1s+y_1p)+h(x_2s+y_2p)$. Likewise, for multiplication it holds that $h((x_1s+y_1p)(x_2s+y_2p))= h(x_1x_2s.s+y_1y_2p.p)= (x_1x_2s.s,y_1y_2p.p)= (x_1s,y_1p)(x_2s,y_2p)= h(x_1s+y_1p).h(x_2s+y_2p)$.\\
Finally, we need to prove that $h$ also preserves the unities of the rings. It has already been stated that $1=-\alpha s+\beta p$ and thus $h(1)=(-\alpha s,\beta p)$. Moreover, it is known that $-\alpha s$ and $\beta p$ are the fixed points of $s\mathbb{F}_p$ and $p\mathbb{F}_s$ and at the same time the unities of the corresponding fields (Theorems \ref{TheoremFour:pFs_field}, \ref{TheoremFour:sFp_field}).
\end{proof}

A consequence of the above theorems is the next corollary.
\begin{corollary}
\label{corollary_AllRingsIsomorphism}
Rings $\mathbb{Z}_{sp}$, $s\mathbb{F}_p\times p\mathbb{F}_s$ and $\mathbb{F}_p\times \mathbb{F}_s$ are isomorphic.
\end{corollary}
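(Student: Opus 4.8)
The plan is to chain together the isomorphisms already established rather than build a new one from scratch. Theorem~\ref{Theorem_mainIsomorphism_Zsp} gives $\mathbb{Z}_{sp}\cong s\mathbb{F}_p\times p\mathbb{F}_s$ via $h$, so it remains to show $s\mathbb{F}_p\times p\mathbb{F}_s\cong \mathbb{F}_p\times \mathbb{F}_s$ and then invoke transitivity of the isomorphism relation on rings. The second isomorphism should be assembled componentwise from the two field isomorphisms of Theorems~\ref{TheoremFour:sFp_field} and \ref{TheoremFour:pFs_field}: $g_1:\mathbb{F}_p\to s\mathbb{F}_p$ and $g:\mathbb{F}_s\to p\mathbb{F}_s$.

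First I would define the product map $G:\mathbb{F}_p\times\mathbb{F}_s\to s\mathbb{F}_p\times p\mathbb{F}_s$ by $G(a,b)=(g_1(a),g(b))$. Since $g_1$ and $g$ are each bijections, $G$ is a bijection. For additivity and multiplicativity I would argue coordinatewise: the operations on $s\mathbb{F}_p\times p\mathbb{F}_s$ from Definition~\ref{Def_CartetianDefinition} act independently in the two coordinates, and in each coordinate $g_1$ (resp.\ $g$) respects addition and multiplication by Theorems~\ref{TheoremFour:sFp_field} and \ref{TheoremFour:pFs_field}; hence $G$ respects $+$ and $\cdot$. For the unities, $G(1_{\mathbb{F}_p},1_{\mathbb{F}_s})=(g_1(1),g(1))=(-\alpha s,\beta p)$, which is exactly the unity of $s\mathbb{F}_p\times p\mathbb{F}_s$ identified in the proof of Theorem~\ref{Theorem_cartesianRing}. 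So $G$ is a ring isomorphism. Composing, $h^{-1}\circ G:\mathbb{F}_p\times\mathbb{F}_s\to\mathbb{Z}_{sp}$ is a ring isomorphism, and all three rings are pairwise isomorphic.

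I would close by noting the consistency check that $\mathbb{F}_p\times\mathbb{F}_s\cong\mathbb{Z}_{sp}$ is also the classical Chinese Remainder Theorem statement, which reassures us the diagram commutes; one could even remark that $h^{-1}\circ G$ coincides with the usual CRT isomorphism $(a,b)\mapsto a p\beta' + b s\alpha'$ for suitable $\alpha',\beta'$, but this identification is not needed for the corollary.

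The only genuine subtlety — the ``main obstacle'' — is bookkeeping about which structure carries which operations. The sets $s\mathbb{F}_p$ and $p\mathbb{F}_s$ live inside $\mathbb{Z}_{sp}$ as sets, but their field operations are the transported ones from $g_1$ and $g$, \emph{not} the ambient ring operations of $\mathbb{Z}_{sp}$ (a point the authors already flag after Theorem~\ref{TheoremFour:pFs_field}); similarly the product operations in Definition~\ref{Def_CartetianDefinition} are defined to match. One must be careful that the multiplication rule $(x_1s,y_1p)\cdot(x_2s,y_2p)=(x_1x_2s,\,y_1y_2p)$ (after reduction) is precisely the one under which $g_1$ and $g$ are multiplicative, so that $G$ is a homomorphism with respect to exactly the operations named in Theorem~\ref{Theorem_cartesianRing} and Corollary~\ref{corollary_AllRingsIsomorphism}. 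Once that alignment is stated explicitly, the verification is routine and the corollary follows immediately by composing $h$ with $G$.
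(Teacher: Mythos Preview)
Your proposal is correct and matches the paper's approach: the paper offers no explicit proof, presenting the corollary simply as ``a consequence of the above theorems,'' and your argument spells out precisely that consequence by composing the isomorphism $h$ of Theorem~\ref{Theorem_mainIsomorphism_Zsp} with the componentwise product of the field isomorphisms $g_1$ and $g$ from Theorems~\ref{TheoremFour:sFp_field} and~\ref{TheoremFour:pFs_field}. Your careful remark about aligning the transported operations on $s\mathbb{F}_p$ and $p\mathbb{F}_s$ with those in Definition~\ref{Def_CartetianDefinition} is exactly the bookkeeping the paper leaves implicit.
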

The isomorphism between $\mathbb{Z}_{sp}$ and $s\mathbb{F}_p\times p\mathbb{F}_s$ makes clear that every number of $\mathbb{Z}_{sp}$ can be written as a unique sum between a multiple of $s$ and a multiple of $p$ and vice versa \cite{verykios2024structures}. 
In the next sections we will study the relation of the cycles and the trees appearing in $\mathbb{Z}_{sp}$ to the cycles and the trees of the finite fields $s\mathbb{F}_p$ and $p\mathbb{F}_s$.

The quadratic function $f$ of the ring $\mathbb{Z}_{sp}$ (and of its subgroups) combined with the function $h$ of Theorem~\ref{Theorem_mainIsomorphism_Zsp}, is used to represent function $f$ in the isomorphic groups as follows:
$h(f(w))= h(f(xs+yp))=h((xs+yp)^2)=h(x^2s^2+y^2p^2)= (x^2s^2, y^2p^2)= (f(xs),f(yp))=h(f(xs),f(yp))$.

\subsection{\safeTitle{The tree structures of $\mathbb{Z}_{sp}$}{The tree structures of Zsp}}
\label{sec:sec3_1}

In this section we study homomorphisms of the quadratic function itself over finite fields and groups. Let $s=2^kq+1$ be a prime number, $k\geq 1$ and $q$ be odd while $f$ is  the quadratic function. The function $f$ is a homomorphism from $\mathbb{F}_s$ to $\mathbb{F}_s$, from $p\mathbb{F}_s$ to $p\mathbb{F}_s$ and from $\mathbb{Z}_{sp}$ to $\mathbb{Z}_{sp}$.

The set $Ker(f^n\mid_{A})=\{x\mid x^{2^n}=1\}$ is the kernel of the corresponding homomorphism over $A$, where $A$ is any of $\mathbb{F}_s$, $\mathbb{F}_p$, $p\mathbb{F}_s$, $s\mathbb{F}_p$ and $\mathbb{Z}_{sp}$. We use the notation  $\mathbb{K}_s$, $\mathbb{K}_p$, $p\mathbb{K}_s=Ker(f^k\mid_{p\mathbb{F}_s})$, $s\mathbb{K}_p=Ker(f^l\mid_{s\mathbb{F}_p})$ and $\mathbb{K}_{sp}=Ker(f^n\mid_{\mathbb{Z}_{sp}})$ in accordance with the notation used in~\cite{koblitz1994course}. Kernels of quadratic functions over finite fields such as $\mathbb{F}_s$ and $p\mathbb{F}_s$ (i.e., $s=2^kq+1$) have been shown to be binary trees (\cite{rogers1996graph}, \cite{mullen1988cycles}), where the height of the tree is $k$.

Next, we will prove (Theorem~\ref{Theorem_kernelIsomorphism}) that every element of $\mathbb{K}_{sp}$ is derived exclusively from a combination of an element of $s\mathbb{K}_p$ and an element of $p\mathbb{K}_s$. Let $\beta p\mathbb{K}_s =\{\beta px \pmod {sp}\mid x\in \mathbb{K}_s\}=\beta p\cdot\mathbb{K}_s$, which is a permutation of $p\mathbb{K}_s$, and likewise let $-\alpha s\mathbb{K}_p=-\alpha s.\mathbb{K}_p$, which is a permutation of $s\mathbb{K}_p$. Firstly, we will prove that the kernel $\mathbb{K}_{sp}$ of $\mathbb{Z}_{sp}$, is a multiplicative group. 

Recall that the primes $s$ and $p$ can be written as $s=2^nq+1$ and $p = 2^mr+1$, with $q$ and $r$ odd.

Recall also that for a map $f: \mathbb{Z}_{sp}^\ast \to \mathbb{Z}_{sp}^\ast$ defined by $f(x) = x^2 \bmod sp$, the kernel of the iterated map $f^n$ consists of all elements $x \in \mathbb{Z}_{sp}^\ast$ whose $n$-th iterate under squaring is congruent to $1$ modulo $sp$. This set forms a subgroup of the multiplicative group $\mathbb{Z}_{sp}^\ast$ and is denoted by $\mathbb{K}_{sp} = \ker(f^n)$.

\begin{theorem}
\label{Theorem_Ks_MultGroup}
$\mathbb{K}_{sp}$ is a multiplicative group. 
\end{theorem}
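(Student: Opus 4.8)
The plan is to show that $\mathbb{K}_{sp}$ is closed under multiplication, contains the multiplicative identity of $\mathbb{Z}_{sp}^\ast$, and is closed under inversion; associativity and commutativity are inherited from $\mathbb{Z}_{sp}^\ast$. Recall $\mathbb{K}_{sp} = \ker(f^n) = \{x \in \mathbb{Z}_{sp}^\ast \mid x^{2^n} \equiv 1 \pmod{sp}\}$, where $n$ is chosen large enough that $f^n$ stabilizes the kernel (one may take $n = \max(k,l)$ with $s = 2^kq+1$, $p = 2^lr+1$). The key observation is purely group-theoretic: for the endomorphism $f(x) = x^2$ of the abelian group $G = \mathbb{Z}_{sp}^\ast$, the iterate $f^n$ is itself an endomorphism, namely $x \mapsto x^{2^n}$, and $\mathbb{K}_{sp} = \ker(f^n)$ is therefore automatically a subgroup.

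Concretely, first I would note that $1 \in \mathbb{K}_{sp}$ since $1^{2^n} = 1$. Next, for closure, take $x, y \in \mathbb{K}_{sp}$; since $G$ is abelian, $(xy)^{2^n} = x^{2^n} y^{2^n} \equiv 1 \cdot 1 = 1 \pmod{sp}$, so $xy \in \mathbb{K}_{sp}$. For inverses, if $x \in \mathbb{K}_{sp}$ then $x \in \mathbb{Z}_{sp}^\ast$ so $x^{-1}$ exists in $\mathbb{Z}_{sp}$, and $(x^{-1})^{2^n} = (x^{2^n})^{-1} \equiv 1^{-1} = 1 \pmod{sp}$, hence $x^{-1} \in \mathbb{K}_{sp}$. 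This establishes that $\mathbb{K}_{sp}$ is a subgroup of $\mathbb{Z}_{sp}^\ast$.

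If one wants to tie this explicitly to the Cartesian-product picture of Theorem~\ref{Theorem_mainIsomorphism_Zsp}, an alternative route is to invoke the isomorphism $h$ together with the identity $h(f(w)) = (f(xs), f(yp))$ already derived, which gives $h(f^n(w)) = (f^n(xs), f^n(yp))$ by induction. Then $w \in \mathbb{K}_{sp}$ iff both coordinates lie in the respective kernels $s\mathbb{K}_p$ and $p\mathbb{K}_s$, i.e. $h(\mathbb{K}_{sp}) = s\mathbb{K}_p \times p\mathbb{K}_s$; since each factor is a multiplicative group (being the kernel of an iterated squaring endomorphism on a cyclic field group, as noted earlier for $\mathbb{F}_s$ and $\mathbb{F}_p$) and $h$ is a ring isomorphism, $\mathbb{K}_{sp}$ is a multiplicative group as well. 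I expect no real obstacle here; the only point requiring a word of care is fixing the exponent $n$ so that the kernel is genuinely stable, i.e. $\ker(f^n) = \ker(f^{n+1})$, which follows because the $2$-parts of $|\mathbb{F}_s^\ast|$ and $|\mathbb{F}_p^\ast|$ are exactly $2^k$ and $2^l$. I would present the short group-theoretic argument as the main proof and mention the product-decomposition viewpoint as a remark.
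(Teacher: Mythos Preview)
Your proposal is correct and follows essentially the same approach as the paper: verify identity, closure, and inverses directly from the condition $x^{2^n}\equiv 1\pmod{sp}$. The paper's version is marginally more laboured because it works with element-dependent exponents $n_1,n_2$ (treating $\mathbb{K}_{sp}$ as $\bigcup_i \ker f^i$) and then aligns them by raising to the larger one, whereas you exploit the fixed-$n$ formulation $\mathbb{K}_{sp}=\ker(f^n)$ stated just before the theorem and invoke the general fact that the kernel of an endomorphism of an abelian group is a subgroup; this is cleaner but not a genuinely different argument. Your alternative route via $h(\mathbb{K}_{sp})=s\mathbb{K}_p\times p\mathbb{K}_s$ is exactly what the paper proves separately in the subsequent Theorem~\ref{Theorem_kernelIsomorphism}, so keeping it as a remark rather than the main proof matches the paper's organization.
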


\begin{proof}
The kernel of the quadratic function $f$ over $\mathbb{Z}_{sp}$ is $\mathbb{K}_{sp}$ and it is defined as $\mathbb{K}_{sp}=Ker(f^n\mid_{\mathbb{Z}_{sp}})=\{x\mid x^{2^i}\equiv 1 \pmod{sp}, i=0,1,\ldots,n\}$. 
Let $x$ and $y$ be elements of $\mathbb{K}_{sp}$. Then, there exist $n_1$ and $n_2$ such that 
\begin{equation}
\label{eq_x_raisedTo_n_1}
x^{2^{n_1}}\equiv1\pmod {sp}
\end{equation}
and
\begin{equation}
\label{eq_x_raisedTo_n_2}
y^{2^{n_2}}\equiv1\pmod {sp}
\end{equation}
Obviously, this set does not contain any divisor of 0. Wlog we assume that $n_1 \ge n_2$. Raising both members of the congruence \ref{eq_x_raisedTo_n_2} to the power of $2^{n_1-n_2}$ gives
\begin{equation}
\label{eq_x_raisedTo_diff}
y^{2^{n_1}}\equiv1\pmod {sp}.
\end{equation}
The product of the left sides of congruences \ref{eq_x_raisedTo_n_1} and \ref{eq_x_raisedTo_diff} is also congruent to 1 (i.e., powers of $x$, $y$ are not divisors of zero) and is $(xy)^{2^{n_1}}\equiv1\pmod {sp}$, which proves that $\mathbb{K}_{sp}$ is closed under multiplication.

Furthermore, associativity obviously holds and the identity element is the unity (1). Next, we prove the existence of the inverse element. It is well known that all elements of $\mathbb{Z}_{sp}$ except multiples of $s$ and $p$ have an inverse, and it is not a multiple of either $s$ or $p$. Obviously, $\mathbb{K}_{sp}$ does not contain multiples of either $s$ or $p$ since any of their products gives zero. Then, let $x$ be an element of $\mathbb{K}_{sp}\subset\mathbb{Z}_{sp}$ and let $y$ be its inverse element. There are non-negative integers $n_1$ and $n_2$ (again, wlog we assume that $n_1\geq n_2$) such that $x^{2^{n_1}}\equiv1\pmod {sp}$ and $xy\equiv1\pmod {sp}$. Raising both members of the congruence to the power of $max(n_1,n_2)$ we get the following sequence of equivalent congruences $(xy)^{2^{n_1}}\equiv 1$, $x^{2^{n_1}}.y^{2^{n_1}} \equiv 1$, $1^{2^{(n_1-n_2)}}.y^{2^{n_1}}\equiv 1$, $y^{2^{n_1}}\equiv 1$, all of them $\pmod {sp}$. The later proves that $y=x^{{2^{n_1}}-1}$ must be in $\mathbb{K}_{sp}$. Thus, the existence of the inverse element has been proved.
\end{proof}

It is well known that the above defined kernels $s\mathbb{K}_p$ and $p\mathbb{K}_s$ are multiplicative groups. We consider the Cartesian product $s\mathbb{K}_p\times p\mathbb{K}_s$ of the kernels of $s\mathbb{F}_p$ and $p\mathbb{F}_s$. The next theorem proves that it is a multiplicative group.

\begin{theorem}
The set $s\mathbb{K}_p\times p\mathbb{K}_s \subset s\mathbb{F}_p\times p\mathbb{F}_s$ is a multiplicative group.
\end{theorem}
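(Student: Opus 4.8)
The plan is to identify $s\mathbb{K}_p \times p\mathbb{K}_s$ as the direct product of the two multiplicative groups $s\mathbb{K}_p$ and $p\mathbb{K}_s$ inside the ring $(s\mathbb{F}_p\times p\mathbb{F}_s,+,\cdot)$ of Theorem~\ref{Theorem_cartesianRing}, and to check the group axioms directly; this reuses the fact, recalled just above, that $s\mathbb{K}_p$ and $p\mathbb{K}_s$ are themselves multiplicative groups. The observation that trivialises the argument is that the product of Definition~\ref{Def_CartetianDefinition} acts coordinatewise: the first coordinate of $(x_1 s, y_1 p)\cdot(x_2 s, y_2 p)$ is the product of $x_1 s$ and $x_2 s$ taken in the field $s\mathbb{F}_p$ (Theorem~\ref{TheoremFour:sFp_field}), and the second coordinate is the product of $y_1 p$ and $y_2 p$ taken in the field $p\mathbb{F}_s$ (Theorem~\ref{TheoremFour:pFs_field}). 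Hence every group-theoretic property of $s\mathbb{K}_p\times p\mathbb{K}_s$ reduces to the corresponding property of the two factors.

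Explicitly I would verify the following. \emph{Closure.} For $(x_1 s, y_1 p),(x_2 s, y_2 p)\in s\mathbb{K}_p\times p\mathbb{K}_s$ we have $x_1 s, x_2 s\in s\mathbb{K}_p$ and $y_1 p, y_2 p\in p\mathbb{K}_s$; closure of each kernel under its field multiplication then puts both coordinates of the product back into $s\mathbb{K}_p$ and $p\mathbb{K}_s$ respectively. \emph{Associativity} is inherited from the ambient ring $s\mathbb{F}_p\times p\mathbb{F}_s$. \emph{Identity.} The unity of $s\mathbb{F}_p\times p\mathbb{F}_s$ is $(-\alpha s,\beta p)$, and since $s\mathbb{K}_p$ (resp.\ $p\mathbb{K}_s$) is a subgroup of the multiplicative group of the field $s\mathbb{F}_p$ (resp.\ $p\mathbb{F}_s$), it contains that field's unity $-\alpha s$ (resp.\ $\beta p$); thus $(-\alpha s,\beta p)\in s\mathbb{K}_p\times p\mathbb{K}_s$ and is its identity. \emph{Inverses.} Given $(xs,yp)\in s\mathbb{K}_p\times p\mathbb{K}_s$, pick the inverse $x's$ of $xs$ inside $s\mathbb{K}_p$ and the inverse $y'p$ of $yp$ inside $p\mathbb{K}_s$; then $(x's,y'p)\in s\mathbb{K}_p\times p\mathbb{K}_s$ and, by the coordinatewise multiplication, $(xs,yp)\cdot(x's,y'p)=(-\alpha s,\beta p)$, so it is the sought inverse (and it coincides with the ring inverse produced in the proof of Theorem~\ref{Theorem_cartesianRing}).

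The only delicate point — and the closest thing to an obstacle — is to confirm that the multiplication under which $s\mathbb{K}_p$ and $p\mathbb{K}_s$ were declared groups is exactly the restriction of the coordinate multiplications of Definition~\ref{Def_CartetianDefinition}, and not, say, the ordinary multiplication of $\mathbb{Z}_{sp}$ (under which multiples of $s$ or $p$ are zero divisors). This is immediate from the definition of the kernels: $s\mathbb{K}_p = Ker(f^l\mid_{s\mathbb{F}_p})$ is the kernel of $f(x)=x^2$ computed in the field $s\mathbb{F}_p$, and the relation $h(f(w))=(f(xs),f(yp))$ recorded after Theorem~\ref{Theorem_mainIsomorphism_Zsp} shows that $f$ on $s\mathbb{F}_p\times p\mathbb{F}_s$ is precisely the coordinatewise field squaring. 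Once this identification is made, the four verifications above are purely formal, so $s\mathbb{K}_p\times p\mathbb{K}_s$ is a multiplicative group; commutativity also holds since both factors are abelian, although it is not needed for the statement. Alternatively, one could deduce the result from the identification $h(\mathbb{K}_{sp})=s\mathbb{K}_p\times p\mathbb{K}_s$ together with Theorem~\ref{Theorem_Ks_MultGroup}, but I prefer the direct route so as not to depend on the later Theorem~\ref{Theorem_kernelIsomorphism}.
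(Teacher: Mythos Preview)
Your proposal is correct and follows essentially the same approach as the paper: both argue that since $s\mathbb{K}_p$ and $p\mathbb{K}_s$ are multiplicative groups and the Cartesian product multiplication is coordinatewise, closure, identity $(-\alpha s,\beta p)$, and inverses all transfer from the factors to the product. Your write-up is simply more detailed, spelling out associativity and the compatibility of the coordinate multiplications with the field structures, points the paper leaves implicit.
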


\begin{proof}
The sets $s\mathbb{K}p$ and $p\mathbb{K}s$ are closed under multiplication. Hence, it is easy to prove that their Cartesian product is also closed. Their unities form a pair consisting of their fixed points, i.e., $(-\alpha s, \beta p)$, which is the unity of the $s\mathbb{K}_p\times p\mathbb{K}_s$. The inverse of a pair of $s\mathbb{K}_p\times p\mathbb{K}_s$ is the pair of the inverses of $s\mathbb{K}_p$ and $p\mathbb{K}_s$, correspondingly.
\end{proof}

The two previous theorems paved the way for the following theorem that proves several results, such as that $\mathbb{K}_{sp}$ is a quad tree and its height is equal to the maximum of the heights of the corresponding trees of $\mathbb{K}_s$ and $\mathbb{K}_p$.

\begin{theorem}
\label{Theorem_kernelIsomorphism}
The multiplicative group $\mathbb{K}_{sp}$ is isomorphic to the Cartesian product $s\mathbb{K}_p\times p\mathbb{K}_s$.
\end{theorem}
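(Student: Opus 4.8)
The plan is to build the isomorphism directly from the ambient ring isomorphism $h : \mathbb{Z}_{sp} \to s\mathbb{F}_p \times p\mathbb{F}_s$ of Theorem~\ref{Theorem_mainIsomorphism_Zsp}, and then show that $h$ restricts to a bijection between $\mathbb{K}_{sp}$ and $s\mathbb{K}_p \times p\mathbb{K}_s$. Since $h$ is already a ring isomorphism, it is in particular a group isomorphism on the units, and both $\mathbb{K}_{sp}$ (Theorem~\ref{Theorem_Ks_MultGroup}) and $s\mathbb{K}_p \times p\mathbb{K}_s$ (the preceding theorem) are multiplicative subgroups; so it suffices to prove that $h(\mathbb{K}_{sp}) = s\mathbb{K}_p \times p\mathbb{K}_s$ as sets, and the restriction $h|_{\mathbb{K}_{sp}}$ is then automatically a group isomorphism onto its image.

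First I would recall the key identity already displayed in the excerpt: for $w = xs + yp \in \mathbb{Z}_{sp}$ with $h(w) = (xs, yp)$, one has $h(f(w)) = (f(xs), f(yp))$, and by induction $h(f^i(w)) = (f^i(xs), f^i(yp))$ for all $i \ge 0$. The unity of $s\mathbb{F}_p \times p\mathbb{F}_s$ is $(-\alpha s, \beta p)$, which is $h(1)$. Now $w \in \mathbb{K}_{sp}$ means $w^{2^i} \equiv 1 \pmod{sp}$ for $i$ large enough (equivalently $f^i(w) = 1$ for some $i$, since once a power equals $1$ all higher ones do). Applying $h$: $f^i(w) = 1$ iff $(f^i(xs), f^i(yp)) = (-\alpha s, \beta p)$ iff $f^i(xs) = -\alpha s$ in $s\mathbb{F}_p$ and $f^i(yp) = \beta p$ in $p\mathbb{F}_s$ — that is, iff $xs \in s\mathbb{K}_p$ and $yp \in p\mathbb{K}_s$ (since $-\alpha s, \beta p$ are the unities of those fields, and their kernels consist exactly of elements eventually squaring to the respective unity). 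This gives the set equality $h(\mathbb{K}_{sp}) = s\mathbb{K}_p \times p\mathbb{K}_s$, and composing with $h$ yields the isomorphism.

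The main subtlety I anticipate is the matching of exponents, i.e. making precise that "$w^{2^i} \equiv 1$ for some $i$" transfers correctly between the two sides. The heights of the component trees $s\mathbb{K}_p$ and $p\mathbb{K}_s$ are $l$ and $k$ respectively (from $s = 2^kq+1$, $p = 2^lr+1$), so a uniform choice $i = \max(k,l) = n$ works on both coordinates simultaneously: if $xs \in s\mathbb{K}_p$ and $yp \in p\mathbb{K}_s$ then $f^n(xs) = -\alpha s$ and $f^n(yp) = \beta p$, hence $f^n(w) = 1$, so $w \in \mathbb{K}_{sp}$; conversely membership in $\mathbb{K}_{sp}$ forces each coordinate into its kernel. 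I would state this explicitly to conclude, as a corollary, that $\mathbb{K}_{sp}$ is a quad tree whose height equals $\max(k, l)$ — each vertex of $s\mathbb{K}_p \times p\mathbb{K}_s$ has (up to) two preimages under $f$ in each coordinate, hence up to four preimages in the product, and the depth at which the whole product collapses to its root $(-\alpha s, \beta p)$ is the larger of the two component depths. This last structural remark is really just unwinding the product-of-binary-trees picture and should not require further computation beyond what the component results in \cite{rogers1996graph} already supply.
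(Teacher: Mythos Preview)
Your proposal is correct and follows essentially the same route as the paper: restrict the ring isomorphism $h$ of Theorem~\ref{Theorem_mainIsomorphism_Zsp} to $\mathbb{K}_{sp}$, and establish the set equality $h(\mathbb{K}_{sp}) = s\mathbb{K}_p \times p\mathbb{K}_s$ by showing both inclusions via the identity $(xs+yp)^{2^i} = (xs)^{2^i} + (yp)^{2^i}$ together with $1 = -\alpha s + \beta p$. The paper handles the exponent alignment with $\max(\mu,\nu)$ on individual elements where you use the uniform bound $n = \max(k,l)$, but this is a cosmetic difference, and your concluding remarks on the quad-tree structure anticipate exactly what the paper records in Corollary~\ref{corollary_treesResults}.
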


\begin{proof}
Let $s=2^kq+1$ and $p=2^lr+1$ be prime numbers and also let $\alpha$, $\beta$ be the Euclidean coefficients such that $1=-\alpha s+\beta p$. The kernels of $f^n, f(x)=x^2$ for the sets $s\mathbb{F}_p$ and $p\mathbb{F}_s$ are $s\mathbb{K}_p$ and $p\mathbb{K}_s$, respectively. So, $s\mathbb{K}_p=\{x\mid x^{2^i}\equiv -\alpha s\pmod {sp}, i=1,2,\dots,l\}$ and $p\mathbb{K}_s=\{x\mid x^{2^i}\equiv \beta p\pmod {sp}, i=1,2,\dots,k\}$ where $\beta p$ and $-\alpha s$ are the fixed point elements of $\mathbb{Z}_{sp}$. Also, let $h$ be the function defined in Theorem~\ref{Theorem_mainIsomorphism_Zsp}, i.e., $h:\mathbb{K}_{sp}\mapsto s\mathbb{K}_p\times p\mathbb{K}_s$. Next, we prove that function $h$ is one-to-one and onto. For any $w=xs+yp\in \mathbb{Z}_{sp}$ it holds that $h(w)=h(xs+yp)= (hs,yp)\in s\mathbb{K}_p\times p\mathbb{K}_s$. For every $xs\in s\mathbb{K}_p$ $\exists \mu \in [1\dots l]$ such that $(xs)^{2^\mu}\equiv -\alpha s$ and for every $yp\in p\mathbb{K}_s$, $\exists \nu \in [1\dots k]$ such that $(yp)^{2^\nu}\equiv \beta p$. Assuming, wlog that $\mu\geq\nu$, it holds that $w^{2^{max(\mu,\nu)}}\equiv (xs)^{2^{\mu}} + (yp)^{2^{\mu}}\equiv -\alpha s+(\beta p)^{2^{\mu-\nu}}\!\equiv -\alpha s+\beta p \equiv 1 \pmod {sp}$. So, $s\mathbb{K}_p\times p\mathbb{K}_s \subseteq h(\mathbb{K}_{sp})$. Next, we prove that $h(\mathbb{K}_{sp}) \subseteq s\mathbb{K}_p\times p\mathbb{K}_s$. The kernel of $\mathbb{Z}_{sp}$ is $\mathbb{K}_{sp}=\{x\mid x^{2^i}\equiv 1 \pmod {sp}, i=0,1,2,\dots,\nu$, where $i$ is the minimum non negative integer with this property$\}$. For any $w\in \mathbb{K}_{sp}$, an integer $t\in [1\dots \nu]$ exists, such that $w^{2^t}\equiv 1$. Theorem~\ref{Theorem_mainIsomorphism_Zsp} proved that $w$ can be written as a unique combination of $s$ and $p$, i.e., $w=xs+yp$, where $xs\in s\mathbb{Z}_p$ and $yp\in p\mathbb{Z}_s$. This combined with $1=-\alpha s+\beta p$ gives that $(xs)^{2^t}+(yp)^{2^t}=-\alpha s+\beta p$. Thus, it holds that $(xs)^{2^t}=-\alpha s$ and $(yp)^{2^t}=\beta p$. Since $-\alpha s$ and $\beta p$ are the units (fixed points) of $s\mathbb{K}_p$ and $p\mathbb{K}_s$, respectively, it also holds that for every $w\in \mathbb{K}_{sp}$ there are $xs\in s\mathbb{K}_p$ and $yp\in p\mathbb{K}_s$ which gives $h(\mathbb{K}_{sp})\subseteq s\mathbb{K}_p\times p\mathbb{K}_s$. This, combined with the previously proved $s\mathbb{K}_p\times p\mathbb{K}_s \subseteq h(\mathbb{K}_{sp})$ proves that $h(\mathbb{K}_{sp})= s\mathbb{K}_p\times p\mathbb{K}_s$.
\end{proof}

\begin{table}[]
\begin{center}
\begin{tabular}{|l|c|c|c|c|c|}
\hline
                                                      & {$41\mathbb{K}_{29}$}  & \cellcolor[HTML]{C0C0C0}level 2                   & \cellcolor[HTML]{C0C0C0}level 2                    & \cellcolor[HTML]{EFEFEF}level 1                   & level 0                                           \\ \hline
$29\mathbb{K}_{41}$                                                  & $\mathbb{K}_{29\times 41}$ & 41 & 1148 & 492 & 697 \\ \hline
\rowcolor[HTML]{9B9B9B} 
\multicolumn{1}{|c|}{\cellcolor[HTML]{9B9B9B}level 3} & \cellcolor[HTML]{FFFFFF}232 & 273                                             & 191                                               & 724                                              & 929                                              \\ \hline
\rowcolor[HTML]{9B9B9B} 
level 3                                               & \cellcolor[HTML]{FFFFFF}957 & 998                                             & 916                                               & 260                                              & 465                                              \\ \hline
\rowcolor[HTML]{9B9B9B} 
level 3                                               & \cellcolor[HTML]{FFFFFF}290 & 331                                             & 249                                               & 782                                              & 987                                              \\ \hline
\rowcolor[HTML]{9B9B9B} 
level 3                                               & \cellcolor[HTML]{FFFFFF}899 & 940                                             & 858                                               & 202                                              & 407                                              \\ \hline
\rowcolor[HTML]{C0C0C0} 
\multicolumn{1}{|c|}{\cellcolor[HTML]{C0C0C0}level 2} & \cellcolor[HTML]{FFFFFF}319 & 360                                             & 278                                               & 811                                              & 1016                                             \\ \hline
\rowcolor[HTML]{C0C0C0} 
level 2                                               & \cellcolor[HTML]{FFFFFF}870 & 911                                             & 829                                               & 173                                              & 378                                              \\ \hline
\rowcolor[HTML]{EFEFEF} 
level 1                                               & \cellcolor[HTML]{FFFFFF}696 & \cellcolor[HTML]{C0C0C0}737                     & \cellcolor[HTML]{C0C0C0}655                       & 1188                                             & 204                                              \\ \hline
level 0                                               & \cellcolor[HTML]{FFFFFF}493 & \cellcolor[HTML]{C0C0C0}534                     & \cellcolor[HTML]{C0C0C0}452                       & \cellcolor[HTML]{EFEFEF}985                      & 1                                                \\ \hline
\end{tabular}
\end{center}
\caption{In this table the row header and the leftmost column are the elements of the trees ${29}\mathbb{K}_{41}$ and ${41}\mathbb{K}_{29}$, respectively. The inner values of the table are the elements of the quad tree corresponding to $\mathbb{K}_{29\times 41}$. When any two elements $xs$ and $yp$, located at levels $i$ and $j$ of $29\mathbb{K}_{41}$ and $41\mathbb{K}_{29}$ respectively, are added to form $xs+yp$, the resulting element $xs+yp$ is located at level $max\{i,j\}$. The table shows the levels of the resulting quad tree. Note that the intensity of grey in the background of each cell marks the level that the corresponding value belongs to (i.e., there are 16 level 3 values, 12 level 2 values, 3 level 1 values and 1 level 0 value, which is the root element).}
\label{Fig_levelsKsp}
\end{table}

\begin{figure}
\centering
\begin{subfigure}{0.06\textwidth}
    \includegraphics[width=\textwidth]{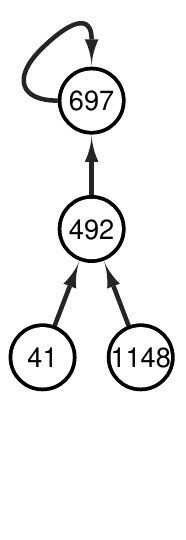}
    \caption{}
    \label{Fig_TreeKsp_NumExample_1}
\end{subfigure}
\hfill
\begin{subfigure}{0.13\textwidth}
    \includegraphics[width=\textwidth]{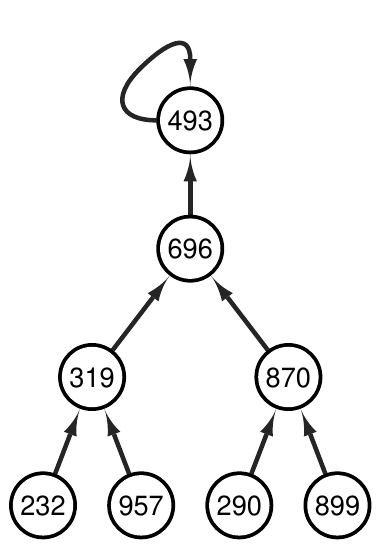}
    \caption{}
    \label{Fig_TreeKsp_NumExample_2}
\end{subfigure}
\hfill
\begin{subfigure}{0.75\textwidth}
    \includegraphics[width=\textwidth]{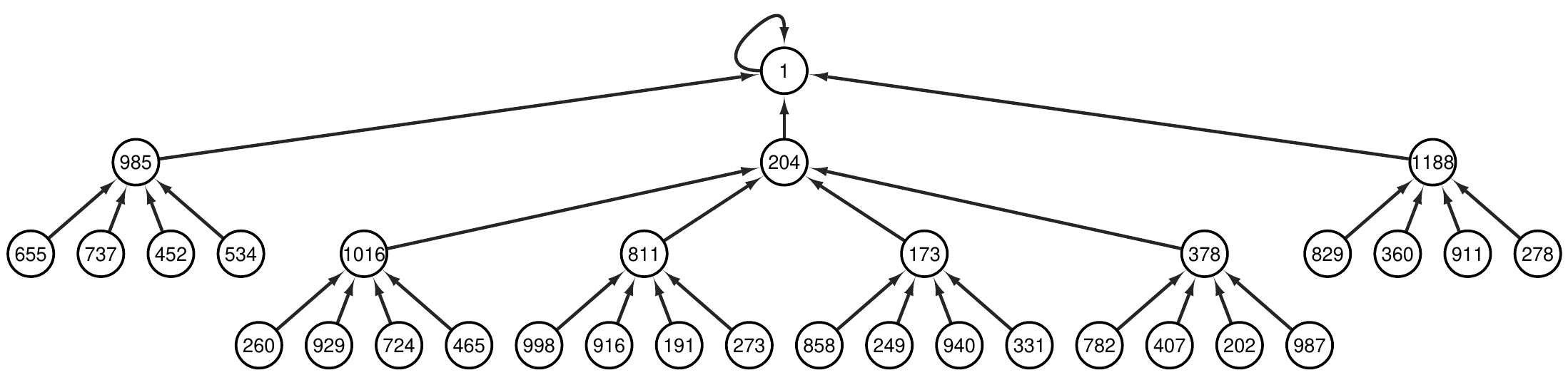}
    \caption{}
    \label{Fig_TreeKsp_NumExample_3}
\end{subfigure}
\caption{Subfigure~\ref{Fig_TreeKsp_NumExample_1} depicts the binary tree corresponding to ${41}\mathbb{K}_{29}$, while subfigure~\ref{Fig_TreeKsp_NumExample_2} depicts the binary tree corresponding to ${29}\mathbb{K}_{41}$. In~\ref{Fig_TreeKsp_NumExample_3}, the quad tree corresponding to $\mathbb{K}_{29\times 41}$, that results from the binary trees in~\ref{Fig_TreeKsp_NumExample_1} and~\ref{Fig_TreeKsp_NumExample_2}, is depicted. Note that these figures constitute the graphical representation of Table~\ref{Fig_levelsKsp}.}
\label{Fig_TreeKsp_NumExample}
\end{figure}
Corollary~\ref{corollary_treesResults} presents several results, directly extracted from the proof of Theorem~\ref{Theorem_kernelIsomorphism}. These results are also depicted visually in Figure~\ref{Fig_levelsKsp} and in Table ~\ref{Fig_levelsKsp}. 

\begin{corollary}
\label{corollary_treesResults}
Let $s=2^kq+1$ and $p=2^lr+1$ be prime numbers, where $k$, $l$ are positive integers and $q$, $r$ are odd numbers. Based on Theorem~\ref{Theorem_kernelIsomorphism}, the following statements hold:
\begin{itemize}
\item
The graph $G=G_f(\mathbb{K}_{sp})$ is a quad tree and its height is $n=max\{k,l\}$.
\item 
The graph $G$ is the tree $\mathcal{T}_n(1)$ and its size is $2^{k+l}$.
\item
Every non-leaf node of the tree of $G$, has exactly four pre-images (i.e., square roots).
\item
If $k=l$, then the tree $\mathcal{T}_n(1)$ of $G$ is a complete quad tree and its size is $2^{2k}=4^k$.
\item
For any $x_is \in s\mathbb{K}_p$ belonging to the $i^{th}$ level of the tree and any $y_jp\in p\mathbb{K}_s$ belonging to the  $j^{th}$ level of the tree, the element $w=x_is+y_jp$ of $G$ is located at the level $max\{i,j\}$ of $G$.
\end{itemize}
\end{corollary}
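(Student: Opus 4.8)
The plan is to derive each of the five bullet points directly from the isomorphism $h \colon \mathbb{K}_{sp} \xrightarrow{\ \sim\ } s\mathbb{K}_p \times p\mathbb{K}_s$ of Theorem~\ref{Theorem_kernelIsomorphism} together with the known fact that $s\mathbb{K}_p$ and $p\mathbb{K}_s$ are binary trees under $f(x)=x^2$ of heights $l$ and $k$ respectively (by the cited results of Rogers and of Mullen--Vaughan, since $p=2^lr+1$ and $s=2^kq+1$). The governing principle throughout is the relation $h(f(w)) = (f(xs), f(yp))$ established right after Theorem~\ref{Theorem_mainIsomorphism_Zsp}: squaring in $\mathbb{K}_{sp}$ corresponds coordinatewise to squaring in the two factor trees. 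So the functional graph $G_f(\mathbb{K}_{sp})$ is graph-isomorphic to the functional graph of the coordinatewise squaring map on $s\mathbb{K}_p \times p\mathbb{K}_s$, and every structural claim reduces to a statement about the product of two binary trees rooted at their respective identities $-\alpha s$ and $\beta p$.

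First I would pin down the level function. Define the level of $w \in \mathbb{K}_{sp}$ as the least $i \ge 0$ with $w^{2^i} = 1$, and similarly for $xs \in s\mathbb{K}_p$ (least $i$ with $(xs)^{2^i} = -\alpha s$) and $yp \in p\mathbb{K}_s$. From $h(f(w)) = (f(xs), f(yp))$ one gets $h(w^{2^i}) = ((xs)^{2^i}, (yp)^{2^i})$, and since $h$ is injective, $w^{2^i}=1$ iff $(xs)^{2^i} = -\alpha s$ and $(yp)^{2^i} = \beta p$ simultaneously; hence the level of $w = xs+yp$ equals $\max\{i,j\}$ where $i,j$ are the levels of $xs$ and $yp$. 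That is exactly the last bullet, and it is the workhorse for the others. The height of $G$ is then $\max$ over all pairs of $\max\{i,j\}$, which is $\max\{k,l\}$ since $s\mathbb{K}_p$ attains level $l$ and $p\mathbb{K}_s$ attains level $k$; set $n = \max\{k,l\}$. For the size, $|\mathbb{K}_{sp}| = |s\mathbb{K}_p| \cdot |p\mathbb{K}_s| = 2^l \cdot 2^k = 2^{k+l}$, because a binary tree of height $m$ arising as the kernel of $f^m$ over a field $\mathbb{F}_q$ with $q-1 = 2^m \cdot(\text{odd})$ has exactly $2^m$ elements (the $2^m$-th roots of unity). The claim that $G = \mathcal{T}_n(1)$, i.e. the tree rooted at unity, is immediate since $1 = h^{-1}(-\alpha s, \beta p)$ is the common fixed point and every element iterates to it.

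For the four-preimages claim: a non-leaf (non-level-$0$) node $w$ with $h(w) = (xs, yp)$ has square roots corresponding, via $h$, to pairs $(\,\sqrt{xs}, \sqrt{yp}\,)$; in a binary tree each non-leaf node has exactly two preimages in each factor, so $w$ has $2 \times 2 = 4$ preimages in $\mathbb{K}_{sp}$ — provided $w$ is non-leaf in \emph{both} factors. Here is the one genuine subtlety worth flagging: if $k \ne l$, say $k < l$, then for an element $w$ at level $i$ with $k < i \le l$, its $p\mathbb{K}_s$-coordinate $yp$ is already at level $0$ (the root $\beta p$, which is its own unique square root in $p\mathbb{K}_s$ — the root of a binary tree of this type has a single preimage, namely itself, since $-1$ is not a square when the odd part... actually one must check: $-1 \in p\mathbb{K}_s$ iff $2 \mid (\text{odd part})$, which is false, so the root has exactly one square root, itself). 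In that regime $w$ would have only $2 \times 1 = 2$ preimages, contradicting the bullet as literally stated. So the honest version is: the four-preimages statement holds for every non-leaf node when $k = l$, and more generally for every node whose level is at most $\min\{k,l\}$; I would prove it in that form and note the complete-quad-tree corollary ($k=l \Rightarrow$ size $4^k$) follows since then every internal node has exactly $4$ children and all leaves sit at the common depth $k$. The main obstacle, then, is not a hard argument but getting the preimage-count bookkeeping exactly right at the roots of the factor trees and being precise about the $k \ne l$ case; everything else is a transparent transport of structure along $h$.
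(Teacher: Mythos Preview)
Your overall strategy---transport everything along the isomorphism $h$ and read off the structure from the product of the two binary kernel trees---is exactly what the paper intends (the paper offers no separate argument beyond saying the corollary is ``directly extracted'' from Theorem~\ref{Theorem_kernelIsomorphism}). Bullets 1, 2, 4, 5 are handled correctly.

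Your treatment of bullet 3, however, contains a genuine error that leads you to doubt a statement that is in fact true. You claim the root $\beta p$ of $p\mathbb{K}_s$ has a \emph{single} preimage (itself) because ``$-1\notin p\mathbb{K}_s$''. This is wrong: $p\mathbb{K}_s$ is (via the isomorphism of Theorem~\ref{TheoremFour:pFs_field}) the group of $2^k$-th roots of unity in $\mathbb{F}_s$, and $-1$ always lies in it once $k\ge 1$. So the root has \emph{two} preimages, $\beta p$ and $-\beta p$, just like every other non-leaf of the binary tree. Consequently, for any $w=xs+yp$ with both coordinates non-leaves in their respective factors, $w$ has exactly $2\times 2=4$ preimages. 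If instead one coordinate is a leaf (level $l$ in $s\mathbb{K}_p$ or level $k$ in $p\mathbb{K}_s$), that coordinate has \emph{zero} preimages, so $w$ has $0$ preimages and is itself a leaf of $G$. In short: every node of $G$ has either $0$ or $4$ preimages, never $2$, and bullet 3 holds exactly as stated. The honest subtlety when $k\ne l$ is not a branching defect but the presence of leaves at levels strictly below $n=\max\{k,l\}$ (those $(xs,yp)$ with the shorter-tree coordinate already at its top level); this is consistent with ``quad tree'' meaning ``every internal node has four children'' rather than ``all leaves at depth $n$''. Drop the hedging paragraph and the claim is clean.
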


\subsection{\safeTitle{Cycles in $\mathbb{Z}_{sp}$}{Cycles in Zsp}}
\label{sec:sec3_2}


Let $f^i$ be the composition of the quadratic function $f$ with itself $i$ times, for each positive integer $i$. This function will be used hereafter several times for the ring $\mathbb{Z}_{sp}$ and for its subgroups. The notation of cycles over finite fields that are defined below, are based on~\cite{panario2019functional} and~\cite{rogers1996graph}.

\begin{definition} \label{def_cycle}
Let $f$ be a quadratic function over a finite field $\mathbb{F}_s$ or the ring $\mathbb{Z}_{sp}$, with $s$ and $p$ prime. For $x$ in the respective domain, let $n$ be the smallest positive integer such that $f^n(x) = x$ (or $f^n(x) \equiv x \pmod{sp}$). The \emph{cycle} of $x$ under $f$ is 
$Cyc_n(x) = \{x, f(x), f^2(x), \ldots, f^{n-1}(x)\}$.
The set of all cycles of length $n$ is denoted by $Cyc_n(\mathbb{F}_s)$ or $Cyc_n(\mathbb{Z}_{sp})$, and the set of all cycles of any length by $Cyc(\mathbb{F}_s)$ or $Cyc(\mathbb{Z}_{sp})$. The \emph{order} of a cycle is its length, so for any $x$ we write $ord(x) = |Cyc_n(x)|$.
\end{definition}

The next theorem proves that two cycles $C_s \in Cyc(p\mathbb{F}_s)$ and $C_p \in Cyc(s\mathbb{F}_p)$, combined in $\mathbb{Z}_{sp}$, using function $h$, described in Theorem~\ref{Theorem_mainIsomorphism_Zsp}, produce a number of cycles of the same size. 

Note that for integers $x$ and $y$, $(x s + y p)^{2^n} \equiv (x s)^{2^n} + (y p)^{2^n} \pmod{sp}$.

\begin{lemma}
\label{Lemma_Cyclic}
Let $w=xs+yp$ be a cyclic element of $\mathbb{Z}_{sp}$. Then, the elements $xs$ and $yp$ are cyclic elements of $s\mathbb{F}_p$ and $p\mathbb{F}_s$, correspondingly.
\end{lemma}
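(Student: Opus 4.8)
The plan is to argue by contraposition through the isomorphism $h$ of Theorem~\ref{Theorem_mainIsomorphism_Zsp}. Recall that $h(w) = (xs, yp)$ and that $h$ intertwines the quadratic map: $h(f(w)) = (f(xs), f(yp))$, hence by induction $h(f^m(w)) = (f^m(xs), f^m(yp))$ for every $m \ge 0$. Since $h$ is a bijection (Theorem~\ref{Theorem_mainIsomorphism_Zsp}), for any $m$ we have $f^m(w) = w$ in $\mathbb{Z}_{sp}$ if and only if $f^m(xs) = xs$ in $s\mathbb{F}_p$ and $f^m(yp) = yp$ in $p\mathbb{F}_s$ simultaneously. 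This is the single structural fact the whole argument rests on.

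First I would fix the meaning of ``cyclic element'': $w$ is cyclic means $w$ lies on a cycle, i.e.\ there exists $n \ge 1$ with $f^n(w) = w$; equivalently $w$ is a periodic point, equivalently $w \in f^m(\mathbb{Z}_{sp})$ for arbitrarily large $m$ (it is not eventually pushed strictly into a tree). So assume $w = xs + yp$ is cyclic with $f^n(w) = w$ for some $n \ge 1$. Applying the intertwining relation above with $m = n$, the equality $h(f^n(w)) = h(w)$ gives $(f^n(xs), f^n(yp)) = (xs, yp)$ componentwise, whence $f^n(xs) = xs$ in $s\mathbb{F}_p$ and $f^n(yp) = yp$ in $p\mathbb{F}_s$. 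By Definition~\ref{def_cycle} this is exactly the statement that $xs$ is a cyclic element of $s\mathbb{F}_p$ and $yp$ is a cyclic element of $p\mathbb{F}_s$.

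There is essentially no obstacle here — the lemma is an immediate consequence of the ring isomorphism commuting with $f$. The only point requiring a line of care is justifying the componentwise passage in $h(f^n(w)) = (f^n(xs), f^n(yp))$: one must note that the quadratic function on the product ring $s\mathbb{F}_p \times p\mathbb{F}_s$ acts coordinatewise, which follows directly from the multiplication rule in Definition~\ref{Def_CartetianDefinition} together with the displayed identity $(xs+yp)^{2^n} \equiv (xs)^{2^n} + (yp)^{2^n} \pmod{sp}$ recorded just before the statement. Once that is in place, injectivity of $h$ transfers the fixed-point condition from $\mathbb{Z}_{sp}$ to each factor, and the result follows. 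I would also remark, as a by-product that will be used in the next theorem, that the converse requires more: given a cyclic $xs$ of period $n_1$ and a cyclic $yp$ of period $n_2$, the combined element $w$ is cyclic of period $\operatorname{lcm}(n_1, n_2)$, but that direction is not needed for this lemma.
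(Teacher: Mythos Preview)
Your proof is correct and is essentially the same argument as the paper's: both use the identity $(xs+yp)^{2^n} \equiv (xs)^{2^n} + (yp)^{2^n} \pmod{sp}$ together with the uniqueness of the decomposition $w = xs + yp$ (i.e.\ injectivity of $h$) to transfer the periodicity condition to each component. The paper simply states this in two lines without explicitly naming $h$, but the substance is identical.
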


\begin{proof}
There exists some integer $n$ for every cyclic element $w$ such that $w^{2^n}=w$. Thus, $(xs)^{2^n}+(yp)^{2^n}=xs+yp$, which gives $(xs)^{2^n}=xs$ and $(yp)^{2^n}=yp$. This means that both $xs$ and $yp$ are cyclic elements of $s\mathbb{F}_p$ and $p\mathbb{F}_s$, correspondingly.
\end{proof}

\begin{theorem}
\label{Theorem_Cycles_Zsp_Correspondence}
Let $s=2^kq+1$ and $p=2^lr+1$ be prime numbers (where $k$, $l$ are positive integers and $q$, $r$ are odd numbers), and let $f$ be a quadratic function over $\mathbb{Z}_{sp}$. Then every cycle $C_0 \in \mathrm{Cyc}_n(\mathbb{Z}_{sp})$ can be uniquely decomposed as $w = xs + yp$, where $xs \in C_s \in \mathrm{Cyc}_{\mu}(s\mathbb{F}_p)$ and $yp \in C_p \in \mathrm{Cyc}_{\nu}(p\mathbb{F}_s)$. Conversely, any pair of cycles $C_s$ and $C_p$ defines $\gcd(\mu,\nu)$ cycles in $\mathbb{Z}_{sp}$ of length $\mathrm{lcm}(\mu,\nu)$. 
\end{theorem}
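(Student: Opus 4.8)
The plan is to exploit the ring isomorphism $h:\mathbb{Z}_{sp}\to s\mathbb{F}_p\times p\mathbb{F}_s$ from Theorem~\ref{Theorem_mainIsomorphism_Zsp} together with the observation, recorded just before the statement, that $h$ intertwines the quadratic map $f$ with the componentwise quadratic map $(f,f)$. Under this identification a cyclic element $w\in\mathbb{Z}_{sp}$ corresponds to a pair $(xs,yp)$, and by Lemma~\ref{Lemma_Cyclic} each coordinate is itself cyclic, so $xs$ lies in some $C_s\in\mathrm{Cyc}_\mu(s\mathbb{F}_p)$ and $yp$ in some $C_p\in\mathrm{Cyc}_\nu(p\mathbb{F}_s)$. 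Uniqueness of the decomposition $w=xs+yp$ is exactly the injectivity of $h$, already established. So the substantive content is the converse count: given a fixed pair $(C_s,C_p)$ of orbits of lengths $\mu$ and $\nu$, how many $f$-orbits does the invariant set $C_s\times C_p$ split into inside $s\mathbb{F}_p\times p\mathbb{F}_s$, and of what length.

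First I would reduce to a purely combinatorial statement about the product of two cyclic permutations. Fix a base point $(a,b)\in C_s\times C_p$. Since $f$ acts on $C_s$ as a cyclic permutation of order $\mu$ and on $C_p$ as a cyclic permutation of order $\nu$, the orbit of $(f^i(a),f^j(b))$ under $(f,f)$ consists of the pairs $(f^{i+t}(a),f^{j+t}(b))$ for $t\ge 0$; the orbit closes precisely when $t\equiv 0\pmod{\mu}$ and $t\equiv 0\pmod{\nu}$, i.e. when $t$ is a multiple of $\mathrm{lcm}(\mu,\nu)$. Hence every orbit inside $C_s\times C_p$ has length exactly $\mathrm{lcm}(\mu,\nu)$. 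Since $|C_s\times C_p|=\mu\nu$ and all orbits have the common length $\mathrm{lcm}(\mu,\nu)$, the number of orbits is
\[
\frac{\mu\nu}{\mathrm{lcm}(\mu,\nu)}=\gcd(\mu,\nu),
\]
which is the claimed count. Transporting this back through $h^{-1}$ gives $\gcd(\mu,\nu)$ cycles of $\mathbb{Z}_{sp}$, each of length $\mathrm{lcm}(\mu,\nu)$, all contained in $h^{-1}(C_s\times C_p)$.

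To make the orbit-length claim airtight I would spell out the elementary lemma that for cyclic permutations $\sigma$ of order $\mu$ and $\tau$ of order $\nu$, the permutation $\sigma\times\tau$ has all of its cycles of length $\mathrm{lcm}(\mu,\nu)$: the order of $(\sigma\times\tau)$ is $\mathrm{lcm}(\mu,\nu)$, and for the cycle through $(a,b)$ the stabilizing exponents are exactly $\{t:\sigma^t(a)=a,\ \tau^t(b)=b\}=\mu\mathbb{Z}\cap\nu\mathbb{Z}=\mathrm{lcm}(\mu,\nu)\mathbb{Z}$, using that $a$ generates its $\sigma$-orbit of full length $\mu$ and likewise for $b$. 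The only mild subtlety — and the place I expect to have to be careful — is making sure that $f$ really does restrict to a \emph{cyclic} permutation on each individual cycle $C_s$ and $C_p$ (true by Definition~\ref{def_cycle}, since on a cycle of length $\mu$ the map $f$ sends $f^i(x)\mapsto f^{i+1}(x)$, a single $\mu$-cycle) and that distinct base points in $C_s\times C_p$ lying in the same orbit are not double-counted; this is handled by the exact orbit-length computation above, so no further accounting is needed. Assembling these pieces yields both halves of the theorem.
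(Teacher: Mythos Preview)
Your proposal is correct and follows essentially the same route as the paper: use the isomorphism $h$ (equivalently the decomposition $w=xs+yp$) so that iterates split componentwise, invoke Lemma~\ref{Lemma_Cyclic} to land the coordinates on cycles $C_s$, $C_p$ of lengths $\mu,\nu$, and then count orbits of $(f,f)$ on $C_s\times C_p$ to get $\gcd(\mu,\nu)$ cycles of length $\mathrm{lcm}(\mu,\nu)$. The paper's own proof is terser --- it asserts $n=\mathrm{lcm}(\mu,\nu)$ and the $\gcd(\mu,\nu)$ split without spelling out the stabilizer computation $\mu\mathbb{Z}\cap\nu\mathbb{Z}=\mathrm{lcm}(\mu,\nu)\mathbb{Z}$ or the cardinality division $\mu\nu/\mathrm{lcm}(\mu,\nu)$ --- so your version simply fills in the combinatorial details the paper leaves implicit.
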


\begin{proof}
Let $w = xs + yp \in C_0$. By the decomposition $\mathbb{Z}_{sp} = s\mathbb{F}_p \times p\mathbb{F}_s$, each iterate of $w$ under $f$ splits as 
\[
w^{2^i} = (xs)^{2^i} + (yp)^{2^i}, \quad i=0,1,\dots,n-1.
\]
Define $\mu$ and $\nu$ as the minimal positive integers such that $(xs)^{2^\mu} \equiv xs \pmod{sp}$ and $(yp)^{2^\nu} \equiv yp \pmod{sp}$. Then $C_s$ and $C_p$ have lengths $\mu$ and $\nu$, respectively, and $n = \mathrm{lcm}(\mu,\nu)$. Conversely, for any pair of cycles $C_s$ and $C_p$, the set 
\[
\{ xs + yp \mid xs \in C_s, \ yp \in C_p \}
\]
splits into $\gcd(\mu,\nu)$ cycles of length $\mathrm{lcm}(\mu,\nu)$.
\end{proof}

\begin{figure}
\centering
\begin{subfigure}{0.3\textwidth}
    \includegraphics[width=\textwidth]{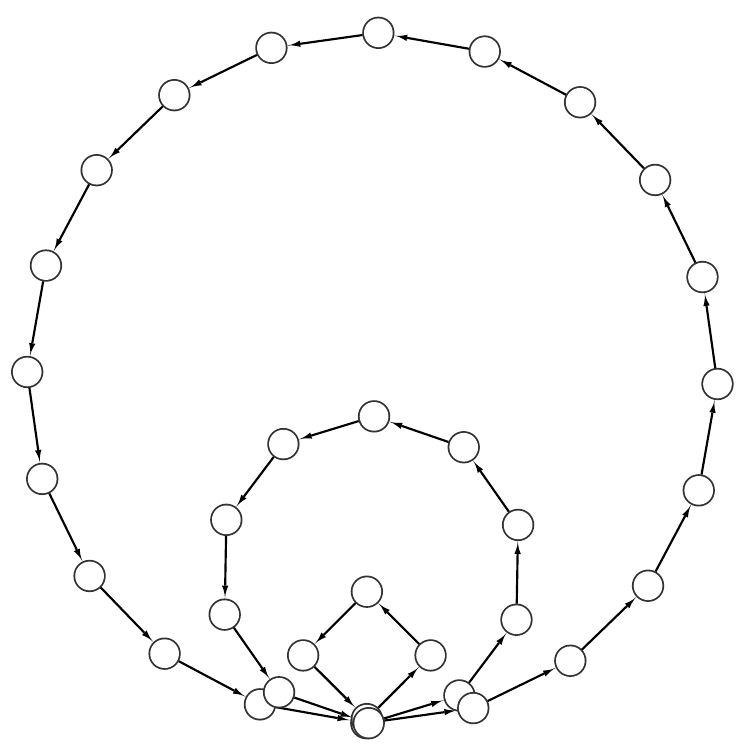}
    \caption{}
    \label{fig:cycles3_in_1_a}
\end{subfigure}
\hfill
\begin{subfigure}{0.3\textwidth}
    \includegraphics[width=\textwidth]{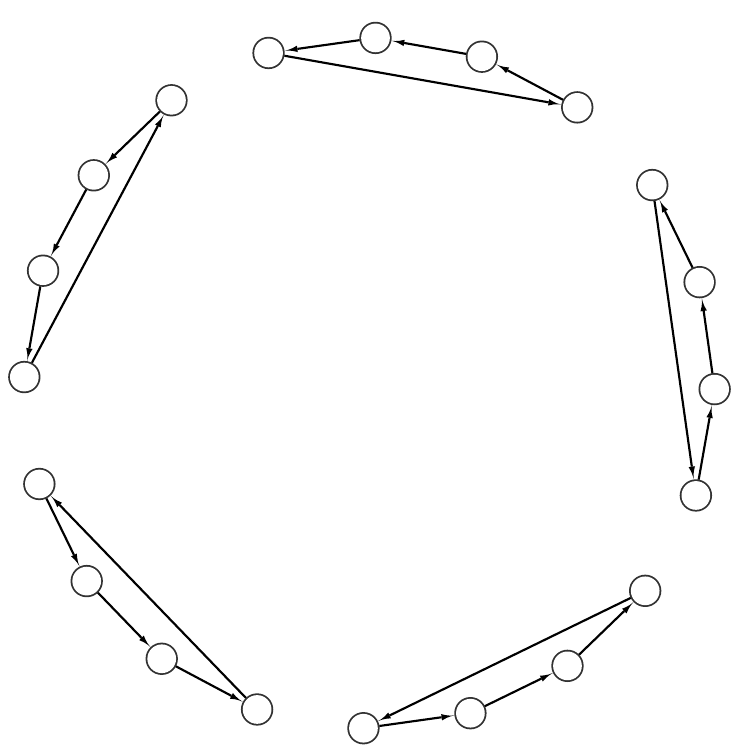}
    \caption{}
    \label{fig:cycles3_in_1_b}
\end{subfigure}
\hfill
\begin{subfigure}{0.3\textwidth}
    \includegraphics[width=\textwidth]{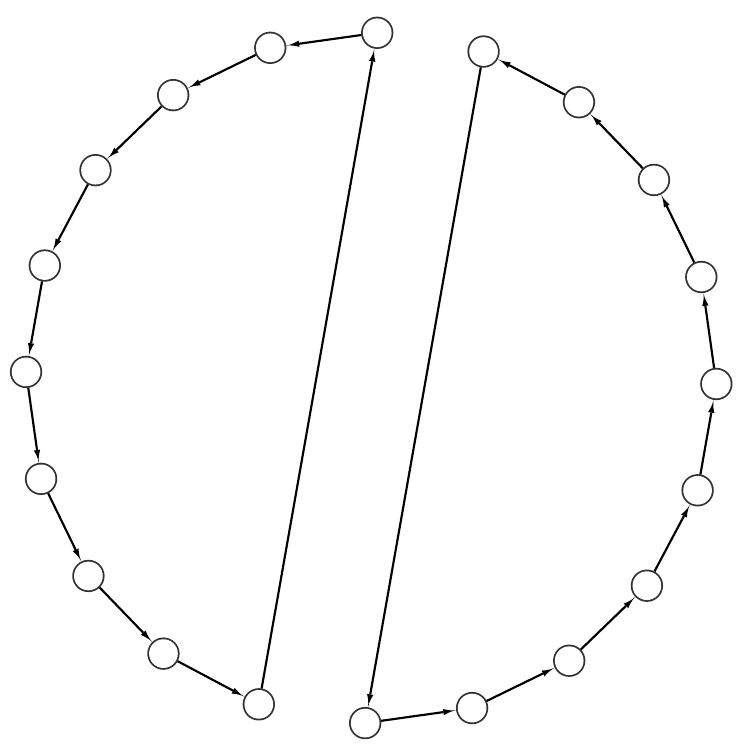}
    \caption{}
    \label{fig:cycles3_in_1_c}
\end{subfigure}
\caption{Subfigure \ref{fig:cycles3_in_1_a} shows the cycles $C_4$ of $23\mathbb{F}_{11}$ and $C_{10}$ of $11\mathbb{F}_{23}$, which based on Theorem~\ref{Theorem_Cycles_Zsp_Correspondence} create the external cycle $C_{20}$ of the ring $\mathbb{Z}_{11 \times 23}$. In \ref{fig:cycles3_in_1_b}, the 5 inner $C_4$ cycles of $C_{20}$ are depicted. In \ref{fig:cycles3_in_1_c}, the 2 inner $C_{10}$ cycles are shown.}
\label{fig:cycles3_in_1}
\end{figure}

\begin{remark}[Inner cycles of $C_0$]  
\label{Corollary_InnerCycles}  
Let $C_0 \subset \mathbb{G}_{sp}$ be a cycle of length $\theta$. Then there exist two inner cycles $C_s \subset s\mathbb{F}_p$ and $C_p \subset p\mathbb{F}_s$ of lengths $\mu$ and $\nu$, respectively, such that  
\[
\theta = \mathrm{lcm}(\mu,\nu).
\]  
Moreover, for every $w \in C_0$, the following congruences hold:  
\[
w^{2^\mu} \equiv w \pmod{s}, \qquad w^{2^\nu} \equiv w \pmod{p}.
\]  
\end{remark}  

\begin{remark}  
The congruences in remark~\ref{Corollary_InnerCycles} reveal structural properties of cycles in $\mathbb{G}_{sp}$. These properties underlie the cyclic attacks, and Rivest~\cite{rivest2001strong} describes techniques to protect cryptographic algorithms against such attacks.  
\end{remark}

\subsection{Arcs and Trees} \label{sec:sec3_3}
So far we have analysed the way that the tree of $\mathbb{K}_{sp}$ results from the trees of the sets $s\mathbb{K}_p$ and $p\mathbb{K}_s$. In section~\ref{sec:sec3_2}, we have shown how the cycles of $\mathbb{Z}_{sp}$ are generated by the cycles of $s\mathbb{F}_p$ and $p\mathbb{F}_s$. Here, we will describe and prove the existence of tree structures that are rooted at the vertices of the cycles. In order to do so, we define the concept of an arc. An arc, initially introduced in our previous work~\cite{verykios2024structures}, is defined to be a part of a cycle in any cyclic finite field. Likewise, we use the definition of trees for finite rings, also from~\cite{verykios2024structures}. 

\begin{definition}
\label{ArcTree}
Let $a$ be an element of the ring $\mathbb{Z}_{sp}$, where $s$ and $p$ are distinct odd primes. Let $\mathcal{A}_n(a)$ be an \emph{arc} consisting of $n+1$ nodes $(a_0, a_1, \ldots, a_n)$ such that
\[
a_{i-1} = f(a_i) = a_i^2, \quad i = 1, \ldots, n.
\]
Then each node can be expressed in terms of $a_n$ as
\[
a_i = f^{n-i}(a_n) = a_n^{2^{\,n-i}}, \quad i = 0,1,\ldots,n.
\]
By construction, each $a_i$ belongs to a cycle, so $\mathcal{A}_n(a)$ is indeed an arc. We assume that $a$ is the first node of the sequence, i.e., $a = a_0 = f^n(a_n) = a_n^{2^n}$.  

Moreover, if $a$ is any node of a cycle in a finite ring, we define the \emph{tree with root $a$} to be the set
\[
\mathcal{T}_n(a) = \{ x \mid f^i(x) = x^{2^i} = a \text{ for some } 0 \le i \le n \}.
\]
\end{definition}

\begin{remark} A \emph{tree} collects all elements whose iterates eventually reach a root node on a cycle, while an \emph{arc} is a segment of a cycle consisting of consecutive nodes, each of which serves as the root of a tree. \end{remark}
In the case of a finite field with a prime number of elements, the tree of a node can be visualized as a binary tree with root at the cyclic element $a$ (level 0), whereas in the case of a finite ring $\mathbb{Z}_{sp}$, the tree resembles a quadtree.  

Let $n = \max\{k,l\}$. Following \cite{verykios2024structures}, we define the set of tree structures of the cyclic elements as 
\[
T(\mathbb{Z}_{sp}) = \{\mathcal{T}_n(a) \mid a \in f^n(\mathbb{Z}_{sp})\}.
\]  
Similarly, let 
\[
A(\mathbb{Z}_{sp}) = \{\mathcal{A}_n(a) \mid a \in f^n(\mathbb{Z}_{sp})\}.
\]  
Then $A(\mathbb{Z}_{sp})$ and $T(\mathbb{Z}_{sp})$ are the sets of arcs and trees of $G(\mathbb{Z}_{sp})$, respectively.  

We have already proved that $\mathbb{K}_{sp}$ is represented graphically by the tree $\mathcal{T}_n(1)$. Henceforth, we denote this tree as 
\[
\mathcal{T}(\mathbb{K}_{sp}) = \mathcal{T}_n(1).
\]

The dynamics of the quadratic map $f(x) = x^2 \bmod sp$ on $\mathbb{Z}_{sp}$
naturally decompose into cycles (periodic orbits) and attached trees of preimages.
An \emph{arc} $\mathcal{A}_n(a)$ corresponds to a finite segment of a cycle
obtained by iterating $f$, while a \emph{tree} $\mathcal{T}_n(b)$
represents the backward structure of preimages rooted at $b$.
To combine these two dynamical components, we introduce
a levelwise multiplicative operation between an arc and a tree.
This operation, denoted by $\bigodot$,
acts by multiplying corresponding levels of the arc and the tree
in $\mathbb{Z}_{sp}$, yielding a new tree whose nodes inherit
the same functional relations under $f$.
The formal definition follows.

\begin{definition}[Multiplication of an Arc and a Tree]
\label{def:ArcByTree}
Let $\mathcal{A}_n(a) = (a_0, a_1, \ldots, a_n)$ be an \emph{arc} in $\mathbb{Z}_{sp}$,
generated by the iteration of the quadratic map $f(x) = x^2 \bmod sp$, such that
\[
a_i = f^{\,n-i}(a_n) = a_n^{2^{\,n-i}}, \quad i = 0,1,\ldots,n,
\]
and $a_0 = f^{\,n}(a_n) = a_n^{2^n} = a$.
Each $a_i$ thus lies on a cycle of $f$.

Let $\mathcal{T}_n(b)$ be a \emph{tree} in $T(\mathbb{Z}_{sp})$, with nodes denoted by $b_{ij}$,
where $i = 0,1,\ldots,n$ indicates the level ($i=0$ is the root, $i=n$ the leaves)
and each node has at most four preimages under $f$.
We assume the standard binary indexing convention:
\[
f(b_{ij}) = b_{i-1,m}, \qquad
b_{ij} = f^{\,n-i}(b_{n\nu}) = b_{n\nu}^{2^{\,n-i}}.
\]
where \(m\) is the unique index of the \emph{parent} of \(b_{ij}\) at level \(i-1\), and \(\nu\) is the unique index of a leaf in the descendant set of \(b_{ij}\).

Define the mapping
\[
\bigodot : A(\mathbb{Z}_{sp}) \times T(\mathbb{Z}_{sp}) \longrightarrow T(\mathbb{Z}_{sp})
\]
by
\[
(\mathcal{A}_n(a), \mathcal{T}_n(b)) \longmapsto \mathcal{T}_n(c),
\]
where the nodes of level $i$ in $\mathcal{T}_n(c)$ are given by
\[
c_{ij} = a_i b_{ij} \bmod sp, \qquad
j = 1 \text{ when } i = 0, \quad j = 1,\ldots,2^{i-1} \text{ when } i = 1,\ldots,n.
\]
\end{definition}

\begin{theorem}
\label{theorem:ArcByTree}
Assume that $\gcd(a_n, sp) = 1$, so that each $a_i$ is a unit in $\mathbb{Z}_{sp}$.
Then the mapping
\[
\bigodot : A(\mathbb{Z}_{sp}) \times T(\mathbb{Z}_{sp}) \to T(\mathbb{Z}_{sp})
\]
defined in Definition~\ref{def:ArcByTree} is well-defined.
Moreover, the root of the resulting tree $\mathcal{T}_n(c)$ satisfies
\[
c = a b.
\]
\end{theorem}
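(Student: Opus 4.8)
The plan is to verify two things: first, that the levelwise products $c_{ij} = a_i b_{ij} \bmod sp$ actually satisfy the functional relations required for $\mathcal{T}_n(c)$ to be a genuine tree in $T(\mathbb{Z}_{sp})$; and second, that its root equals $ab$. The key computation is the compatibility of $\bigodot$ with the quadratic map $f$. Concretely, I would fix a non-root node $b_{ij}$ at level $i \ge 1$ with parent $b_{i-1,m}$, so that $f(b_{ij}) = b_{ij}^2 \equiv b_{i-1,m} \pmod{sp}$, and similarly $f(a_i) = a_i^2 = a_{i-1}$ by the definition of the arc. Then
\[
f(c_{ij}) = c_{ij}^2 = (a_i b_{ij})^2 = a_i^2 b_{ij}^2 \equiv a_{i-1} b_{i-1,m} = c_{i-1,m} \pmod{sp}.
\]
This shows that the parent of $c_{ij}$ in $\mathcal{T}_n(c)$ is exactly $c_{i-1,m}$, i.e. the tree structure (the "who-is-whose-parent" relation) of $\mathcal{T}_n(c)$ is inherited verbatim from that of $\mathcal{T}_n(b)$. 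Iterating, $c_{ij} = c_{ij}^{2^0}$ and $f^{\,n-i}(c_{n\nu}) = a_n^{2^{n-i}} b_{n\nu}^{2^{n-i}} = a_i b_{ij} = c_{ij}$ whenever $b_{n\nu}$ is a leaf descendant of $b_{ij}$, matching the formula $c_{ij} = c_{n\nu}^{2^{n-i}}$ demanded of a tree.

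Next I would confirm that $\mathcal{T}_n(c)$ lies in $T(\mathbb{Z}_{sp})$, i.e. that its root $c_0 = c_{01}$ is genuinely a cyclic element and that the whole object has the shape $\mathcal{T}_n(c) = \{x \mid f^i(x) = c \text{ for some } 0 \le i \le n\}$. For the root: $c_{01} = a_0 b_{01} = a \cdot b$, where $a = a_0$ is a cyclic element of $\mathbb{Z}_{sp}$ (being a node of an arc, which by Definition~\ref{ArcTree} sits on a cycle) and $b = b_{01}$ is likewise cyclic (being the root of $\mathcal{T}_n(b) \in T(\mathbb{Z}_{sp})$); since $\gcd(a_n,sp)=1$ forces every $a_i$ to be a unit, and cyclic elements of $\mathbb{Z}_{sp}$ are units, the product $ab$ is a unit lying on a cycle — here one invokes that units of $\mathbb{Z}_{sp}$ are exactly the non-multiples of $s$ and $p$, and that every unit is pre-periodic with its forward orbit eventually cyclic. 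This simultaneously establishes the claimed identity $c = ab$.

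Finally I would check that the map is well-defined as a function into $T(\mathbb{Z}_{sp})$: the output tree has the correct size ($1$ root plus $2^{i-1}$ nodes at each level $i = 1,\ldots,n$, matching the indexing in Definition~\ref{def:ArcByTree}), each non-leaf node has exactly four preimages (this is inherited from $\mathcal{T}_n(b)$ via Corollary~\ref{corollary_treesResults}, since multiplication by the fixed unit $a_i$ is a bijection on level $i$), and the result does not depend on any choices made in indexing the arc or tree, because the equivalences above are forced by the functional relations alone. I expect the main obstacle to be the bookkeeping around the dual role of the multiplier: the arc contributes a \emph{single} element $a_i$ at each level $i$ (the arc is "thin"), whereas the tree has up to $2^{i-1}$ nodes at level $i$, so one must be careful that multiplying every node of a given tree-level by the \emph{same} arc-element $a_i$ preserves both the branching (four-to-one) structure and the parent map — the first display above is exactly what guarantees this, but making the indexing conventions ($m$ the parent index, $\nu$ a leaf-descendant index) consistent across levels is where the argument needs care rather than cleverness.
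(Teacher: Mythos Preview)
Your proposal is correct and follows essentially the same approach as the paper: both hinge on the computation $f(c_{ij}) = (a_i b_{ij})^2 = a_{i-1} b_{i-1,m} = c_{i-1,m}$ to show the parent--child relation is preserved, then invoke that multiplication by the unit $a_i$ is a bijection on each level to conclude the tree structure is maintained, and finally read off $c_0 = a_0 b_0 = ab$. The only organizational difference is that the paper defines the $c$-nodes starting from the leaves via $c_{n\nu} = a_n b_{n\nu}$ and derives the levelwise formula $c_{ij} = a_i b_{ij}$ by iterating $f$, whereas you take $c_{ij} = a_i b_{ij}$ as the starting point and verify compatibility directly; these are equivalent presentations of the same argument.
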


\begin{proof}
Let $\mathcal{A}_n(a)$ and $\mathcal{T}_n(b)$ be as above, and define
\[
c_{n\nu} = a_n b_{n\nu} \bmod sp
\]
for each leaf $\nu$ at level $n$.
For $i < n$, define
\[
c_{ij} = f^{\,n-i}(c_{n\nu}) = (c_{n\nu})^{2^{\,n-i}},
\]
where $\nu$ is the index of the descendant leaf of $b_{ij}$.
Since $\mathbb{Z}_{sp}$ is commutative, exponentiation distributes over multiplication:
\[
(c_{n\nu})^{2^{\,n-i}}
  = (a_n b_{n\nu})^{2^{\,n-i}}
  = a_n^{2^{\,n-i}} b_{n\nu}^{2^{\,n-i}}
  = a_i b_{ij}.
\]
Hence $c_{ij} = a_i b_{ij}$ for all $i,j$.

To verify structural consistency, let $b_{i-1,m}$ be the parent of $b_{ij}$.
Then
\[
f(c_{ij}) = (a_i b_{ij})^2 = a_i^2 b_{ij}^2 = a_{i-1} b_{i-1,m} = c_{i-1,m},
\]
so the parent–child relation is preserved.
Since each $a_i$ is a unit modulo $sp$, the map
$x \mapsto a_i x$ is bijective on $\mathbb{Z}_{sp}$,
and therefore the number and structure of preimages under $f$ are maintained.
Consequently, $\mathcal{T}_n(c)$ is a valid element of $T(\mathbb{Z}_{sp})$.

Finally, for the root we have $c_0 = a_0 b_0 = ab$, completing the proof.
\end{proof}

Let $a$ be a cyclic element of $\mathbb{G}_{sp}$, and let $\mathcal{A}_n(a)$ be an arc in $\mathbb{G}_{sp}$ of length $n = \max(k,l)$.  
Applying Theorem~\ref{theorem:ArcByTree} yields the following corollary:

\begin{corollary}[Arc–Tree Rooting]
\label{cor:ArcTreeRoot}
For any cyclic element $a \in \mathbb{G}_{sp}$, the quad tree $\mathcal{T}_n(1)$, when multiplied by the arc $\mathcal{A}_n(a)$ via the operation $\bigodot$, produces a tree rooted at $a$:
\[
\mathcal{T}_n(a) = \mathcal{A}_n(a) \bigodot \mathcal{T}_n(1).
\]
\end{corollary}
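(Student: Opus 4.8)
The plan is to obtain the corollary as the specialization $b = 1$ of Theorem~\ref{theorem:ArcByTree}, followed by a short argument identifying the abstract tree it produces with the concrete preimage tree $\mathcal{T}_n(a)$ of Definition~\ref{ArcTree}. First I would check that the hypothesis of Theorem~\ref{theorem:ArcByTree} is met: since $a$ is a cyclic element of $\mathbb{G}_{sp} = \mathbb{Z}_{sp}^{\times}$, it is a unit, and because $a = a_0 = a_n^{2^n}$, the generator $a_n$ of the arc must also be a unit, i.e. $\gcd(a_n, sp) = 1$. Next, I would recall from Corollary~\ref{corollary_treesResults} that $\mathcal{T}_n(1) = \mathcal{T}(\mathbb{K}_{sp})$ is a genuine element of $T(\mathbb{Z}_{sp})$, namely the quad tree of height $n = \max(k,l)$ rooted at the unity, so that the pair $(\mathcal{A}_n(a), \mathcal{T}_n(1))$ lies in the domain of $\bigodot$.

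Applying Theorem~\ref{theorem:ArcByTree} with this tree then gives that $\mathcal{A}_n(a) \bigodot \mathcal{T}_n(1) = \mathcal{T}_n(c)$ is a well-defined element of $T(\mathbb{Z}_{sp})$ whose root is $c = a \cdot 1 = a$. It remains to verify that $\mathcal{T}_n(c)$, with $c = a$, coincides as a node set with $\mathcal{T}_n(a) = \{ x \mid f^i(x) = a \text{ for some } 0 \le i \le n \}$. The key identity is that for every level $i$ the arc node satisfies $a_i^{2^i} = (a_n^{2^{n-i}})^{2^i} = a_n^{2^n} = a$. One direction is then immediate: a generic node $c_{ij} = a_i b_{ij}$ with $b_{ij} \in \mathcal{T}_n(1)$, so $b_{ij}^{2^i} = 1$, satisfies $f^i(c_{ij}) = a_i^{2^i} b_{ij}^{2^i} = a$, hence $c_{ij} \in \mathcal{T}_n(a)$. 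For the converse, given $x \in \mathcal{T}_n(a)$ with $f^i(x) = a$, set $y = x a_i^{-1} \bmod sp$, which is legitimate since $a_i$ is a unit; then $y^{2^i} = x^{2^i}\, (a_i^{2^i})^{-1} = a \cdot a^{-1} = 1$, so $y$ is a level-$i$ node of $\mathcal{T}_n(1)$ and $x = a_i y$ is the corresponding node of $\mathcal{T}_n(c)$. Since $f$ is the same quadratic map throughout, the parent–child relations match — this was already established inside the proof of Theorem~\ref{theorem:ArcByTree} — so the two trees are equal, not merely isomorphic.

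I expect the only real subtlety to be this identification step: the operation $\bigodot$ is defined level by level, multiplying level $i$ by the \emph{varying} scalar $a_i$ rather than by a single constant, so one must resist treating $\mathcal{T}_n(a)$ as a naive dilation $a \cdot \mathcal{T}_n(1)$. The identity $a_i^{2^i} = a$ is precisely what makes the level-wise rescaling compatible with the condition $f^i(x) = a$ defining membership in $\mathcal{T}_n(a)$; once that is in hand, bijectivity of $x \mapsto a_i x$ on each level and the preservation of preimage counts (already in Theorem~\ref{theorem:ArcByTree}) close the argument. I would finish with a remark on uniqueness: because distinct trees in $T(\mathbb{Z}_{sp})$ have distinct roots, the tree rooted at $a$ is unique, which justifies stating the equality $\mathcal{T}_n(a) = \mathcal{A}_n(a) \bigodot \mathcal{T}_n(1)$ rather than merely an isomorphism.
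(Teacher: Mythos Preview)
Your approach is correct and essentially the same as the paper's: both derive the corollary by specializing Theorem~\ref{theorem:ArcByTree} to $b=1$, so that the root of the resulting tree is $c=a\cdot 1=a$. The paper offers no proof beyond the phrase ``Applying Theorem~\ref{theorem:ArcByTree} yields the following corollary,'' implicitly relying on the claim in that theorem that the output lies in $T(\mathbb{Z}_{sp})$ and hence must be \emph{the} tree $\mathcal{T}_n(a)$; your explicit identification step via the identity $a_i^{2^i}=a$ and the bijection $x\mapsto a_i x$ on each level supplies rigor that the paper omits.
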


Consequently, each cyclic element $a$ of $\mathbb{G}_{sp}$ is associated with a tree 
$\mathcal{T}_n(a)$ that is structurally identical to the kernel tree $\mathcal{T}_n(1)$, 
with the only difference being the root node. This shows that the set of trees
$\{\mathcal{T}_n(a) \mid a \text{ cyclic in } \mathbb{G}_{sp}\}$ 
forms a family of isomorphic quad trees rooted at each cyclic element.

\subsection{\safeTitle{Isomorphism of $s\mathbb{K}_p \times p\mathbb{F}_s$}{Isomorphism of sKp x pFs}}
\label{sec:sec3_4}

In this subsection, we identify important properties of the structure of $\mathbb{Z}_{sp}$ 
related to the following sets: 
\begin{itemize}
\item $p\mathbb{F}_s^{e} = \{ y p + e \mid y \in [0, \dots, s-1],\ e \in s\mathbb{K}_p \}$,
\item $s\mathbb{F}_p^{e} = \{ x s + e \mid x \in [0, \dots, p-1],\ e \in p\mathbb{K}_s \}$.
\end{itemize}

The next theorem shows that the set $p\mathbb{F}_s^e$ forms a multiplicative group.

\begin{theorem}
The subset 
\[
p\mathbb{F}_s^{e} = \{ y p + e \mid y \in [0, \dots, s-1],\ e \in s\mathbb{K}_p \} \subset \mathbb{Z}_{sp}
\]
is a multiplicative group.
\end{theorem}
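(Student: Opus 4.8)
The plan is to fix an element $e \in s\mathbb{K}_p$ and work with the set $p\mathbb{F}_s^{e} = \{yp + e \mid y \in [0,\dots,s-1]\}$. The key structural fact I would exploit is that $e$, being in the kernel $s\mathbb{K}_p$ of the quadratic map over $s\mathbb{F}_p$, is eventually idempotent under squaring and in fact $e$ behaves like a ``local identity'' in the $s$-component: since $e \in s\mathbb{K}_p \subseteq s\mathbb{F}_p$ we have $e \equiv 0 \pmod{s}$ (it is a multiple of $s$) and $e \equiv 1 \pmod{p}$ (it is a unit in the field $s\mathbb{F}_p$ whose unity is $-\alpha s$; more precisely $e^{2^i} \equiv -\alpha s$ eventually, and combined with $-\alpha s \equiv 1 \pmod p$ one gets $e \equiv 1 \pmod p$ since kernel elements of the field $\mathbb{F}_p$ that are $2$-power roots of unity reduce appropriately — I would pin this down via Theorem~\ref{TheoremFour:sFp_field} and the identification $s\mathbb{F}_p \cong \mathbb{F}_p$). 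Granting $e \equiv 0 \pmod s$ and $e \equiv 1 \pmod p$, an element $yp + e$ reduces to $yp \pmod s$ and to $e \equiv 1 \pmod p$. So via the Chinese Remainder isomorphism $\mathbb{Z}_{sp} \cong \mathbb{Z}_s \times \mathbb{Z}_p$, the set $p\mathbb{F}_s^{e}$ corresponds exactly to $\mathbb{Z}_s \times \{1\}$, i.e. to all pairs with second coordinate $1$.

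From this identification the group structure is almost immediate. First I would establish closure: for $y_1 p + e$ and $y_2 p + e$ in $p\mathbb{F}_s^{e}$, the product reduces mod $p$ to $1 \cdot 1 = 1$, and reduces mod $s$ to $(y_1 p)(y_2 p) = y_1 y_2 p^2 \equiv y_3 p \pmod s$ for the appropriate $y_3 \in [0,\dots,s-1]$; hence the product is again of the form $y_3 p + e'$ with $e' \equiv 0 \pmod s$, $e' \equiv 1 \pmod p$, and since $e$ is the unique element of $p\mathbb{F}_s \cup \{e\}$-flavored representatives with those residues (uniqueness by CRT), $e' = e$ and the product lies in $p\mathbb{F}_s^{e}$. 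Next, associativity and commutativity are inherited from $\mathbb{Z}_{sp}$. The identity element is $e$ itself: $(yp+e)\cdot e \equiv yp \cdot 0 + \text{(}s\text{-part)} \pmod s$ — more carefully, mod $s$ we get $(yp)(e) + e^2 \equiv 0$ wait, I should instead just say mod $s$ the product $(yp+e)(e)$ reduces to $(yp)(0) = 0$... this is getting delicate, so cleaner: in the CRT picture $p\mathbb{F}_s^{e} \leftrightarrow \mathbb{Z}_s \times \{1\}$, and $\mathbb{Z}_s \times \{1\}$ is patently not a subgroup of $\mathbb{Z}_s^\times \times \mathbb{Z}_p^\times$ unless $0 \in \mathbb{Z}_s$ is excluded — so the honest statement is that $p\mathbb{F}_s^{e}$ is only a \emph{monoid} under multiplication, or one must remove the element corresponding to $0 \in \mathbb{Z}_s$ (namely $y = 0$, giving $e$ itself as the absorbing element) and take inverses in $\mathbb{Z}_s^\times$.

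\textbf{The main obstacle} I anticipate is exactly this point: $\mathbb{Z}_s$ is not a group under multiplication, so $p\mathbb{F}_s^{e} \cong \mathbb{Z}_s \times \{1\}$ cannot literally be a multiplicative group — the element with $y$ such that $yp \equiv 0 \pmod s$, i.e. $y = 0$, which is the element $e$ itself, is an absorbing zero, not an identity, within this set. I expect the theorem is implicitly using a convention (paralleling Lemma~\ref{Theorem_pFs+1_closure_zeros} and Theorem~\ref{Theorem_pFs+1_AbelianGroup}, where the ``$\ast$'' superscript removes the zero element) where one quotients/restricts to a ``$p\mathbb{F}_s^{e,\ast}$'' with the absorbing element deleted; then the set corresponds to $\mathbb{Z}_s^\times \times \{1\} \cong \mathbb{Z}_s^\times$, which \emph{is} a group isomorphic (via the map $g$ of Theorem~\ref{Theorem_pFs+1_isomorphic}, suitably twisted by $e$) to $p\mathbb{F}_s^{\ast}$. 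So the proof I would actually write is: (1) show via CRT that $yp + e \equiv 0 \pmod s$ and $\equiv 1 \pmod p$, so multiplication restricted to these elements mimics multiplication in $\mathbb{Z}_s$; (2) observe $e$ is the absorbing element and, upon its removal, the remaining $s-1$ elements are in bijection with $\mathbb{Z}_s^\times$; (3) transport closure, identity (the element $\leftrightarrow 1 \in \mathbb{Z}_s^\times$), and inverses from $\mathbb{Z}_s^\times$ through this bijection; and (4) conclude the group (isomorphic to $\mathbb{F}_s^\ast$) structure, mirroring the earlier theorems. I would flag that the clean statement needs the ``$\ast$'' restriction and prove that restricted version.
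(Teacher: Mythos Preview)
You have misread the definition of $p\mathbb{F}_s^{e}$. The superscript $e$ is not a fixed parameter: the set is
\[
p\mathbb{F}_s^{e}=\{\,yp+e \mid y\in[0,\dots,s-1],\ e\in s\mathbb{K}_p\,\},
\]
so $e$ ranges over \emph{all} of $s\mathbb{K}_p$, and $|p\mathbb{F}_s^{e}|=s\cdot 2^{l}$, not $s$. (Compare the earlier notation $p\mathbb{F}_s^{+1}$; here the ``$+1$'' is replaced by a variable kernel element.) Your whole CRT analysis is built on the fixed-$e$ reading and therefore addresses the wrong set.

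A second error compounds the first: even for a single $e\in s\mathbb{K}_p$, the claim $e\equiv 1\pmod p$ is false. The kernel $s\mathbb{K}_p$ consists of those $xs$ with $(xs)^{2^{i}}\equiv -\alpha s\pmod{sp}$ for some $i\le l$; under the isomorphism $s\mathbb{F}_p\cong\mathbb{F}_p$ these are exactly the $2^{l}$-th roots of unity, not just the identity. So modulo $p$ the element $e$ ranges over all of $\mathbb{K}_p$ (the $2$-Sylow of $\mathbb{F}_p^{\times}$), and the correct CRT picture of $p\mathbb{F}_s^{e}$ is $\mathbb{F}_s\times\mathbb{K}_p$, not $\mathbb{Z}_s\times\{1\}$. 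This is precisely why closure works in the paper's proof: for $x_1p+e_1s$ and $x_2p+e_2s$ one expands, kills the cross term modulo $sp$, and uses that $s\mathbb{K}_p$ is a group to see $(e_1s)(e_2s)\in s\mathbb{K}_p$. The identity is then $1=\beta p+(-\alpha s)$, with $-\alpha s\in s\mathbb{K}_p$, and inverses are found componentwise.

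Your instinct about zero divisors is, however, sound and survives the correction: taking $y=0$ gives the elements of $s\mathbb{K}_p$ itself, all of which are multiples of $s$ and hence non-units in $\mathbb{Z}_{sp}$. So the literal statement still has the defect you spotted, only now there are $2^{l}$ bad elements rather than one, and the paper's own inverse computation $y\equiv\beta(xp)^{-1}\pmod s$ silently assumes $x\neq 0$. The honest fix is the same one you proposed (remove the non-units), but applied to the correct set.
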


\begin{proof}
We verify the group properties that are not trivial: closure under multiplication, existence of the identity element, and existence of inverses for all non-zero elements.  

Let $x_1 p + e_1 s$ and $x_2 p + e_2 s$ be elements of $p\mathbb{F}_s^{e}$, 
where $x_1, x_2 \in [0, \dots, s-1]$ and $e_1 s, e_2 s \in s\mathbb{K}_p$.
Suppose that for some $i$ and $j$, $(e_1 s)^{2^i} \equiv -\alpha s \pmod{sp}$ and $(e_2 s)^{2^j} \equiv -\alpha s \pmod{sp}$.  
Then
\[
(x_1 p + e_1 s)(x_2 p + e_2 s) = x_1 x_2 p^2 + (e_1 x_2 + e_2 x_1) sp + e_1 e_2 s^2 \equiv x_1 x_2 p^2 + e_1 e_2 s^2 \pmod{sp}.
\]

By Theorem~\ref{TheoremFour:pFs_field}, $s\mathbb{F}_p$ is closed under multiplication, 
so $x_1 x_2 p^2 \in s\mathbb{F}_p$.  
Let $e = e_1 e_2 s \pmod p$; since $s\mathbb{K}_p$ is a group (Theorem~\ref{Theorem_kernelIsomorphism}), we have $e s \in s\mathbb{K}_p$.  
Thus, $p\mathbb{F}_s^e$ is closed under multiplication.

The multiplicative identity is clearly $\beta p - \alpha s = 1 \in p\mathbb{F}_s^e$.

For inverses, let $x p + e_1 s \in p\mathbb{F}_s^e$, and let its inverse be $y p + e_2 s$.  
Then
\[
(x p + e_1 s)(y p + e_2 s) = 1 = \beta p - \alpha s.
\]
Expanding gives $x y p^2 + e_1 e_2 s^2 \equiv \beta p - \alpha s$, 
which leads to the congruences
\[
x y p \equiv \beta \pmod s, \qquad e_1 e_2 s \equiv -\alpha \pmod p.
\]
Hence the inverse exists and can be computed as
\[
y \equiv \beta (x p)^{-1} \pmod s, \qquad e_2 \equiv -\alpha (e_1 s)^{-1} \pmod p.
\]

Therefore, $p\mathbb{F}_s^e$ satisfies all group axioms and is a multiplicative group.
\end{proof}

In a similar manner, it can be shown that 
\[
s\mathbb{F}_p^{e} = \{ x s + e \mid x \in [0, \dots, p-1],\ e \in p\mathbb{K}_s \}
\]
is a multiplicative group, provided that $p\mathbb{K}_s$ is the kernel of $f^n$ over $p\mathbb{F}_s$. \\

The next theorem establishes that $s\mathbb{K}_p \times p\mathbb{F}_s$ is a multiplicative group.

\begin{theorem}
The subset $s\mathbb{K}_p \times p\mathbb{F}_s$ of $s\mathbb{F}_p \times p\mathbb{F}_s$ 
is a multiplicative group under componentwise multiplication:
\[
(a_1,b_1)\cdot(a_2,b_2) = (a_1 a_2, b_1 b_2).
\]
\end{theorem}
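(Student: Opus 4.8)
\emph{Proof proposal.} The plan is to deduce that $s\mathbb{K}_p\times p\mathbb{F}_s$ is a group directly from the structure of its two factors: $s\mathbb{K}_p$ is a multiplicative group (the kernel of $f^{l}$ over $s\mathbb{F}_p$, a subgroup of the cyclic group $s\mathbb{F}_p^{\star}$, depicted as a binary tree of height $l$), and $p\mathbb{F}_s$ carries a field structure isomorphic to $\mathbb{F}_s$ by Theorem~\ref{TheoremFour:pFs_field}, so its nonzero part $p\mathbb{F}_s^{\star}=p\mathbb{F}_s\setminus\{0\}$ is a multiplicative group of order $s-1$. One point should be flagged at the outset: the pair $(e,0)$ with $e\in s\mathbb{K}_p$ and $0=0\cdot p$ the field-zero of $p\mathbb{F}_s$ is absorbing and admits no inverse, so for the group claim one takes the second coordinate in $p\mathbb{F}_s^{\star}$ (equivalently, $p\mathbb{F}_s^{e}$ restricted to terms $yp+e$ with $y\not\equiv 0\pmod{s}$); with this reading the argument reduces to the standard fact that a direct product of two multiplicative groups is again a multiplicative group.

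Concretely I would verify the axioms coordinate by coordinate. Associativity and commutativity are inherited from the ambient ring $s\mathbb{F}_p\times p\mathbb{F}_s$ (Theorem~\ref{Theorem_cartesianRing}). For closure, if $(a_1,b_1),(a_2,b_2)$ satisfy $a_i\in s\mathbb{K}_p$ and $b_i\in p\mathbb{F}_s^{\star}$, then $a_1a_2\in s\mathbb{K}_p$ since $s\mathbb{K}_p$ is closed under the multiplication of $s\mathbb{F}_p$, and $b_1b_2\in p\mathbb{F}_s^{\star}$ since a product of nonzero elements in the field $p\mathbb{F}_s$ is nonzero. The identity is the unity $(-\alpha s,\beta p)$ of the ambient ring: $-\alpha s$ is the fixed point and unity of $s\mathbb{F}_p$ and lies in $s\mathbb{K}_p$ because $(-\alpha s)^{2}\equiv-\alpha s\pmod{sp}$, while $\beta p\in p\mathbb{F}_s^{\star}$, so $(-\alpha s,\beta p)$ genuinely belongs to the set and acts as the identity componentwise. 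For inverses, given $(a,b)$ in the set, $a$ has an inverse in $s\mathbb{K}_p$ because $s\mathbb{K}_p$ is a group (if $a=xs$, this inverse is an explicit power of $a$ inside the $2$-group $s\mathbb{K}_p$), and $b=yp$ has the inverse $\beta p\,y^{-1}\bmod sp$ supplied by Theorem~\ref{TheoremFour:pFs_field}; then $(a^{-1},b^{-1})$ is the two-sided inverse of $(a,b)$.

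The main obstacle here is bookkeeping rather than depth: one must pin down $(-\alpha s,\beta p)$ — not $(1,1)$ or $(s,p)$ — as the unit and check that it actually lies in $s\mathbb{K}_p\times p\mathbb{F}_s^{\star}$, and one must isolate and discard the field-zero in the second coordinate so that every element is invertible. A clean alternative that sidesteps this entirely is to transport the claim through the isomorphism $h$ of Theorem~\ref{Theorem_mainIsomorphism_Zsp}: since $s\mathbb{K}_p\times p\mathbb{F}_s^{\star}=h\big(\{\,yp+e \mid y\in[1,\dots,s-1],\ e\in s\mathbb{K}_p\,\}\big)$, the group property follows at once from the (suitably restricted) theorem already established in this subsection that $p\mathbb{F}_s^{e}$ is a multiplicative group, combined with the fact that $h$ is a ring isomorphism.
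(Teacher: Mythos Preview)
Your main line of argument --- componentwise verification of closure, associativity, the identity $(-\alpha s,\beta p)$, and inverses, by appealing to the group structure of $s\mathbb{K}_p$ and the field structure of $p\mathbb{F}_s$ --- is exactly the paper's approach. You are in fact more careful than the paper on one point: you correctly flag that the field-zero $0\in p\mathbb{F}_s$ must be excluded from the second coordinate for inverses to exist, whereas the paper's proof simply writes down the componentwise inverse congruences without addressing the case $b=0$; your alternative route via transport along $h$ is a legitimate shortcut the paper does not take.
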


\begin{proof}
The set $s\mathbb{K}_p \subset s\mathbb{F}_p$ is a multiplicative group, 
and both $s\mathbb{F}_p$ and $p\mathbb{F}_s$ are finite fields.  
Componentwise multiplication clearly satisfies closure and associativity.

The identity element is 
\[
(1_{s\mathbb{K}_p}, 1_{p\mathbb{F}_s}) = (-\alpha s, \beta p),
\]
formed by the identities of $s\mathbb{K}_p$ and $p\mathbb{F}_s$.  
Indeed, for any $(\alpha x s, \beta y p) \in s\mathbb{K}_p \times p\mathbb{F}_s$, we have
\[
(\alpha x s, \beta y p) \cdot (-\alpha s, \beta p) 
= (\alpha x s \cdot -\alpha s, \beta y p \cdot \beta p) 
= (\alpha x s, \beta y p),
\]
so the identity is correct.

For inverses, let $(\alpha x_1 s, \beta y_1 p)$ have inverse $(\alpha x_2 s, \beta y_2 p)$.  
Then we require
\[
(\alpha x_1 s, \beta y_1 p) \cdot (\alpha x_2 s, \beta y_2 p) = (-\alpha s, \beta p),
\]
which is equivalent to the two congruences:
\[
\alpha x_1 s \cdot \alpha x_2 s \equiv -\alpha s \pmod s, \qquad
\beta y_1 p \cdot \beta y_2 p \equiv \beta p \pmod p.
\]
Hence the inverses can be computed as
\[
\alpha x_2 s \equiv (-\alpha s) \cdot (\alpha x_1 s)^{-1} \pmod s, \qquad
\beta y_2 p \equiv (\beta p) \cdot (\beta y_1 p)^{-1} \pmod p.
\]
Therefore, every element has an inverse, and $s\mathbb{K}_p \times p\mathbb{F}_s$ is a multiplicative group.
\end{proof}

The next theorem establishes that the groups $s\mathbb{K}_p \times p\mathbb{F}_s$ and $p\mathbb{F}_s^{e}$ are isomorphic.

\begin{theorem}
The group $s\mathbb{K}_p \times p\mathbb{F}_s$ is isomorphic to $p\mathbb{F}_s^e$.
\end{theorem}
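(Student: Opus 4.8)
The plan is to obtain the isomorphism essentially for free by restricting the ambient ring isomorphism $h : \mathbb{Z}_{sp} \to s\mathbb{F}_p \times p\mathbb{F}_s$ of Theorem~\ref{Theorem_mainIsomorphism_Zsp}. Recall that $h$ sends each $w \in \mathbb{Z}_{sp}$ to the pair $(xs, yp)$ determined by the unique decomposition $w = xs + yp$ with $xs \in s\mathbb{F}_p$ a multiple of $s$ in $[0, sp)$ and $yp \in p\mathbb{F}_s$ a multiple of $p$ in $[0, sp)$, and that $h$ preserves both ring operations as well as the unit. Consequently the whole argument splits into two checks: that $h$ maps $p\mathbb{F}_s^e$ bijectively onto $s\mathbb{K}_p \times p\mathbb{F}_s$, and that this restriction is a group homomorphism for the relevant multiplications.

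First I would identify the image. Let $yp + e$ be an arbitrary element of $p\mathbb{F}_s^e$, with $y \in [0, \dots, s-1]$ and $e \in s\mathbb{K}_p$. Since $s\mathbb{K}_p \subseteq s\mathbb{F}_p$, the term $e$ is already a multiple of $s$ lying in $[0, sp)$, while $yp$ is a multiple of $p$ lying in $[0, sp)$; hence, after reducing $yp + e$ modulo $sp$ (which does not alter the element of $\mathbb{Z}_{sp}$), the pair $(e, yp)$ is exactly the $(s,p)$-decomposition of $yp + e$, so $h(yp + e) = (e, yp) \in s\mathbb{K}_p \times p\mathbb{F}_s$. Conversely, any $(xs, yp) \in s\mathbb{K}_p \times p\mathbb{F}_s$ equals $h(xs + yp)$, and $xs + yp \in p\mathbb{F}_s^e$ upon setting $e := xs$. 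Hence $p\mathbb{F}_s^e = h^{-1}(s\mathbb{K}_p \times p\mathbb{F}_s)$, and injectivity of $h$ on all of $\mathbb{Z}_{sp}$ yields that $h$ restricts to a bijection between the two sets.

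Next I would verify that this restriction is a group morphism. The multiplication on $p\mathbb{F}_s^e$ is the one inherited from $\mathbb{Z}_{sp}$, and the multiplication on $s\mathbb{K}_p \times p\mathbb{F}_s$ is the componentwise operation inherited from the ring $s\mathbb{F}_p \times p\mathbb{F}_s$ of Definition~\ref{Def_CartetianDefinition}; because $h$ is a ring isomorphism it carries the former to the latter, and it sends the unit $1 = \beta p - \alpha s$ of $p\mathbb{F}_s^e$ to the unit $(-\alpha s, \beta p)$ of $s\mathbb{K}_p \times p\mathbb{F}_s$. Combining this with the bijection from the previous step shows that $h|_{p\mathbb{F}_s^e}$ is the desired group isomorphism. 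The only spot that demands a little care is the image computation: one must be certain that, upon forming $yp + e$ with $e \in s\mathbb{K}_p$ and $yp \in p\mathbb{F}_s$, the reduction modulo $sp$ does not mix the two coordinates, i.e. that $e$ remains the $s\mathbb{F}_p$-component and $yp$ the $p\mathbb{F}_s$-component. This is immediate from the uniqueness of the $(s,p)$-decomposition (Theorem~\ref{Theorem_mainIsomorphism_Zsp}) together with the fact that $e$ and $yp$ already lie in $s\mathbb{F}_p$ and $p\mathbb{F}_s$ respectively; the rest is routine.
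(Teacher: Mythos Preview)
Your proposal is correct and follows essentially the same route as the paper: restrict the ring isomorphism $h$ of Theorem~\ref{Theorem_mainIsomorphism_Zsp} to $p\mathbb{F}_s^e$, identify its image as $s\mathbb{K}_p \times p\mathbb{F}_s$ via the unique $(s,p)$-decomposition, and invoke the already-established multiplicativity of $h$. Your version is, if anything, a bit more explicit about why the decomposition of $yp+e$ does not mix coordinates after reduction modulo $sp$, but the argument is otherwise identical.
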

 
\begin{proof}
In Theorems~\ref{Theorem_mainIsomorphism_Zsp} and \ref{Theorem_kernelIsomorphism}, we defined a function
\[
h : \mathbb{Z}_{sp} \longrightarrow s\mathbb{F}_p \times p\mathbb{F}_s.
\]
Here, we restrict $h$ to the domain $p\mathbb{F}_s^e$:
\[
h : p\mathbb{F}_s^e \longrightarrow s\mathbb{K}_p \times p\mathbb{F}_s.
\]

Every element of $\mathbb{Z}_{sp}$ can be written uniquely as a linear combination of $s$ and $p$, as shown in Theorem~\ref{Theorem_mainIsomorphism_Zsp}.  
For $x s + y p \in p\mathbb{F}_s^e$, we can naturally separate it into two components, $x s$ and $y p$.  
By construction, $y p \in p\mathbb{F}_s$ and $x s \in s\mathbb{K}_p$, since $(x s)^{2^i} = -\alpha s$ for some $i$.  
Thus each element of $p\mathbb{F}_s^e$ corresponds uniquely to an element of $s\mathbb{K}_p \times p\mathbb{F}_s$, proving that $h$ is one-to-one.

To prove that $h$ is onto, observe that every $(x s, y p) \in s\mathbb{K}_p \times p\mathbb{F}_s$ is the image of $x s + y p \in p\mathbb{F}_s^e$.  

Finally, we verify that $h$ preserves multiplication.  
For $(x_1 s + y_1 p), (x_2 s + y_2 p) \in p\mathbb{F}_s^e$, we have
\[
h\big((x_1 s + y_1 p)(x_2 s + y_2 p)\big) = h(x_1 x_2 s^2 + y_1 y_2 p^2) = (x_1 x_2 s^2, y_1 y_2 p^2),
\]
while
\[
h(x_1 s + y_1 p) \cdot h(x_2 s + y_2 p) = (x_1 s, y_1 p) \cdot (x_2 s, y_2 p) = (x_1 x_2 s^2, y_1 y_2 p^2).
\]
Hence $h$ is a multiplicative homomorphism, and thus an isomorphism.
\end{proof}

\begin{remark}
The isomorphism $h : p\mathbb{F}_s^e \longrightarrow s\mathbb{K}_p \times p\mathbb{F}_s$ shows that every element of $p\mathbb{F}_s^e$ can be uniquely decomposed into a component in the kernel $s\mathbb{K}_p$ and a component in the field $p\mathbb{F}_s$.  
In other words, $p\mathbb{F}_s^e$ can be viewed as a structured combination of its kernel and field factors, which explains both its group properties and its compatibility with the product structure in $s\mathbb{K}_p \times p\mathbb{F}_s$.
\end{remark}

Similarly, in a symmetrical manner, it can be shown that $s\mathbb{F}_p \times p\mathbb{K}_s$ and $s\mathbb{F}_p^e$ are multiplicative groups that are isomorphic to each other.

\begin{figure}
\centering
\begin{subfigure}{0.12\textwidth}
    \includegraphics[width=\textwidth]{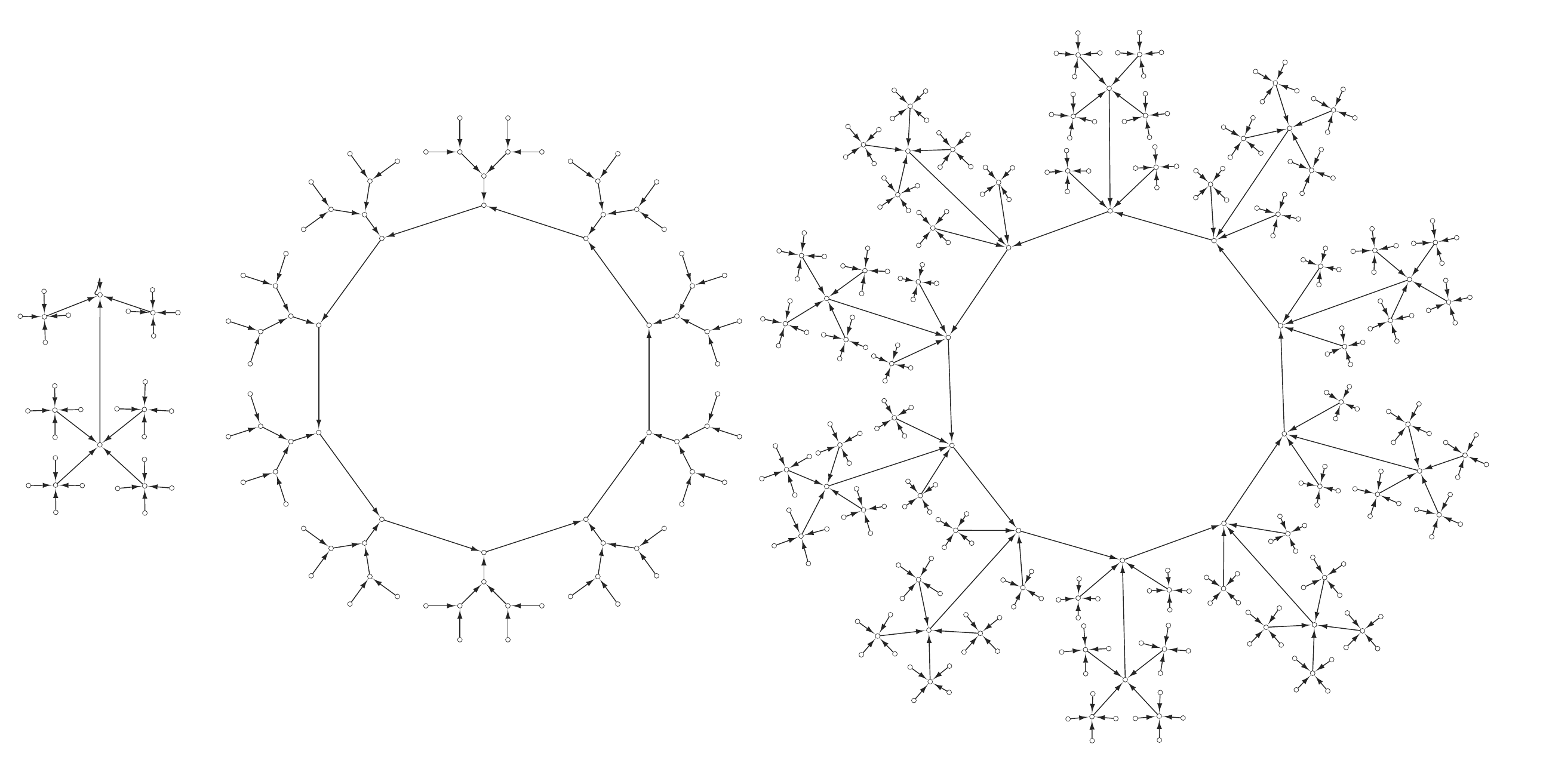}
    \caption{}
    \label{Fig_TreeKsp_1}
\end{subfigure}
\hfill
\begin{subfigure}{0.38\textwidth}
    \includegraphics[width=\textwidth]{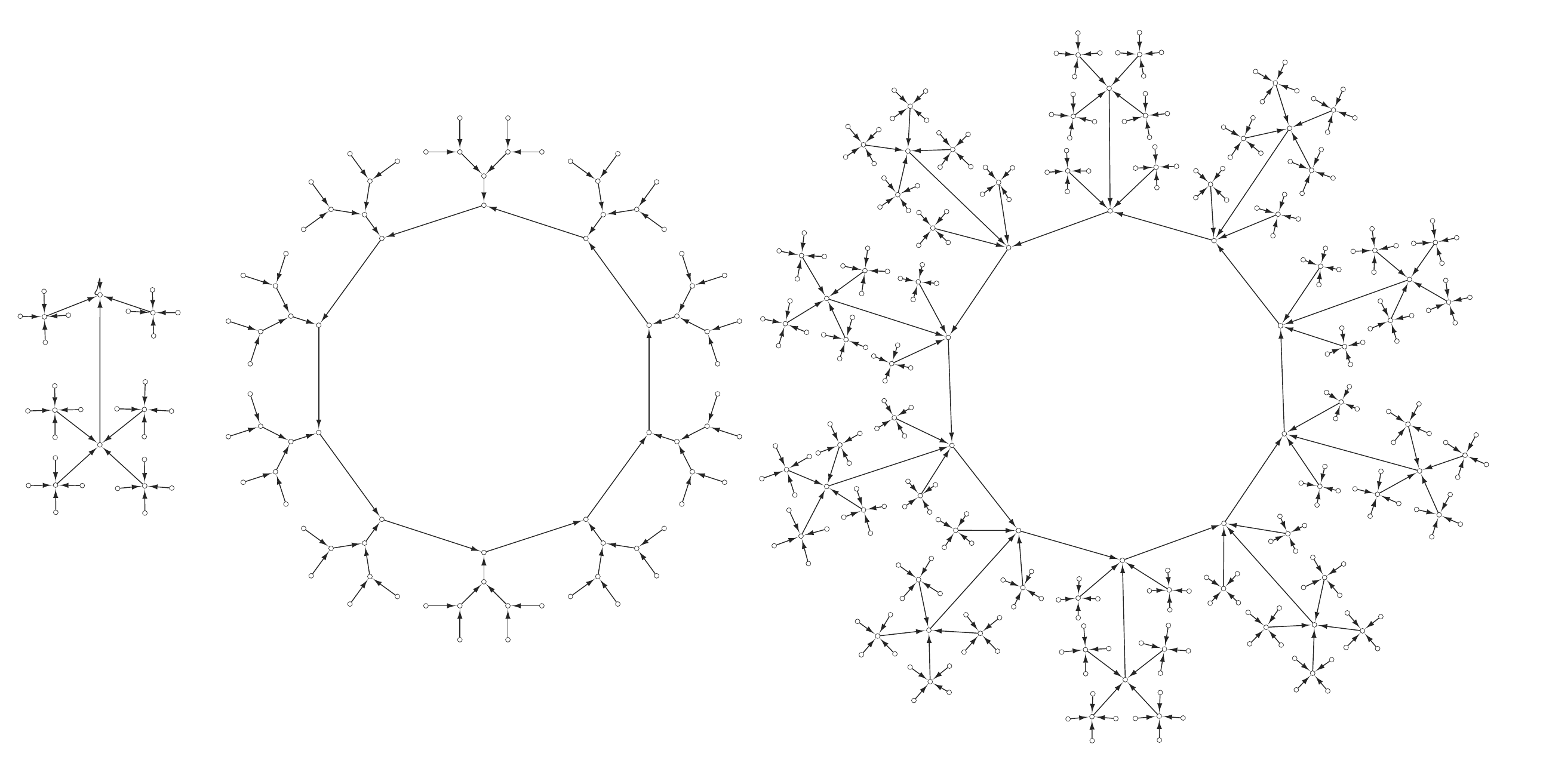}
    \caption{}
    \label{Fig_TreeKsp_2}
\end{subfigure}
\hfill
\begin{subfigure}{0.45\textwidth}
    \includegraphics[width=\textwidth]{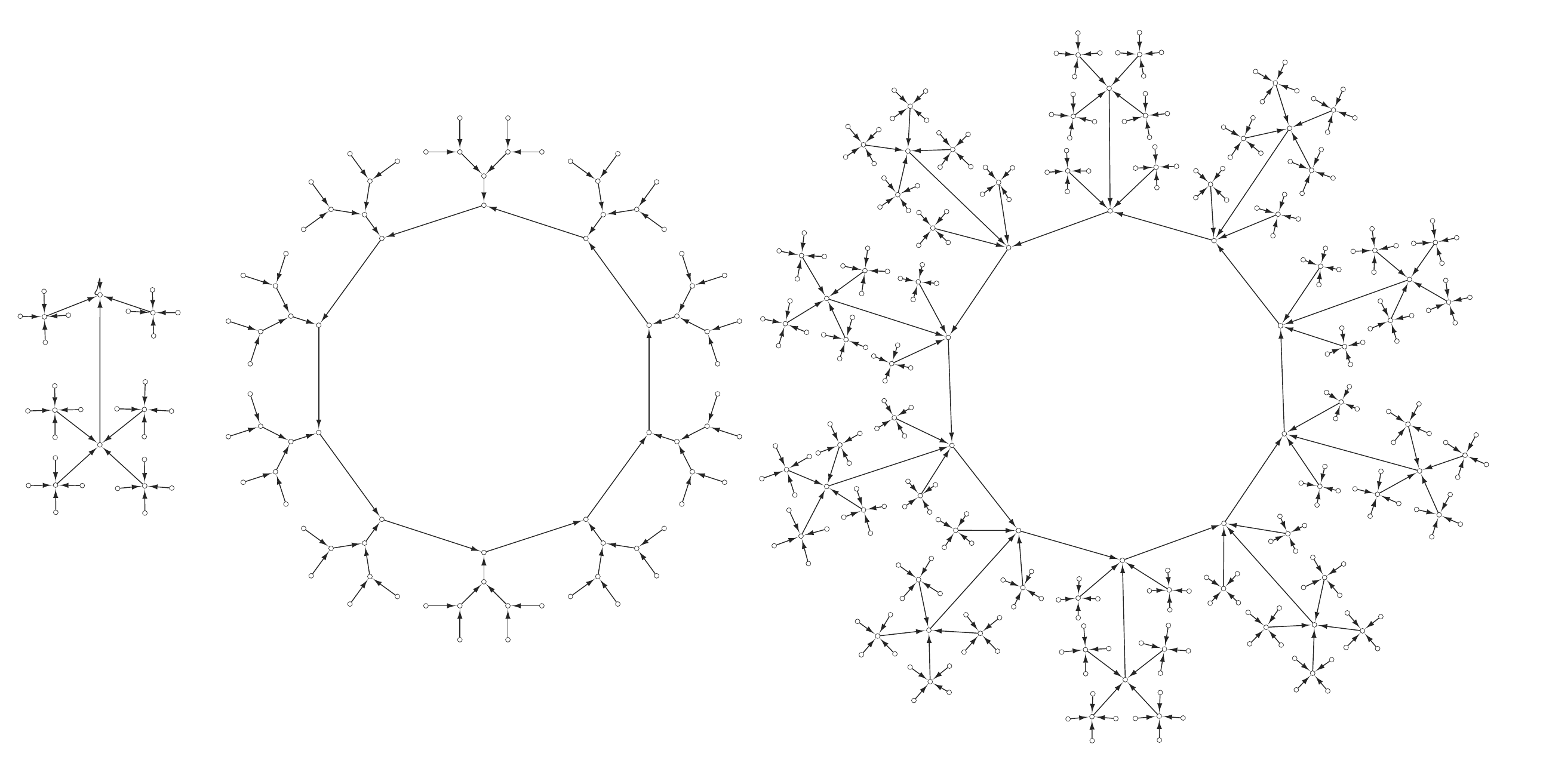}
    \caption{}
    \label{Fig_TreeKsp_3}
\end{subfigure}
\caption{In \ref{Fig_TreeKsp_1}, the quad tree representing the kernel is $\mathbb{F}_{29\times89}$. In \ref{Fig_TreeKsp_2} there is a typical cycle of $29\mathbb{F}_{89}$ which is isomorphic to $\mathbb{F}_{89}$. The trees rooted at its cyclic nodes are the binary trees isomorphic to $\mathbb{K}_{89}$. In \ref{Fig_TreeKsp_3}, the combined set $h(29\mathbb{F}_{89}^e)=(29\mathbb{F}_{89},89\mathbb{K}_{29})$, consisting of the cycle $29\mathbb{F}_{89}$, where each cyclic element is rooted by a tree isomorphic to the kernel tree shown in \ref{Fig_TreeKsp_1}. For each cyclic element $w$ of \ref{Fig_TreeKsp_3} it holds that $w=29t+1$.} 
\label{sFpxpKs}
\end{figure}

The elements of the two sets $s\mathbb{F}_p^e$ and $p\mathbb{F}_s^e$ satisfy the following special property.  
For any $w \in p\mathbb{F}_s^e$, we have
\[
w^{2^k} = y p + 1,
\]
and similarly, for any $z \in s\mathbb{F}_p^e$, 
\[
z^{2^l} = x s + 1.
\]
In other words, each element of these sets, when raised to a suitable power of two (with exponent at most $\max(k,l)$), yields a number of the form $x s + 1$ or $y p + 1$.  

We call all elements of $s\mathbb{F}_p^e \cup p\mathbb{F}_s^e$ \textbf{off-by-one multiples} of either $s$ or $p$.  
Consequently, any number that is not a multiple of $s$ or $p$ and does not belong to $s\mathbb{F}_p^e \cup p\mathbb{F}_s^e$ will have a distance of at least two from any power-of-two multiple of $s$ or $p$, regardless of how many times it is squared.

\section{\safeTitle{The set of cryptography $\mathbb{D}_{sp}$}{The set of cryptography Dsp}}
\label{sec:sec4}

In this section, we define the new set $\mathbb{D}_{sp}$ and describe its properties. This set will be of importance for cryptography, as will be explained below.

\begin{definition} \label{Definition_Delta_set}
The set $\mathbb{D}_{sp}$ is defined as
\begin{equation}
\label{DeltaSet}
\mathbb{D}_{sp} = \mathbb{G}_{sp} - s\mathbb{F}_p^e - p\mathbb{F}_s^e = \mathbb{Z}_{sp} - s\mathbb{F}_p - p\mathbb{F}_s - s\mathbb{F}_p^e - p\mathbb{F}_s^e.
\end{equation}

We also define the sets
\begin{equation}
\label{sFp_Clean}
s\mathbb{F}_p^{\ast\ast} = s\mathbb{F}_p - s\mathbb{K}_p - \{0\}, 
\qquad
p\mathbb{F}_s^{\ast\ast} = p\mathbb{F}_s - p\mathbb{K}_s - \{0\}.
\end{equation}
\end{definition}

We use the cyclic graph notation $(\mathrm{Cyc}_n(\mathbb{D}_{sp}), \mathcal{T}_n(\mathbb{K}_{sp}))$ introduced by Panario and Reis~\cite{panario2019functional}.  
Here, $\mathrm{Cyc}_n(\mathbb{D}_{sp})$ is a cyclic graph of length $n$ constructed from $\mathbb{D}_{sp}$, with a rooted tree $\mathcal{T}_n(\mathbb{K}_{sp})$ of height $n$ attached at each vertex. The structure satisfies:
\begin{itemize}
\item $\mathbb{K}_{sp}$ is the kernel (see Section~\ref{sec:sec3_2}),
\item $n = \max\{k,l\}$ is the height of the tree, and
\item each cyclic node $a \in \mathbb{D}_{sp}$ is the root of a tree $\mathcal{T}_n(a)$, which is isomorphic to $\mathcal{T}_n(1)$ derived from the kernel.
\end{itemize}

This construction results in a hybrid structure where the cycle $\mathrm{Cyc}_n(\mathbb{Z}_{sp})$ represents periodic repetition, and the attached trees represent pre-periodic elements associated with the kernel.  
All cyclic elements are pointed by trees isomorphic to $\mathbb{K}_{sp}$, as described in Theorem~\ref{Theorem_kernelIsomorphism}.

Based on previous sections, the function $h$ maps the sets as follows:
\begin{align*}
h(s\mathbb{F}_p) &= s\mathbb{F}_p \times \{0\}, &
h(p\mathbb{F}_s) &= \{0\} \times p\mathbb{F}_s, \\
h(s\mathbb{F}_p^e) &= s\mathbb{F}_p^* \times p\mathbb{K}_s, &
h(p\mathbb{F}_s^e) &= s\mathbb{K}_p \times p\mathbb{F}_s^*.
\end{align*}

Hence, the image of $\mathbb{D}_{sp}$ under $h$ is
\[
h(\mathbb{D}_{sp}) = (s\mathbb{F}_p - s\mathbb{K}_p - \{0\}) \times (p\mathbb{F}_s - p\mathbb{K}_s - \{0\}) = s\mathbb{F}_p^{\ast\ast} \times p\mathbb{F}_s^{\ast\ast}.
\]

The sets $s\mathbb{F}_p^{\ast\ast}$ and $p\mathbb{F}_s^{\ast\ast}$ consist of elements whose iterates under $f^i$ are neither the unity of their respective fields nor zero.  
Their Cartesian product forms $\mathbb{D}_{sp}$, which is comprised of trees rooted in cyclic elements.

\medskip

Each element $w \in \mathbb{Z}_{sp}$ falls into one of three categories:
\begin{enumerate}
\item \textbf{Multiples of $s$ or $p$:} $w \in s\mathbb{F}_p$ or $w \in p\mathbb{F}_s$.
\item \textbf{Off-by-one multiples:} $w \in s\mathbb{F}_p^e$ or $w \in p\mathbb{F}_s^e$, so that there exists $i \le \max\{k,l\}$ with $w^{2^i} \equiv 1 \pmod{s}$ or $w^{2^i} \equiv 1 \pmod{p}$. The intersection of these sets forms the kernel $\mathbb{K}_{sp}$, satisfying $w^{2^i} \equiv 1 \pmod{sp}$.
\item \textbf{Remaining elements ($\mathbb{D}_{sp}$):} For $w \in \mathbb{D}_{sp}$, all powers $w^{2^i}$ differ from multiples of $s$ or $p$ by at least 2. Each such element either lies on a tree directed toward a cyclic element or is itself cyclic, with the eventual cycle reached at $w^{2^{\max\{k,l\}}}$.
\end{enumerate}

\medskip

Let $s = 2^k q + 1$ and $p = 2^l r + 1$, with $q$ and $r$ odd.  

\begin{itemize}
\item \textbf{Category 1:} $s\mathbb{F}_p$ and $p\mathbb{F}_s$ have $p$ and $s$ elements respectively, with 0 as the only common element, giving $s + p - 1$ elements.
\item \textbf{Category 2:} $|s\mathbb{F}_p^e \cup p\mathbb{F}_s^e| = 2^l(s-1) + 2^k(p-1) - 2^{k+l} = 2^{k+l}(q+r-1)$, after removing overlaps and zero.
\item \textbf{Category 3:} $\mathbb{D}_{sp}$ has
\[
|\mathbb{D}_{sp}| = sp - 2^{k+l}(q+r-1) - s - p + 1 = 2^{k+l}(q-1)(r-1)
\]
elements. The number of cyclic elements in $\mathbb{D}_{sp}$ is $(q-1)(r-1)$, and the remaining elements form trees directed toward these cyclic elements.
\end{itemize}

Following Rivest and Silveman~\cite{rivest2001strong}, define $s^- = s-1 = 2^k q$ and $q^{--}$ as the largest prime factor of $q-1$. Similarly, $p^- = p-1 = 2^l r$ and $r^{--}$ as the largest prime factor of $r-1$.  
Then the maximum cycle length of $G(p\mathbb{F}_s)$ is $q^{--}$, and for $G(s\mathbb{F}_p)$ it is $r^{--}$.  
Within $\mathbb{D}_{sp}$, the maximum cycles have length $\mathrm{lcm}(q^{--}, r^{--})$, containing inner cycles of lengths $q^{--}$ and $r^{--}$.  
See Figure~\ref{fig:cycles3_in_1} for an example with $s = 11$ and $p = 23$.

\section{\safeTitle{Cycles, Trees, and the Cryptographic Set $\mathbb{D}_{sp}$}{Cycles, Trees, and the Cryptographic Set Dsp}}
\label{sec:sec5}

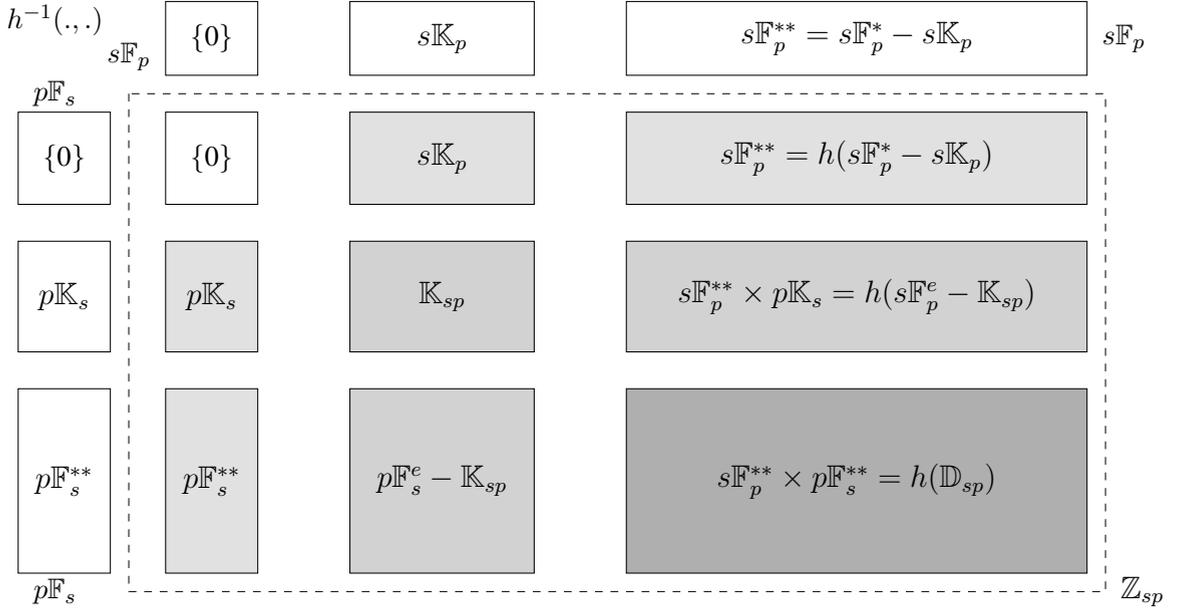
\begin{figure}
\centering
\resizebox{1\textwidth}{!}{%
\begin{tikzpicture}
\tikzstyle{every node}=[font=\normalsize]
\draw  (3,11.25) rectangle  node {\normalsize \{0\}} (4.25,10);
\draw  (5,12.75) rectangle  node {\normalsize \{0\}} (6.25,11.75);
\draw  (3,9.5) rectangle  node {\large $p\mathbb{K}_s$} (4.25,8);
\draw  (7.5,12.75) rectangle  node {\large $s\mathbb{K}_p$} (10,11.75);
\draw  (11.25,12.75) rectangle  node {\large $s\mathbb{F}^{**}_p=s\mathbb{F}^*_p-s\mathbb{K}_p$} (17.5,11.75);
\draw  (3,7.5) rectangle  node {\large $p\mathbb{F}^{**}_s$} (4.25,5);
\draw  (5,11.25) rectangle  node {\normalsize \{0\}} (6.25,10);
\draw [ fill={rgb,255:red,225; green,225; blue,225} ] (7.5,11.25) rectangle  node {\large $s\mathbb{K}_p$} (10,10);
\draw [ fill={rgb,255:red,225; green,225; blue,225} ] (5,9.5) rectangle  node {\large $p\mathbb{K}_s$} (6.25,8);
\draw [ fill={rgb,255:red,210; green,210; blue,210} ] (7.5,9.5) rectangle  node {\large $\mathbb{K}_{sp}$} (10,8);
\draw [ fill={rgb,255:red,225; green,225; blue,225} ] (11.25,11.25) rectangle  node {\large $s\mathbb{F}^{**}_p=h(s\mathbb{F}^*_p-s\mathbb{K}_p)$} (17.5,10);
\draw [ fill={rgb,255:red,225; green,225; blue,225} ] (5,7.5) rectangle  node {\large $p\mathbb{F}^{**}_s$} (6.25,5);
\draw [ fill={rgb,255:red,210; green,210; blue,210} ] (7.5,7.5) rectangle  node {\large $p\mathbb{F}_s^e-\mathbb{K}_{sp}$} (10,5);
\draw [ fill={rgb,255:red,210; green,210; blue,210} ] (11.25,9.5) rectangle  node {\large $s\mathbb{F}^{**}_p\times p\mathbb{K}_s=h(s\mathbb{F}^e_p-\mathbb{K}_{sp})$} (17.5,8);
\draw [ fill={rgb,255:red,175; green,175; blue,175} ] (11.25,7.5) rectangle  node {\large $s\mathbb{F}^{**}_p\times p\mathbb{F}^{**}_s=h(\mathbb{D}_{sp})$} (17.5,5);

\node [font=\normalsize] at (3.5,12.5) {$h^{-1}(.,.)$};
\node [font=\normalsize] at (3.5,11.5) {$p\mathbb{F}_s$};
\node [font=\normalsize] at (4.5,12) {$s\mathbb{F}_p$};
\draw [ line width=0.2pt , dashed] (4.5,11.5) rectangle  (17.75,4.75);

\node [font=\normalsize] at (3.5,4.75) {$p\mathbb{F}_s$};
\node [font=\large] at (18.25,4.75) {$\mathbb{Z}_{sp}$};
\node [font=\normalsize] at (18,12.25) {$s\mathbb{F}_p$};
\end{tikzpicture}
}%
\caption{The three boxes at the top row represent the three subsets of $s\mathbb{F}_p$, while the three boxes at the leftmost column represent the three subsets of $p\mathbb{F}_s$. Each remaining box corresponds to a subset of $\mathbb{Z}_{sp}$ formed by combining the sets of its row and column.}
\label{fig:9_subsets}
\end{figure}

\medskip

In this paper, we proved an isomorphism between the finite fields $\mathbb{F}_s$, $p\mathbb{F}_s$ and $p\mathbb{F}^{+1}_s$ (and similarly $\mathbb{F}_p$, $s\mathbb{F}_p$, $s\mathbb{F}^{+1}_p$).  
We also showed that the Cartesian set $s\mathbb{F}_p \times p\mathbb{F}_s$ forms a ring with zero divisors.  

Next, we defined an isomorphism $h$ between $s\mathbb{F}_p \times p\mathbb{F}_s$ and the ring $\mathbb{Z}_{sp}$, which allows any element of $\mathbb{Z}_{sp}$ to be expressed uniquely as a linear combination of $s$ and $p$.  

Splitting $s\mathbb{F}_p$ into the sets $\{0\}$, $s\mathbb{K}_p$ and $s\mathbb{F}^{**}_p$ reveals nine subsets of $\mathbb{Z}_{sp}$, as depicted in Figure~\ref{fig:9_subsets}. These subsets were analyzed in Sections~\ref{sec:sec3} and~\ref{sec:sec4}.  

The sets $s\mathbb{K}_p$ and $p\mathbb{K}_s$ are multiplicative groups, producing the multiplicative group $\mathbb{K}_{sp}$, which is isomorphic to their Cartesian product and serves as the kernel of $\mathbb{Z}_{sp}$.  

The sets $s\mathbb{F}_p$ and $p\mathbb{F}_s$ contain multiples of $s$ and $p$, which are zero divisors. Since these elements do not have inverses in $\mathbb{Z}_{sp}$, we restrict the domain to allow for the computation of inverses.  

Furthermore, we defined the \emph{off-by-one} sets as the Cartesian products of $s\mathbb{F}^e_p$ and $p\mathbb{F}^e_s$. This is the domain of the Dickson polynomial, since every element must be invertible.  

Finally, we defined $\mathbb{D}_{sp}$, which contains elements of $\mathbb{Z}_{sp}$ that are neither multiples of $s$ or $p$ nor off-by-one. Elements in $\mathbb{D}_{sp}$, when repeatedly squared, always differ from multiples of $s$ or $p$ by at least 2. All these sets are illustrated in Figure~\ref{fig:9_subsets}.  

\medskip

The graph of $\mathbb{D}_{sp}$ is organized into cycles and trees, with cyclic elements forming the roots of their respective trees.  
The number of elements in each cycle of $\mathbb{D}_{sp}$ equals the least common multiple of the lengths of the cycles originating from $\mathbb{F}_s$ and $\mathbb{F}_p$.  

All elements that could be exploited in a cyclic attack~\cite{rivest1978remarks} are members of $\mathbb{D}_{sp}$. Thus, the security of RSA against cyclic attacks depends solely on the largest divisor of $(s-1)/2^k - 1$, where $s$ is the smaller of the two primes $s$ and $p$.  

Quadratic sieve factoring algorithms~\cite{pomerance1984quadratic} attempt to factor the product $sp$ by constructing numbers $x$ and $y$ such that $\gcd((y^2-x^2) \bmod sp, sp)$ yields either $s$ or $p$.  
The numbers $x$ and $y$ are square roots of a cyclic element $a$ and are located at the first level of the tree rooted at $a$.  

Therefore, $\mathbb{D}_{sp}$ is a promising set for further study, as analyzing its cycles and trees may offer insights for improving factorization algorithms.

\section{Conclusion}

In this work, we analyzed the structure of $\mathbb{Z}_{sp}$ by decomposing it into subsets related to $s\mathbb{F}_p$ and $p\mathbb{F}_s$, including their kernels and the off-by-one elements. It is well-known that the Cartesian product $s\mathbb{F}_p \times p\mathbb{F}_s$ forms a ring with zero divisors; we have included the explicit isomorphism $h$ between this ring and $\mathbb{Z}_{sp}$ to make the exposition self-contained.

Using this framework, we defined the set $\mathbb{D}_{sp}$, consisting of elements that are neither multiples of $s$ or $p$ nor off-by-one. We showed that $\mathbb{D}_{sp}$ naturally organizes into cycles and trees, with cyclic elements forming the roots, and we calculated the sizes of these structures explicitly.  

Finally, we highlighted the cryptographic relevance of $\mathbb{D}_{sp}$. Its structure captures elements that are resistant to cyclic attacks, and its trees and cycles could provide a framework for analyzing factorization techniques such as the quadratic sieve. This opens the possibility for further exploration of $\mathbb{D}_{sp}$ in the context of both algebraic number theory and cryptographic applications.

\nocite{*}
\bibliographystyle{fundam}
\bibliography{mk}


\end{document}